\newtheorem{theorem}{Theorem}[section]
\newtheorem{lemma}[theorem]{Lemma}
\newtheorem{claim}[theorem]{Claim}
\newtheorem{corollary}[theorem]{Corollary}
\theoremstyle{definition}
\newtheorem{definition}[theorem]{Definition}
\newcommand{\hedge}{F} 
\newcommand{\hpname}{{{\sc Hypergraph Painting}}\xspace}
\newcommand{\hp}{{{\sc HP}}\xspace}
\newcommand{\col}{\mathrm{col}}
\newcommand{\parent}{\mathrm{parent}}
\newcommand{\Oh}{\ensuremath{\mathcal{O}}}
\newcommand{\Ohstar}{\ensuremath{\Oh^\star}}
\newcommand{\R}{\ensuremath{\mathbb{R}}}
\def\cqedsymbol{\ifmmode$\lrcorner$\else{\unskip\nobreak\hfil
\penalty50\hskip1em\null\nobreak\hfil$\lrcorner$
\parfillskip=0pt\finalhyphendemerits=0\endgraf}\fi} 
\newcommand{\cqed}{\renewcommand{\qed}{\cqedsymbol}}
\newcommand{\executeiffilenewer}[3]{%
\ifnum\pdfstrcmp{\pdffilemoddate{#1}}%
{\pdffilemoddate{#2}}>0%
{\immediate\write18{#3}}\fi%
} 
\newcommand{%
\executeiffilenewer{figures/.svg}{figures/.pdf}%
{inkscape -z -D --file=figures/.svg %
--export-pdf=figures/.pdf --export-latex}%
{\input{figures/.pdf_tex}}}[1]{%
\executeiffilenewer{figures/#1.svg}{figures/#1.pdf}%
{inkscape -z -D --file=figures/#1.svg %
--export-pdf=figures/#1.pdf --export-latex}%
{\input{figures/#1.pdf_tex}}}%
\newcommand{\defproblemG}[3]{
  \vspace{2mm}
\noindent\fbox{
  \begin{minipage}{0.96\textwidth}
  #1 \\
  {\bf{Input:}} #2  \\
  {\bf{Goal:}} #3
  \end{minipage}
  }
  \vspace{2mm}
}
\newcommand{\B}{{\bf B}}
\newcommand{\W}{{\bf W}}
\newcommand{\C}{\ensuremath{\mathcal{C}}\xspace}
\newcommand{\touch}{\sim}
\title{Minimum Bisection is Fixed Parameter Tractable%
\thanks{M. Cygan is supported by the Polish National Science Centre, grant n. UMO-2013/09/B/ST6/03136.
D. Lokshtanov is supported by the BeHard grant under the recruitment programme of the of Bergen Research Foundation.
The research of M. Pilipczuk and M. Pilipczuk leading to these results has received funding from the European Research Council under the European Union's Seventh Framework Programme (FP/2007-2013) / ERC Grant Agreement n. 267959.
S. Saurabh is supported by PARAPPROX, ERC starting grant no. 306992.}}
\author{
  Marek Cygan
  \thanks{
    Institute of Informatics, University of Warsaw, Poland,
      \texttt{cygan@mimuw.edu.pl}.
  }
  \and
  Daniel Lokshtanov
  \thanks{
    Department of Informatics, University of Bergen, Norway, \texttt{daniello@ii.uib.no}.
  }
  \and
  Marcin Pilipczuk
  \thanks{
    Department of Informatics, University of Bergen, Norway, \texttt{Marcin.Pilipczuk@ii.uib.no}.
  }
  \and
  Micha\l{} Pilipczuk
  \thanks{
    Department of Informatics, University of Bergen, Norway, \texttt{michal.pilipczuk@ii.uib.no}.
  }
  \and 
  Saket Saurabh\thanks{
    Institute of Mathematical Sciences, India \texttt{saket@imsc.res.in} 
    Department of Informatics, University of Bergen, Norway, \texttt{Saket.Saurabh@ii.uib.no}.
  }
  }
\date{}
\begin{document}

\maketitle

\begin{abstract}
In the classic {\sc Minimum Bisection} problem we are given as input a graph $G$ and an integer~$k$. The task is to determine whether there is a partition of $V(G)$ into two parts $A$ and $B$ such that $||A|-|B|| \leq 1$ and there are at most $k$ edges with one endpoint in $A$ and the other in $B$. In this paper we give an algorithm for {\sc Minimum Bisection} with running time
$\Oh(2^{\Oh(k^{3})}n^3 \log^3 n)$. This is the first fixed parameter tractable algorithm for {\sc Minimum Bisection}. 
%
At the core of our algorithm lies a new decomposition theorem that states that every graph $G$ can be decomposed by small separators into parts where each part is ``highly connected''
in the following sense: any cut of bounded size can separate only a limited number of vertices from each part of the decomposition. Our techniques generalize to the weighted setting, where  we seek for a bisection of minimum weight among solutions that contain at most $k$ edges.

\end{abstract}

\newpage
\section{Introduction}\label{sec_intro}
In the {\sc Minimum Bisection} problem the input is a graph $G$ on $n$ vertices together with an integer $k$, and the objective is to find a partition of the vertex set into two parts $A$ and $B$ such that $|A| = \lfloor \frac{n}{2} \rfloor$, $|B| = \lceil \frac{n}{2} \rceil$, and there are at most $k$ edges with one endpoint in $A$ and the other endpoint in $B$. The problem can be seen as a variant of {\sc Minimum Cut}, and is one of the classic NP-complete problems~\cite{garey1979computers}. {\sc Minimum Bisection} has been studied extensively from the perspective of approximation algorithms~\cite{FeigeKN00,FeigeK02,KhotV05,Racke08}, heuristics~\cite{BuiHJL89,BuiS02} and average case complexity~\cite{BuiCLS87}. 

In this paper we consider the complexity of {\sc Minimum Bisection} when the solution size $k$ is small relative to the input size $n$. A na\"ive brute-force algorithm solves the problem in time $n^{\Oh(k)}$. 
Until this work, it was unknown whether there exists a {\em fixed parameter tractable} algorithm, that is an algorithm with running time $f(k)n^{\Oh(1)}$, for the {\sc Minimum Bisection} problem. In fact  {\sc Minimum Bisection} was one of very few remaining classic NP-hard graph problems whose parameterized complexity status was unresolved. Our main result is the first fixed parameter tractable algorithm for {\sc Minimum Bisection}.
\begin{theorem}\label{thm:bisectionFPT} 
{\sc Minimum Bisection} admits an $\Oh(2^{\Oh(k^3)}n^3 \log^3 n)$ time algorithm. 
\end{theorem}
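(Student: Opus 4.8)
The plan is to separate a purely structural statement from an algorithmic one. The structural heart is a decomposition theorem. Say that a set $C \subseteq V(G)$ is \emph{$(q,k)$-unbreakable} if there is no partition $(A,B)$ of $V(G)$ with at most $k$ edges between $A$ and $B$ and $|A \cap C| > q$, $|B \cap C| > q$; the goal is to show that every graph $G$ has a tree decomposition $\mathcal{T} = (T, \{B_t\}_{t \in V(T)})$ of adhesion size at most $\adh$ such that for every node $t$ the torso of $B_t$ (obtained by turning every adhesion into a clique) is $(q, 2k)$-unbreakable with $q = 2^{\Oh(k)}$, and that $\mathcal{T}$ is computable in time $2^{\Oh(k^{2})} n^{\Oh(1)} \log^{\Oh(1)} n$. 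The natural route is recursive refinement: as long as the piece of $G$ under consideration is not unbreakable, take a witnessing cut of size at most $2k$ whose two sides are both large, promote its (bounded) boundary to a new adhesion, and recurse on the two sides; to make the recursion have depth $\Oh(\log n)$ — which is where the polylogarithmic factors of the final running time come from — one arranges that the two recursive calls are balanced in a suitable measure. This step is the main obstacle: a naive recursion lets the accumulated separators grow without control, so one has to use submodularity and uncrossing of small cuts to argue that only $\adh$ many boundary vertices ever need to be retained in any adhesion, and to show that the unbreakability demand, the adhesion bound, and the balance of the recursion can be met simultaneously.

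Given $\mathcal{T}$, root $T$ and solve {\sc Minimum Bisection} by dynamic programming over $\mathcal{T}$. For a node $t$, let $\sigma_t$ be its adhesion to the parent (so $|\sigma_t| \le \adh$) and let $V_t$ be the union of the bags in the subtree rooted at $t$. For each $2$-coloring $\phi : \sigma_t \to \{A, B\}$ and each integer $0 \le a \le n$, let $D[t, \phi, a]$ be the minimum, over all $2$-colorings of $V_t$ that extend $\phi$ and put exactly $a$ vertices on side $A$, of the number of cut edges with both endpoints in $V_t$ (capping the value at $k+1$, since larger values never matter). A bisection with at most $k$ cut edges exists if and only if $D[\mathrm{root}, \emptyset, \lfloor n/2 \rfloor] \le k$. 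The children of a node are combined by a knapsack-style convolution over the balance parameter $a$, at a cost of roughly $\Oh(n^{2})$ per edge of $T$; since $|V(T)| = \Oh(n)$ this accounts for the $n^{3} \log^{3} n$ part of the running time. The weighted variant is handled by carrying the accumulated weight as an extra coordinate of the table, which does not change the asymptotics.

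It remains to compute $D[t, \cdot, \cdot]$ from the tables of the children of $t$; this amounts to choosing a $2$-coloring of $B_t$ consistent with the colorings it induces on the child adhesions and, for each child, paying the best completion recorded in its table. Here the decomposition pays off. In any $2$-coloring of $V(G)$ with at most $k$ cut edges, torso-unbreakability of $B_t$ forces all but $q = 2^{\Oh(k)}$ of its vertices onto one side, and — using that adhesions are cliques of size at most $\adh$ and that an optimal solution cannot afford to pay inside more than $k$ of the children — one deduces that an optimal solution interacts with $B_t$ and its incident adhesions in only $2^{\poly(k)}$ essentially different ways. Enumerating these ``interaction types'', folding the ``inactive'' children into the table optimally and the few ``active'' ones by a bounded convolution, produces $D[t, \cdot, \cdot]$ in time $2^{\poly(k)} \cdot n^{\Oh(1)}$. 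The delicate point is to design the notion of interaction type so that it is at once fine enough to recover the exact cut size and the exact balance count $a$, and coarse enough that only $2^{\poly(k)}$ of them occur; it is precisely here that torso-unbreakability (rather than mere bag-unbreakability) is exploited, and here that the cubic exponent in $2^{\Oh(k^{3})}$ ultimately materializes.
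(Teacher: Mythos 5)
Your high-level plan --- decompose $G$ in a tree-like fashion into highly unbreakable pieces with small adhesions, then run a balance-aware dynamic program over the decomposition, exploiting unbreakability at each bag to compress the state space to $2^{\poly(k)}$ --- is exactly the framework the paper follows. However, both technical pillars are missing the specific mechanisms that make the argument work, and the mechanisms you gesture at are not the ones used (and it is far from clear they would work).

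For the decomposition, the paper does not control adhesion growth by ``submodularity and uncrossing of small cuts,'' nor does it need (or achieve) recursion depth $\Oh(\log n)$. The recursion maintains an incoming adhesion $S$; when $S$ is $(2k,k)$-breakable the split is easy, but the crux is the $(2k,k)$-unbreakable case. There the paper uses \emph{important separators}: it enumerates all ``chips'' (connected pieces cut away from $S$ by an important separator of size $\le 3k$), takes as the new bag every vertex not strictly inside a chip together with all chip boundaries, and uses the single-exponential bound $4^{\Oh(k)}$ on important separators per vertex to show that this bag is $(2^{\Oh(k)},k)$-unbreakable in the current piece and that the remaining components have neighbourhoods of size $2^{\Oh(k)}$. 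Simply recursing on a witnessing balanced cut (your proposal) does not bound the adhesions or make the bags unbreakable; the chip construction is the key new idea. The $\log^3 n$ factors come from splitters and FFT-accelerated knapsack convolutions, not from the recursion depth (which may be $\Theta(n)$). Also note the paper's unbreakability is phrased in terms of vertex separations, not edge cuts, because important separators are vertex separators; the translation to edge cuts happens inside the DP.

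For the per-bag step, ``enumerating $2^{\poly(k)}$ interaction types'' is where the actual work lies and you have not said what these types are or why there are so few. The paper's mechanism is a concrete colour-coding / randomized-contractions argument: model the bag and its child adhesions as a hypergraph; use a splitter to separate the $\le k$ bichromatic hyperedges of the optimum from a spanning hyperforest of one monochromatic colour class, use a perfect hash family to distinguish the bichromatic hyperedges, and brute-force their exact $2$-colourings (only $\eta^{\Oh(k)} = 2^{\Oh(k^2)}$ colourings of a size-$\eta = 2^{\Oh(k)}$ hyperedge are feasible, precisely because adhesions are additionally required to be $(2k,k)$-unbreakable). The resulting auxiliary graph forces components to be monochromatic and the rest is a knapsack DP. The exponent $\Oh(k^3)$ comes from guessing, for up to $k$ bichromatic hyperedges, one of $2^{\Oh(k^2)}$ colourings each; it is not a consequence of using torso-unbreakability (the paper's guarantee is that $\beta(t)$ is unbreakable in $G[\gamma(t)]$, not that the torso is), and the adhesion-unbreakability property you omit is exactly what keeps the per-bag state space small.
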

\noindent
Theorem~\ref{thm:bisectionFPT} implies that {\sc Minimum Bisection} can be solved in polynomial time for $k = \Oh(\sqrt[3]{\log n})$. In fact, our techniques can be generalized to solve the more general problem where the target $|A|$ is given as input, the edges have non-negative weights, and the objective is to find, among all partitions of $V(G)$ into $A$ and $B$ such that $A$ has the prescribed size and there are at most $k$ edges between $A$ and $B$, such a partition where the total weight of the edges between $A$ and $B$ is minimized.

\bigskip
\noindent
{\bf Our methods.} The crucial technical component of our result is a new graph decomposition theorem. Roughly speaking, the theorem states that for any $k$, every graph $G$ may be decomposed in a tree-like fashion by separators of size $2^{\Oh(k)}$ such that each part of the decomposition is ``highly connected''. To properly define what we mean by ``highly connected'' we need a few definitions. A {\em separation} of a graph $G$ is a pair $A,B \subseteq V(G)$ such that $A \cup B = V(G)$ and there are no edges between $A \setminus B$ and $B \setminus A$. The {\em order} of the separation $(A,B)$ is $|A \cap B|$. A vertex set $X \subseteq V(G)$ is called $(q,k)$-{\em unbreakable} if every separation $(A,B)$ of order at most $k$ satisfies $|(A \setminus B) \cap X| \leq q$ or $|(B \setminus A) \cap X| \leq q$. The parts of our decomposition will be ``highly connected'' in the sense that they are $(2^{\Oh(k)},k)$-unbreakable. We can now state the decomposition theorem as follows.
\begin{theorem}\label{thm:main_dec}
There is an algorithm that given $G$ and $k$ runs in time $\Oh(2^{\Oh(k^2)} n^2m)$ and outputs a tree-decomposition $(T,\beta)$ of $G$ such that $(i)$ for each $a \in V(T)$, $\beta(a)$ is $(2^{\Oh(k)},k)$-unbreakable in $G$,  $(ii)$ for each $ab \in E(T)$ we have that $|\beta(a) \cap \beta(b)| \leq 2^{\Oh(k)}$, and  $\beta(a) \cap \beta(b)$ is $(2k,k)$-unbreakable in $G$.
\end{theorem}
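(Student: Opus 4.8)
The plan is to build $(T,\beta)$ by a recursive, top-down procedure. I would design a routine $\mathtt{Decompose}(H,\partial)$ that takes an induced subgraph $H\subseteq G$ together with an \emph{interface} $\partial\subseteq V(H)$ with $N_G(V(G)\setminus V(H))\subseteq\partial$ (so $\partial$ separates $H$ from the rest of $G$), and returns a tree-decomposition of $H$ whose root bag contains $\partial$, all of whose bags are $(2^{\Oh(k)},k)$-unbreakable in $G$, and all of whose adhesions --- including the root adhesion, which will be exactly $\partial$ --- have size at most $\adh=2^{\Oh(k)}$ and are $(2k,k)$-unbreakable in $G$. Running $\mathtt{Decompose}(G,\emptyset)$ then yields the theorem. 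Throughout, I maintain as an invariant that $|\partial|\le\adh$ and $\partial$ is $(2k,k)$-unbreakable in $G$; note the latter is \emph{automatic} whenever $|\partial|\le 2k$ (any separation of order $\le k$ splits such a set with a side of size $\le 2k$), which will be the generic situation after the first split.

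The base case is the only place where unbreakability is genuinely exploited: if the whole set $V(H)$ is already $(q,k)$-unbreakable in $G$ for the target value $q=\adh$, the routine returns the single bag $V(H)$, which contains $\partial$ and whose only adhesion is $\partial$ --- so all conditions hold by the invariant. Otherwise, by definition there is a separation $(A,B)$ of $G$ of order at most $k$ that is \emph{balanced with respect to $V(H)$}, i.e.\ $|(A\setminus B)\cap V(H)|>q$ and $|(B\setminus A)\cap V(H)|>q$. Such a separation, if one exists, can be located by a subroutine that enumerates bounded-size ``seed'' configurations within $V(H)$ and, for each, computes a minimum vertex cut of order $\le k$ by $k$ rounds of augmenting paths; this enumeration-plus-max-flow step is the source of the $2^{\Oh(k^2)}$ and the polynomial factors in the running time. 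Having found $(A,B)$, I want to cut $H$ along it and recurse on $G[A\cap V(H)]$ and $G[B\cap V(H)]$, with fresh interfaces assembled from the separator $A\cap B$ together with the part of $\partial$ that each side inherits; both recursive graphs are then strictly smaller than $H$, since $(A\setminus B)\cap V(H)$ and $(B\setminus A)\cap V(H)$ are both nonempty.

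The real difficulty --- and what I expect to be the crux of the proof --- is that $\partial$ need not sit on one side of $(A,B)$, so a naive split lets the interface accumulate up to $k$ new vertices per level, which over a recursion of depth $\Omega(n/q)$ destroys both the size bound $|\partial|\le\adh$ and the $(2k,k)$-unbreakability of the interface. The fix is an uncrossing step: using submodularity of the order function of separations together with the invariant that $\partial$ is $(2k,k)$-unbreakable, I would show that $(A,B)$ can be repaired into a separation $(A^\star,B^\star)$ of $G$ with $\partial\subseteq A^\star$, of order still at most $\adh$, whose separator $A^\star\cap B^\star$ is contained in $\partial$ together with at most $k$ further vertices --- hence $\Oh(k)$-unbreakable (a subset of an $\Oh(k)$-unbreakable set), and $(2k,k)$-unbreakable once trimmed to a minimum-order sub-separator by a further submodularity argument --- and still balanced with respect to $V(H)$ up to a bounded loss in the parameter $q$. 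It is precisely this threefold bookkeeping --- keep the separator small, keep it $(2k,k)$-unbreakable, keep enough balance to make progress --- that forces the specific $2^{\Oh(k)}$ quantities appearing in $\adh$ and $q$, and it is where an iterative ``fix the larger side'' argument (with its own potential to guarantee termination) is needed. One then recurses on $G[B^\star\cap V(H)]$ with the small, $(2k,k)$-unbreakable interface $A^\star\cap B^\star\cap V(H)$, and on $G[A^\star\cap V(H)]$ with interface $\partial\cup(A^\star\cap B^\star\cap V(H))$, which by the same trimming is again $(2k,k)$-unbreakable of size $\le\adh$, restoring the invariant.

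Finally, correctness as a tree-decomposition (the edge condition, and connectivity of the nodes whose bag contains a fixed vertex) follows by gluing the sub-decompositions at their root bags in the obvious way and checking that every edge and vertex of $G$ lands appropriately; this uses only that $(A^\star,B^\star)$ is a genuine separation of $G$ and that the interfaces were chosen as described. Termination of the outer recursion is immediate since both recursive graphs have strictly fewer vertices, so the recursion tree has $\poly(n)$ nodes; multiplying by the $2^{\Oh(k^2)}\cdot\poly(n)$ cost of the separation-finding-and-uncrossing step at each node gives the claimed $\Oh(2^{\Oh(k^2)}n^2m)$ bound. The principal obstacle throughout is the uncrossing lemma of the previous paragraph; everything else is careful but essentially routine bookkeeping with separations and tree-decomposition axioms.
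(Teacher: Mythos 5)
You have the recursive top-down skeleton roughly right --- a routine carrying a small interface $\partial$ that either outputs a single bag or splits and recurses --- but both the case analysis and the key technical tool differ from the paper's, and the uncrossing step you yourself identify as the crux has a gap that I do not believe can be closed in the form you propose. The paper's recursion (Theorem~\ref{thm:decomposition-recursion}) branches on whether the \emph{interface} $S$ is $(2k,k)$-breakable, not on whether $V(H)$ is unbreakable. When $S$ is breakable, a small self-contained recursion on $S$ alone (Lemma~\ref{lem:breaking-neighbourhood}) produces a superset $L'\supseteq S$ with the amortized bound $|L'\setminus S|\le (|S|-2k-1)\cdot k$ such that every component of $G\setminus L'$ has a $(2k,k)$-unbreakable neighbourhood of size at most $|S|$. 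When $S$ is unbreakable, the paper does not look for a balanced separation of $V(H)$ at all: it enumerates, for every vertex $v$, all important $v$--$S$ separators of size at most $3k$, defines ``chips'' as the pieces they cut off, and sets $A$ to be everything outside the interior of every chip together with all chip boundaries (Theorem~\ref{thm:decomposition-local}). The single-exponential count of important separators (Lemma~\ref{lem:imp-seps}) and the resulting chip-touching bound (Lemma~\ref{lem:touch-bound}) are what give both the $(\tau,k)$-unbreakability of $A$ (Lemma~\ref{lem:A-unbreakable}) and the bound $\eta$ on $|N(D)|$ for each component $D$ of $G\setminus A$. In both branches the new interfaces handed to the recursive calls are bounded by $\eta=2^{\Oh(k)}$ \emph{absolutely}; nothing is inherited from the old interface beyond what lies in the new bag's boundary.

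This absolute bound is precisely what your proposal lacks. Even if the uncrossing delivers everything you ask of it --- $\partial\subseteq A^\star$ and $A^\star\cap B^\star\subseteq\partial\cup W$ with $|W|\le k$ --- the interface passed to the $A^\star$-side is $\partial\cup(A^\star\cap B^\star)$, which is $|\partial|+\Theta(k)$. Your balance guarantee only forces each side to drop at least $q=2^{\Oh(k)}$ vertices of $V(H)$, so the recursion depth can be $\Omega(n/q)$ and the interface can reach size $\Omega(kn/q)$, which is not bounded by any function of $k$. ``Trimming to a minimum-order sub-separator'' cannot rescue this: you may not remove vertices from the interface without destroying the property that it separates the current piece from the rest of $G$, and a minimum balanced separator carries no $(2k,k)$-unbreakability guarantee in any case --- the paper obtains that property through the dedicated recursion of Lemma~\ref{lem:breaking-neighbourhood}, not through submodular exchange. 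Finally, there is no analogue in your plan of the important-separator enumeration, which is where the $2^{\Oh(k)}$ bounds on bag unbreakability and adhesion size actually come from; the counting argument of Lemma~\ref{lem:A-unbreakable}, which shows that any separation of order $\le k$ leaves only $\Oh(k^2 4^{\Oh(k)})$ vertices of $A$ on the small side, has no counterpart in a balanced-separator-plus-uncrossing scheme.
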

Here $\beta(a)$ denotes the bag at node $a \in V(T)$; the completely formal definition of tree-decompositions may be found in the preliminaries.
It is not immediately obvious that a set $X$ which is $(q,k)$-unbreakable is ``highly connected''.  To get some intuition it is helpful to observe that if a set $X$ of size at least $3q$ is $(q,k)$-unbreakable then removing any $k$ vertices from $G$ leaves almost all of $X$, except for at most $q$ vertices, in the same connected component. In other words, one cannot separate two large chunks of $X$ with a small separator. From this perspective Theorem~\ref{thm:main_dec} can be seen as an approximate way to ``decompose a graph by $k$ vertex-cuts into it's $k+1$-connected components''~\cite{DiestelDecomposition}, which is considered an important quest in structural graph theory. The proof strategy of Theorem~\ref{thm:main_dec} is inspired by the recent decomposition theorem of Marx and Grohe~\cite{marx-grohe} for graphs excluding a topological subgraph. Contrary to the approach of Marx and Grohe~\cite{marx-grohe}, however, the crucial technical tool we use to decompose the graph are the important separators of Marx~\cite{Marx06}.

Our algorithm for {\sc Minimum Bisection} applies Theorem~\ref{thm:main_dec} and then proceeds by performing bottom up dynamic programming on the tree-decomposition. The states in the dynamic program are similar to the states in the dynamic programming algorithm for {\sc Minimum Bisection} on graphs of bounded treewidth~\cite{JansenKLS05}. Property (ii) of  Theorem~\ref{thm:main_dec} ensures that the size of the dynamic programming table is upper bounded by $2^{\Oh(k^2)}n^{\Oh(1)}$. For graphs of bounded treewidth all bags have small size, making it easy to compute the dynamic programming table at a node $b$ of the decomposition tree, if the tables for the children of $b$ have already been computed. In our setting we do not have any control over the size of the bags, we only know that they are  $(2^{\Oh(k)},k)$-unbreakable. We show that the sole assumption that the bag at $b$ is $(2^{\Oh(k)},k)$-unbreakable is already sufficient to efficiently compute the table at $b$ from the tables of its children, despite no a priori guarantee on the bag's size. The essence of this step is an application of the ``randomized contractions'' technique~\cite{rand-contractions}.

We remark here that the last property of the decomposition of Theorem~\ref{thm:main_dec}
--- the one that asserts that adhesions $\beta(a) \cap \beta(b)$ are $(2k,k)$-unbreakable in $G$
--- is not essential to establish the fixed-parameter tractability of \textsc{Minimum Bisection}.
This high unbreakability of adhesions is used to further limit
the number of states of the dynamic programming,
decreasing the dependency on $k$ in the algorithm of Theorem~\ref{thm:bisectionFPT}
from double- to single-exponential.

\bigskip
\noindent
{\bf Related work on balanced separations.} There are several interesting results concerning the parameterized complexity of finding balanced separators in graphs. Marx~\cite{Marx06} showed that the the vertex-deletion variant of the bisection problem is W[1]-hard. In {\sc Minimum Vertex Bisection} the task is to partition the vertex set into three parts $A$, $S$ and $B$ such that $|S| \leq k$ and $|A| = |B|$, and there are no edges between $A$ and $B$. It is worth mentioning that the hardness result of Marx~\cite{Marx06} applies to the more general problem where $|A|$ is given as input, however the hardness of {\sc Minimum Vertex Bisection} easily follows from the results presented in~\cite{Marx06}. 

As the vertex-deletion variant of the bisection problem is W[1]-hard,
we should not expect that our approach would work also in this case.
Observe that one can compute the decomposition of Theorem~\ref{thm:main_dec}
and define the states of the dynamic programming over the tree decomposition, as it
is done for graphs of bounded treewidth. However, we are unable 
to perform the computations needed for one bag of the decomposition.
Moreover, it is not only the artifact of the ``randomized contractions'' technique,
but the hard instances obtained from the reduction of~\cite{Marx06}
are in fact highly unbreakable by our definition, and Theorem~\ref{thm:main_dec}
would return a trivial decomposition.

Feige and Mahdian~\cite{FeigeM06} studied cut problems that may be considered as approximation variants of {\sc Minimum Bisection} and  {\sc Minimum Vertex Bisection}.  We say that a vertex (edge) set $S$ is an $\alpha$-(edge)-separator if every connected component of $G \setminus S$ has at most $\alpha n$ vertices. The main result of Feige and Mahdian~\cite{FeigeM06} is a randomized algorithm that given an integer $k$, $\frac{2}{3} \leq \alpha < 1$ and $\epsilon > 0 $ together with a graph $G$ which has an $\alpha$-separator of size at most $k$, outputs in time $2^{f(\epsilon) k}n^{\Oh(1)}$ either an $\alpha$-separator of size at most $k$ or an $(\alpha + \epsilon)$-separator of size strictly less than $k$. They also give a deterministic algorithm with similar running time for the edge variant of this problem. To complement this result they show that, at least for the vertex variant, the exponential running time dependence on $1/\epsilon$ is unavoidable. Specifically, they prove that for any $\alpha > \frac{1}{2}$ finding an $\alpha$-separator of size $k$ is W[1]-hard, and therefore unlikely to admit an algorithm with running time $f(k)n^{\Oh(1)}$, for any function $f$. On the other hand, our methods imply a $2^{\Oh(k^3)}n^{\Oh(1/\alpha)}$ time algorithm for finding an $\alpha$-edge-separator of size at most $k$, for any $\alpha > 0$.
 
{\sc Minimum Bisection} on planar graphs was shown to be fixed parameter tractable by Bui and Peck~\cite{BuiP92}. It is interesting to note that {\sc Minimum Bisection} is not known to be NP-hard on planar graphs, and the complexity of {\sc Minimum Bisection} on planar graphs remains a challenging open problem. More recently, van Bevern et al.~\cite{vanBevern13} used the treewidth reduction technique of Marx et al.~\cite{MarxOR13} to give a fixed parameter tractable algorithm for {\sc Minimum Bisection} for the special case when removing the cut edges leaves a constant number of connected components. Their algorithm also works for the vertex-deletion variant the same restrictions. Since {\sc Minimum Vertex Bisection} 
is known to be W[1]-hard, it looks difficult to extend their methods to give a  fixed parameter tractable algorithm for {\sc Minimum Bisection} without any restrictions.
Thus, Theorem~\ref{thm:bisectionFPT} resolves an open problem of van Bevern et al.~\cite{vanBevern13} on the existence of such an algorithm.

\bigskip
\noindent
{\bf Related work on graph decompositions.}
The starting point of our decomposition theorem is the ``recursive understanding'' technique pioneered by Grohe et al.~\cite{GroheKMW11}, and later used by Kawarabayashi and Thorup~\cite{KawarabayashiT11} and by Chitnis et al.~\cite{rand-contractions} to design a number of interesting parameterized algorithms for cut problems. Recursive understanding can be seen as a reduction from a parameterized problem on general graphs to the same problem on graphs with a particular structure. Grohe et al.~\cite{GroheKMW11} essentially use recursive understanding to reduce the problem of deciding whether $G$ contains $H$ as a topological subgraph to the case where $G$ either excludes a clique on $f(|H|)$ vertices as a {\em minor} or contains at most $f(|H|)$ vertices of degree more than $f(|H|)$, for some function $f$. Marx and Grohe~\cite{marx-grohe} subsequently showed that any graph which excludes $H$ as a topological subgraph can be decomposed by small separators, in a tree-like fashion, into parts such that each part either excludes a clique on $f(|H|)$ vertices as a minor or contains at most $f(|H|)$ vertices of degree more than $f(|H|)$, for some function $f$. Thus, the decomposition theorem of Marx and Grohe~\cite{marx-grohe} can be seen as a ``structural'' analogue of the recursive understanding technique for topological subgraph containment.

Both Kawarabayashi and Thorup~\cite{KawarabayashiT11} and Chitnis et al.~\cite{rand-contractions} apply recursive understanding to reduce certain parameterized cut problems on general graphs to essentially the same problem on a graph $G$ where $V(G)$ is $(f(k),k)$-unbreakable for some function $f$. Then they proceed to show that the considered problem becomes fixed parameter tractable on $(f(k),k)$-unbreakable graphs. Observe that {\sc Minimum Bisection} on  $(f(k),k)$-unbreakable graphs is trivially fixed parameter tractable - if the number of vertices is more than $2f(k)$ we can immediately say no, while if the number of vertices is at most $2f(k)$, then a brute force algorithm is already fixed parameter tractable. More importantly, it turns out that even the more general problem where $|A|$ is given on the input can be solved in fixed parameter tractable time on $(f(k),k)$-unbreakable graphs via an application of the ``randomized contractions'' technique of Chitnis et al~\cite{rand-contractions}. It is therefore very natural to try to use recursive understanding in order to reduce {\sc Minimum Bisection} on general graphs to {\sc Minimum Bisection} on $(f(k),k)$-unbreakable graphs.

Unfortunately, it seems very difficult to pursue this route. In particular, recursive understanding works by cutting the graph into two parts by a small separator, ``understanding'' the easier of the two parts recursively, and then replacing the``understood'' part by a constant size gadget. For {\sc Minimum Bisection} it seems unlikely that the understood part can be emulated by any constant size gadget because of the balance constraint in the problem definition. Intuitively, we would need to encode the behaviour of the understood part for every possible cardinality of $A$, which gives us amount of information that is not bounded by a function of $k$. The issue has strong connections to the fact that the best known algorithm for {\sc Minimum Bisection} on graphs of bounded treewidth is at least quadratic~\cite{JansenKLS05} rather than linear. 

At this point our decomposition theorem comes to the rescue. It precisely allows us to {\em structurally} decompose the graph in a tree-like fashion into $(f(k),k)$-unbreakable parts, which provides much more robust foundations for further algorithmic applications. Thus, essentially our decomposition theorem does the same for cut problems as the decomposition theorem of Marx and Grohe~\cite{marx-grohe}  does for topological subgraph containment. Notably, the ``recursive understanding'' step used by Kawarabayashi and Thorup~\cite{KawarabayashiT11} and Chitnis et al.~\cite{rand-contractions} for their problems could be replaced by dynamic programming over the tree-decomposition given by Theorem~\ref{thm:main_dec}.

We remark here that it has been essentially known, and observed earlier by Chitnis, Cygan and Hajiaghayi (private communication), that {\sc Minimum Bisection} can be solved in FPT time on sufficiently
unbreakable graphs via the ``randomized contractions'' technique. Furthermore, although our application of this framework to handle one bag of the decomposition
is more technical than in~\cite{rand-contractions}, due to the presence
of the information for children bags, it uses no novel tools compared to~\cite{rand-contractions}.
Hence, we emphasize that our main technical contribution is the decomposition theorem (Theorem~\ref{thm:main_dec}), with the fixed-parameter algorithm for {\sc Minimum Bisection}
being its corollary via an involved application of known techniques.

\bigskip
\noindent
\textbf{Organisation of the paper.}
After setting up notation and recalling useful results on (important) separators
in Section~\ref{sec:prelims}, we turn our attention to the decomposition theorem and prove
Theorem~\ref{thm:main_dec} in Section~\ref{sec:decomp}.
The algorithm for \textsc{Minimum Bisection} in the unweighted setting,
promised by Theorem~\ref{thm:bisectionFPT}, is presented in Section~\ref{sec:bisection}.
We discuss the weighted extension in Section~\ref{sec:weights} and how to find
an $\alpha$-edge-separator of size at most $k$ in Section~\ref{sec:alpha}.
Section~\ref{sec:conc} concludes the paper.

\section{Preliminaries}\label{sec:prelims}
We use standard graph notation, see e.g.~\cite{diestel}. We use $n$ and $m$ to denote cardinalities of the vertex and edge sets, respectively, of a given graph provided it is clear from the context.
We begin with some definitions and known results on separators and separations in graphs.

\begin{definition}[\bf separator]
For two sets $X,Y \subseteq V(G)$ a set $W \subseteq V(G)$
is called an $X-Y$ separator if in $G \setminus W$ no connected component contains
a vertex of $X$ and a vertex of $Y$ at the same time.
\end{definition}

\begin{definition}[\bf separation]
A pair $(A,B)$ where $A \cup B = V(G)$ is called a {\em separation}
if $E(A \setminus B, B \setminus A) = \emptyset$.
The {\em order} of a separation $(A,B)$ is defined as $|A \cap B|$.
\end{definition}

\begin{definition}[\bf important separator]
An inclusion-wise minimal $X-Y$ separator $W$ is called
an {\emph{important}} $X-Y$ separator if there is no
$X-Y$ separator $W'$ with $|W'| \le |W|$ and
$R_{G\setminus W}(X \setminus W) \subsetneq R_{G \setminus W'}(X \setminus W')$,
where $R_H(A)$ is the set of vertices reachable from $A$ in the graph $H$.
\end{definition}

\begin{lemma}[\cite{imp-seps-chen,multicut}]
\label{lem:imp-seps}
For any two sets $S,T \subseteq V(G)$ there are at most $4^k$
important $S-T$ separators of size at most $k$
and one can list all of them in $\Oh(4^k k (n+m))$ time.
\end{lemma}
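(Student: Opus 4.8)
The plan is to follow the classical branching argument for important separators: enumerate all important $S$--$T$ separators of size at most $k$ by a recursion whose search tree is binary and of depth at most $2k$, so that it has at most $2^{2k}=4^k$ leaves, and then argue that every important separator is produced at some leaf. By standard preprocessing we may assume $S$ and $T$ are disjoint and non-adjacent, and to lighten notation we treat them as single undeletable terminals $s,t$; the general case needs only cosmetic changes. Let $\lambda$ be the minimum size of an $S$--$T$ separator, and for a separator $W$ put $R(W):=R_{G\setminus W}(S\setminus W)$; since an important $W$ is inclusion-minimal, $W=N_G(R(W))$, where $N_G(A)$ denotes the set of vertices outside $A$ with a neighbour in $A$.

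\textbf{Structural lemmas.}
First I would prove, via submodularity of $A\mapsto|N_G(A)|$, that among all minimum $S$--$T$ separators there is a \emph{unique} one $W^{\star}$ whose reachable set $R^{\star}:=R(W^{\star})$ is inclusion-wise maximal (indeed $R^{\star}$ is the union of the reachable sets of all minimum separators), that $W^{\star}=N_G(R^{\star})$, and that $W^{\star}$ is itself important. The submodular step is: if $R_1,R_2$ come from minimum separators then $N_G(R_1\cap R_2)$ is again an $S$--$T$ separator, so $|N_G(R_1\cup R_2)|\le|N_G(R_1)|+|N_G(R_2)|-|N_G(R_1\cap R_2)|\le\lambda$, hence $R_1\cup R_2$ also comes from a minimum separator, and iterating yields the maximal $R^{\star}$. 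Second, for any important $S$--$T$ separator $W$ of size at most $k$ I would establish: (i) $R^{\star}\subseteq R(W)$, because otherwise $N_G(R^{\star}\cup R(W))$ would be an $S$--$T$ separator of size at most $|W|$ with reachable set strictly larger than $R(W)$, contradicting importance; and (ii) adding any $v\in W^{\star}$ to $S$ strictly increases $\lambda$, because a minimum $(S\cup\{v\})$--$T$ separator of size $\lambda$, once shrunk to be inclusion-minimal, would be a minimum $S$--$T$ separator whose reachable set contains $v$, which is impossible since $v\in N_G(R^{\star})$ and $R^{\star}$ is maximal.

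\textbf{The recursion and its analysis.}
On input $(G,S,T,k)$: if $\lambda>k$ return nothing; if $\lambda=0$ return the single separator $\emptyset$; otherwise compute $W^{\star}$ from one max-flow / min-vertex-cut computation in time $\Oh(k(n+m))$ (aborting once the flow value exceeds $k$, and reading $R^{\star}$ off the residual graph), pick any $v\in W^{\star}$, and branch on whether $v$ lies in the target separator. If it does, recurse on $(G\setminus v,S,T,k-1)$ and add $v$ to every returned set. If it does not, then by (i) and (ii) $v\in R(W)$ for every important $S$--$T$ separator $W$ missing $v$, so each such $W$ is an important separator for $S\cup\{v\}$ and $T$; hence recurse after merging $v$ into $s$. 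With the potential $\mu:=2k-\lambda$, which equals at most $2k$ and is non-negative whenever we recurse, both branches strictly decrease $\mu$: the first because $k$ drops by one while $\lambda$ drops by at most one, the second because $k$ is unchanged while $\lambda$ grows by at least one by (ii). Thus the search tree is binary of depth at most $2k$, has at most $4^k$ leaves, each emitting at most one candidate separator; since every important separator is emitted at some leaf, there are at most $4^k$ of them, and the running time is $\Oh(4^k)$ nodes times $\Oh(k(n+m))$ per node, that is, $\Oh(4^k k(n+m))$. Either a finer analysis of the first branch (using $v\in W^{\star}$ to see that $v$ is never redundant in the produced set) shows the candidates are already exactly the important separators, or a cheap final filtering recovers the exact list within the same bound.

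\textbf{Anticipated main obstacle.}
The crux is the \emph{strict} inequality in fact (ii): it genuinely needs that $v$ is taken from the unique farthest minimum separator $W^{\star}$ (it is false for $v$ drawn from an arbitrary minimum separator), and it is exactly what makes the ``$v$ is outside'' branch a real step, hence what is responsible for the clean $4^k$ bound rather than a $k^{\Oh(k)}$ one; proving it together with the uncrossing lemma of the previous paragraph is where the real work lies. A secondary technical annoyance is arranging that the enumeration outputs precisely the important separators, and not merely a $4^k$-size superset containing them, within the claimed running time.
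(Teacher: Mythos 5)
The paper does not prove this lemma; it is quoted from the literature (\cite{imp-seps-chen,multicut}). On its own merits, your outline follows the standard branching argument for enumerating important separators: compute the minimum $S$--$T$ separator $W^{\star}$ closest to $T$ via max-flow, branch on a vertex $v\in W^{\star}$ (either $v$ belongs to the sought separator or it does not), and bound the recursion depth by the potential $2k-\lambda$. The submodularity uncrossing establishing the unique maximal $R^{\star}$, the containment $R^{\star}\subseteq R(W)$ for every important $W$ of size at most $k$, the decrease of the potential in both branches, and the running time accounting are all correct, and this is essentially how the cited sources obtain the $4^k$ bound.

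The one place where the argument as written is not sound is your fact (ii), which you yourself flag as the crux. You assert that a minimum $(S\cup\{v\})$--$T$ separator $Z$ of size $\lambda$, once made inclusion-minimal, is a minimum $S$--$T$ separator \emph{whose reachable set contains $v$}, and then invoke maximality of $R^{\star}$. But inclusion-minimality of $Z$ as an $(S\cup\{v\})$--$T$ separator only guarantees $v\in R'$, the set reachable from $S\cup\{v\}$ in $G\setminus Z$; it does \emph{not} guarantee $v\in R(Z)$, the set reachable from $S$ alone, which is the quantity that maximality of $R^{\star}$ constrains. A priori $v$ could sit in a middle component of $G\setminus Z$ touching neither $S$ nor $T$, in which case your stated contradiction does not arise, and ruling this out is exactly the real content of (ii). The correct repair is another uncrossing: apply submodularity to $R'$ and $R^{\star}$ (both $N(R'\cup R^{\star})$ and $N(R'\cap R^{\star})$ are $S$--$T$ separators, so both have size exactly $\lambda$), and then check directly that every vertex of $R'\cup R^{\star}$ is reachable from $S$ in $G\setminus N(R'\cup R^{\star})$; in particular $v$ is, since its neighbour in $R^{\star}$ (it has one, as $v\in N_G(R^{\star})$) anchors it back to $S$. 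This exhibits a minimum $S$--$T$ separator whose reachable set strictly contains $R^{\star}$, the contradiction you want. With that repair, the remainder of the proposal --- including your acknowledged need for a final filtering step to discard non-important candidates --- goes through.
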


We proceed to define tree-decompositions. For a rooted tree $T$ and a non-root node $t \in V(T)$, by $\parent(t)$
we denote the parent of $t$ in the tree $T$. For two nodes $u,t\in T$, we say that $u$ is a {\emph{descendant}} of $t$, denoted $u\preceq t$, if $t$ lies on the unique path connecting $u$ to the root. Note that every node is thus its own descendant.

\begin{definition}[\bf tree decomposition]
A {\em tree decomposition} of a graph $G$ is a pair $(T,\beta)$, where $T$ 
is a rooted tree and $\beta : V(T) \to 2^{V(G)}$ is a mapping such that:
\begin{itemize}
  \item for each node $v \in V(G)$ the set $\{t \in V(G) | v \in \beta(t)\}$ induces a nonempty and connected subtree of~$T$,
  \item for each edge $e \in E(G)$ there exists $t \in V(T)$ such that $e \subseteq \beta(t)$.
\end{itemize}
\end{definition}

The set $\beta(t)$ is called the {\emph{bag at $t$}}, while sets $\beta(u)\cap \beta(v)$ for $uv\in E(T)$ are called {\emph{adhesions}}. Following the notation from~\cite{marx-grohe}, for a tree decomposition $(T,\beta)$ of a graph $G$ we define auxiliary mappings $\sigma,\gamma:V(T) \to 2^{V(G)}$ as 
\begin{align*}
\sigma(t) & = \begin{cases} \emptyset & \text{if t is the root of }T \\ \beta(t) \cap \beta(\parent(t)) & \text{otherwise}\end{cases} \\
\gamma(t) & = \bigcup_{u\preceq t} \beta(u)
\end{align*}

Finally, we proceed to the definition of unbreakability.

\begin{definition}[\bf $(q,k)$-unbreakable set]
We say that a set $A$ is $(q,k)$-{\emph{unbreakable}},
if for any separation $(X,Y)$ of order at most $k$
we have $|(X \setminus Y) \cap A| \le q$ or $|(Y \setminus X) \cap A| \le q$.
Otherwise $A$ is $(q,k)$-{\emph{breakable}}, and any separation $(X,Y)$ certifying this is called a {\emph{witnessing separation}}.
\end{definition}

Let us repeat the intuition on unbreakable sets from the introduction.
If a set $X$ of size at least $3q$ is $(q,k)$-unbreakable then removing any $k$ vertices from $G$
leaves almost all of $X$, except for at most $q$ vertices, in the same connected component.
In other words, one cannot separate two large chunks of $X$ with a small separator.

Observe that if a set $A$ is $(q,k)$-unbreakable in $G$, then any of its subset
$A' \subseteq A$ is also $(q,k)$-unbreakable in $G$. 
Moreover, if $A$ is $(q,k)$-unbreakable in $G$, then $A$ is also $(q,k)$-unbreakable in any supergraph of $G$. For a small set $A$ it is easy to efficiently verify whether $A$ is $(q,k)$-unbreakable in $G$, or to find a witnessing separation.

\begin{lemma}
\label{lem:breaking}
Given a graph $G$, a set $A \subseteq V(G)$ and an integer $q$
one can check in $\Oh(|A|^{2q+2} k(n+m))$ time whether $A$ is $(q,k)$-unbreakable in $G$,
and if not, then find a separation $(X,Y)$ of order at most $k$
such that $|(X\setminus Y) \cap A| > q$ and $|(Y \setminus X) \cap A| > q$.
\end{lemma}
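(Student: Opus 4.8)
The plan is to reduce the unbreakability test to a polynomial number of minimum vertex-cut computations. The key observation is an exact correspondence: $A$ is $(q,k)$-breakable if and only if there exist two \emph{disjoint} subsets $S,T\subseteq A$ with $|S|=|T|=q+1$ that admit an $S-T$ separator $W$ with $|W|\le k$ and $W\cap(S\cup T)=\emptyset$. For the forward direction, if $(X,Y)$ is a witnessing separation then $|(X\setminus Y)\cap A|>q$ and $|(Y\setminus X)\cap A|>q$, so we may pick $S\subseteq (X\setminus Y)\cap A$ and $T\subseteq (Y\setminus X)\cap A$ of size $q+1$; then $W:=X\cap Y$ is an $S-T$ separator (in $G\setminus W$ every component lies entirely in $X\setminus Y$ or entirely in $Y\setminus X$), it has size at most $k$, and it is disjoint from $S\cup T$. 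For the converse, given such $S,T,W$ I would set $X:=R_{G\setminus W}(S)\cup W$ and $Y:=V(G)\setminus R_{G\setminus W}(S)$. Then $X\cup Y=V(G)$, $X\cap Y=W$ (so the order is $|W|\le k$), and $X\setminus Y=R_{G\setminus W}(S)$ while $Y\setminus X=V(G)\setminus(R_{G\setminus W}(S)\cup W)$, with no edge between them since any such edge would extend reachability from $S$ in $G\setminus W$. Finally $S\subseteq X\setminus Y$ (as $S$ avoids $W$) and $T\subseteq Y\setminus X$ (as $W$ separates $S$ from $T$ and $T$ avoids $W$), so $|(X\setminus Y)\cap A|\ge q+1>q$ and likewise for the other side; thus $(X,Y)$ is a valid witnessing separation that we can output.

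Given this equivalence, the algorithm enumerates all candidate pairs $(S,T)$. If $|A|<2(q+1)$ there are no two disjoint $(q+1)$-subsets of $A$, so $A$ is trivially $(q,k)$-unbreakable. Otherwise there are at most $\binom{|A|}{q+1}\binom{|A|-q-1}{q+1}\le |A|^{2q+2}$ pairs to consider. For each pair $(S,T)$ I would test whether the minimum $S-T$ vertex cut that avoids $S\cup T$ has size at most $k$ using the standard flow network: split each vertex into an in-copy and an out-copy joined by a unit-capacity arc, give the vertices of $S\cup T$ infinite capacity, add a super-source adjacent to $S$ and a super-sink adjacent to $T$, and run augmenting-path max-flow. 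Since every augmenting path is bottlenecked by a unit-capacity vertex arc, each augmentation raises the flow value by exactly one, so after at most $k+1$ augmentation attempts we either certify that the minimum cut exceeds $k$ (abort this pair) or we have found a flow of value at most $k$, in which case the usual residual-graph argument recovers the cut $W$ and hence the separation above. Each pair therefore costs $\Oh(k(n+m))$.

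Summing over all pairs gives the claimed $\Oh(|A|^{2q+2}k(n+m))$ running time; if every pair produces minimum cut larger than $k$, the algorithm reports that $A$ is $(q,k)$-unbreakable, which is correct by the equivalence above. There is no deep obstacle here: the points that require care are purely technical, namely (i) getting the flow model right so that the cut is forced to avoid the terminal sets $S$ and $T$ (which is why these vertices are given infinite capacity), (ii) verifying that both sides of the recovered separation genuinely contain more than $q$ vertices of $A$, which holds because they contain $S$ and $T$ respectively, and (iii) bounding the number of candidate pairs, using $\binom{|A|}{q+1}\le |A|^{q+1}$, together with the edge case $|A|<2(q+1)$.
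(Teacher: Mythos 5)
Your proof is correct and takes essentially the same route as the paper: enumerate all $\Oh(|A|^{2q+2})$ pairs of disjoint $(q+1)$-subsets of $A$, run $k+1$ rounds of augmenting-path max-flow to decide whether a small separator exists between them, and convert a small separator into a witnessing separation and vice versa. The one place where you are more explicit than the paper is the requirement that the separator $W$ avoid $S\cup T$ — you enforce this by giving those terminals infinite capacity in the flow network, which is exactly what makes the conclusion $|(X\setminus Y)\cap A|\ge q+1$ (and symmetrically) go through without loss; the paper uses the same fact implicitly when it asserts $|X'\setminus Y'|\ge |X_0|$ and, in the converse, observes that the witnessing separator $X\cap Y$ is automatically disjoint from the chosen $X_0\subseteq X\setminus Y$ and $Y_0\subseteq Y\setminus X$.
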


\begin{proof}
Our algorithm guesses, by trying all possibilities, 
two disjoint subsets $X_0,Y_0 \subseteq A$ of $q+1$ vertices each.
Having fixed $X_0$ and $Y_0$ we may, in $\Oh(k(n+m))$ time by applying $(k+1)$ rounds of Ford-Fulkerson algorithm,
find a minimum $X_0-Y_0$ separator in $G$, or conclude that its size is larger than $k$.
If a separator $Z$ of size at most $k$ exists, then obtain a separation $(X',Y')$ as follows:
set $X' \cap Y' = Z$, add connected components of $G \setminus Z$
intersecting $X_0$ to $X' \setminus Y'$, add connected
components intersecting $Y_0$ to $Y' \setminus X'$, and distribute
all the other connected component arbitrarily between $X'\setminus Y'$ and $Y'\setminus X'$.
Observe that since $|X'\cap Y'|\leq k$, $|X'\setminus Y'|\geq |X_0|=q+1$, and $|Y'\setminus X'|\geq |Y_0|=q+1$, then separation $(X',Y')$ witnesses that $A$ is $(q,k)$-breakable, and thus can be output by the algorithm. If for none of the pairs $(X_0,Y_0)$ admits a $X_0-Y_0$ separator of size at most $k$, then we conclude that $A$ is $(q,k)$-unbreakable.

It remains to argue that if $A$ is $(q,k)$-breakable, then for some pair $(X_0,Y_0)$ the minimum $X_0-Y_0$ separator has size at most $k$. Indeed, let $(X,Y)$ be any separation of order at most $k$ witnessing that $A$ is $(q,k)$-breakable, and let $X_0 \subseteq X \setminus Y$ and $Y_0 \subseteq Y \setminus X$ be any subsets of size $q+1$. Then $X\cap Y$ is an $X_0-Y_0$ separator of size at most $k$.
\end{proof}

\section{Decomposition}\label{sec:decomp}
We now restate our decomposition theorem in a slightly stronger form that will emerge from the proof.

\begin{theorem}
\label{thm:decomposition-main}
There is an $\Oh(2^{\Oh(k^2)} n^2 m)$ time algorithm
that, given a connected graph $G$ together with
an integer $k$, computes a tree decomposition $(T,\beta)$ of $G$ with at most $n$ nodes
such that the following conditions hold:
\begin{enumerate}[(i)]
  \item for each $t \in V(T)$, the graph $G[\gamma(t)]\setminus \sigma(t)$ is connected and $N(\gamma(t)\setminus \sigma(t))=\sigma(t)$ ,
  \item for each $t \in V(T)$, the set $\beta(t)$ is $(2^{\Oh(k)},k)$-unbreakable in $G[\gamma(t)]$,
  \item for each non-root $t \in V(T)$, we have that $|\sigma(t)|\leq 2^{\Oh(k)}$ and $\sigma(t)$ is $(2k,k)$-unbreakable in $G[\gamma(\parent(t))]$.
\end{enumerate}
\end{theorem}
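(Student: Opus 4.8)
\noindent\emph{Proof plan.}
The plan is to build $(T,\beta)$ by a recursive ``understanding'' procedure $\mathrm{Decompose}(S,C)$ that takes a set $S\subseteq V(G)$ with $|S|\le 2^{\Oh(k)}$ together with a connected vertex set $C\subseteq V(G)\setminus S$ satisfying $N_G(C)=S$, and returns a tree decomposition of $G[S\cup C]$ rooted at a node $t$ with $\sigma(t)=S$ and $\gamma(t)=S\cup C$. The decomposition of $G$ is the output of $\mathrm{Decompose}(\emptyset,V(G))$, which is legitimate since $G$ is connected. Every recursive call will be made on a pair of the form ``a connected component $C'$ of $G[\gamma(t)]\setminus\beta(t)$ together with $S'=N_G(C')\subseteq\beta(t)$'', so condition (i) holds by construction; a routine check (using that each such $C'$ is disjoint from its parent's bag) shows that gluing the children's subtrees under $t$ produces a genuine tree decomposition and that the adhesion of each child really equals the set $S'$ passed to it.

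Inside $\mathrm{Decompose}(S,C)$, write $G'=G[S\cup C]$ and fix a threshold $q=2^{\Oh(k)}$, large enough (in particular, larger than $|S|$ and than $k\cdot 4^k$), to be tuned. If $S\cup C$ is $(q,k)$-unbreakable in $G'$, return the single node $t$ with $\beta(t)=S\cup C$; then (ii) is exactly the case hypothesis and there is nothing below $t$. Otherwise pick a witnessing separation $(X,Y)$ of $G'$ of order at most $k$, so $|X\setminus Y|,|Y\setminus X|>q$. Let $Z$ be the union of all important $(X\setminus Y)$--$(Y\setminus X)$ separators of size at most $k$ in $G'$; by Lemma~\ref{lem:imp-seps} there are at most $4^k$ of them, each of size at most $k$ (as $X\cap Y$ is already such a separator), so $|Z|\le k\cdot 4^k=2^{\Oh(k)}$. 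Set $\beta(t):=Z\cup(X\cap Y)\cup S$, which still has size $2^{\Oh(k)}$ --- this is the source of the bound $|\sigma(u)|\le 2^{\Oh(k)}$ in (iii). Finally, for each connected component $D$ of $G'\setminus\beta(t)$ attach, as a child of $t$, the subtree returned by $\mathrm{Decompose}(N_G(D),D)$.

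The heart of the proof --- and the step I expect to be the main obstacle --- is condition (iii): that every adhesion is $(2k,k)$-unbreakable in $G[\gamma(\parent(t))]$. Since a node with no children is precisely a ``first case'' node, the parent of every non-root node is a ``second case'' node, so each adhesion is a subset of some set $Z\cup(X\cap Y)\cup S$ as above, hence of size $2^{\Oh(k)}$. For the unbreakability the key lemma to establish is a \emph{pushing} statement: a union of important $P$--$Q$ separators of size at most $k$ is $(2k,k)$-unbreakable. Indeed, if a separation $(A,B)$ of order at most $k$ placed more than $2k$ vertices of such a union on each side, then, since each vertex of the union lies on some important separator $W$ and $W$ is inclusion-minimal with inclusion-maximal set reachable from $P$, the standard uncrossing argument for important separators (replacing $W$ by the boundary of the union, resp.\ the intersection, of the relevant reachability regions, and using posimodularity of the cut function) would yield either a strictly smaller $P$--$Q$ separator or one whose reachability region from $P$ is strictly larger --- contradicting that $W$ is important. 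The gap between the thresholds $2k$ and $k$ is exactly what makes the cardinality counting in this uncrossing go through. An additional wrinkle is that an adhesion $N_G(D)$ may meet both the freshly created part $Z\cup(X\cap Y)$ and the inherited boundary $S$; here one uses that $S$ is, by the inductive statement applied one level higher, $(2k,k)$-unbreakable in $G[\gamma(\parent(\parent(t)))]$ and hence also in the subgraph $G'$, and one arranges the family defining $Z$ so that $N_G(D)$ lies inside a single union of important separators of $G'$ absorbing both contributions, to which the pushing lemma applies. Condition (ii) for ``second case'' nodes is then trivial, since $|\beta(t)|=2^{\Oh(k)}$ makes $\beta(t)$ vacuously $(2^{\Oh(k)},k)$-unbreakable in $G'$.

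It remains to bound $|V(T)|$ and the running time. Each recursive call is on a strictly smaller piece: $Z\not\subseteq S$, for otherwise $S$ would separate the two large sides $X\setminus Y$ and $Y\setminus X$ --- both of which contain a vertex of $C$, since $q>|S|$ --- inside $G'$, contradicting that $G'\setminus S=G[C]$ is connected; hence $\beta(t)$ meets $C$ and every component $D$ satisfies $D\subsetneq C$. Consequently the sets $\beta(t)\setminus\sigma(t)$ over $t\in V(T)$ are pairwise disjoint (the children of a node occupy distinct components of the complement of its bag) and each is nonempty, so $|V(T)|\le n$. For the running time one avoids feeding large sets to Lemma~\ref{lem:breaking}: the ``first/second case'' dichotomy and the choice of $(X,Y)$ are handled by minimum-cut computations and by enumerating important separators through Lemma~\ref{lem:imp-seps}, while Lemma~\ref{lem:breaking} is applied only to adhesions, whose size is $2^{\Oh(k)}$, so that each such call costs $2^{\Oh(k^2)}\cdot\Oh(n+m)$; amortizing over the $\le n$ recursive calls yields the claimed $\Oh(2^{\Oh(k^2)}n^2m)$ bound.
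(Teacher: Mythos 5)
The proposal takes a genuinely different route from the paper (witnessing separations of the whole piece plus unions of important separators between the two sides, rather than the paper's ``chips'' cut away from the adhesion $S$), but it has several gaps, one of which I believe is fatal.

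\paragraph{The adhesion size is not maintained.} You set $\beta(t) := Z \cup (X\cap Y) \cup S$ in the ``second case'' and recurse on $\mathrm{Decompose}(N_G(D),D)$ with $N_G(D)\subseteq\beta(t)$. Since $\beta(t)$ contains \emph{all} of $S$ plus up to $k\cdot 4^k + k$ new vertices, the adhesion passed down can grow by $\Theta(k\cdot 4^k)$ at every ``second case'' level, and the recursion can have depth $\Theta(n)$. So the invariant $|S|\le 2^{\Oh(k)}$ (which you assume, and which is needed both for condition~(iii) and for your progress argument via ``$q>|S|$'') is simply not preserved. This is precisely what the paper's chips machinery is for: Lemma~\ref{lem:touch-bound} and Lemma~\ref{lem:A-small-neighbourhoods} establish that each component $D$ of $G\setminus A$ has $|N(D)|\le\eta$ for a \emph{fixed} $\eta=3k(3k\cdot 4^{3k}+1)$, even though $A$ itself can be very large. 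Your construction has no analogue of this local bound: nothing prevents $N_{G'}(D)$ from containing an arbitrarily large fraction of $S$.

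\paragraph{The ``pushing lemma'' and its application.} The claim that a union of all important $P$--$Q$ separators of size at most $k$ is $(2k,k)$-unbreakable is stated as the key lemma but never proved; the sketch appeals to an ``uncrossing argument'' and ``posimodularity'' without specifying what is uncrossed with what, and it is not a standard consequence of Lemma~\ref{lem:imp-seps}. Even granting it, the adhesion $N_{G'}(D)$ is contained in $Z\cup(X\cap Y)\cup S$, and the part coming from $S$ need not lie on any important $(X\setminus Y)$--$(Y\setminus X)$ separator of $G'$, so the lemma does not apply to $N_{G'}(D)$ directly. Your proposed fix --- invoking that $S$ is $(2k,k)$-unbreakable in $G[\gamma(\parent(\parent(t)))]$ ``and hence also in the subgraph $G'$'' --- goes in the wrong direction: unbreakability is preserved under passing to \emph{super}graphs, not subgraphs (a separation of $G'$ of order $\le k$ need not extend to a separation of a supergraph of the same order, because the extra vertices may attach to both sides via $S$). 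The remaining sentence (``one arranges the family defining $Z$ so that $N_G(D)$ lies inside a single union of important separators absorbing both contributions'') is not a proof; the important separators are determined by the graph and by $(X\setminus Y, Y\setminus X)$, and you cannot freely choose them to contain $S$.

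\paragraph{Running time.} The first/second-case dichotomy requires deciding whether $V(G')=S\cup C$ is $(q,k)$-unbreakable for a large set and $q=2^{\Oh(k)}$. The paper deliberately applies Lemma~\ref{lem:breaking} only to sets of size $2^{\Oh(k)}$; your plan applies the test to the entire vertex set, and ``minimum-cut computations plus enumerating important separators'' does not obviously yield a separation of order $\le k$ with $>q$ vertices of the set on each side within the claimed time bound.

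In short, the overall recursive skeleton (cut along a small separator, recurse on components, put the separator in a bag) is sound and condition~(i) and the bound $|V(T)|\le n$ are fine, but the construction of the bag in the breakable case does not provide the crucial uniform bound on $|N(D)|$, which is exactly the role played by chips, Lemma~\ref{lemma:chips-time-bound}--Lemma~\ref{lem:touch-bound}, and Lemma~\ref{lem:A-small-neighbourhoods} in the paper; and the $(2k,k)$-unbreakability of adhesions is not established.
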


\subsection{Proof overview}

We first give an overview of the proof of Theorem~\ref{thm:decomposition-main},
ignoring the requirement that each adhesion is supposed to be $(2k,k)$-unbreakable.
As discussed in the introduction, this property is only used to improve the running time
of the algorithm, and is not essential to establish fixed-parameter tractability.

We prove the decomposition theorem using a recursive approach, similar to the standard framework used for instance by Robertson and Seymour~\cite{gm13} or by Marx and Grohe~\cite{marx-grohe}.
That is, in the recursive step
we are given a graph $G$ together with a relatively small set $S\subseteq V(G)$ (i.e., of size
bounded by $2^{\Oh(k)}$), and
our goal is to construct a decomposition of $G$ satisfying the requirements
of Theorem~\ref{thm:decomposition-main} with an additional property that $S$
is contained in the root bag of the decomposition.
The intention is that the recursive step is invoked on some subgraph of the input graph,
and the set $S$ is the adhesion towards the decomposition of the rest of the graph.

Henceforth we focus on one recursive step, and consider three cases.
In the base case, if $|S| \leq 3k$, we add an arbitrary vertex to $S$ and repeat.
In what follows, we assume $|S| > 3k$.

First, assume that $S$ is $(2k,k)$-breakable in $G$, and let $(X,Y)$ be the witnessing
separation. We proceed in a standard manner (cf.~\cite{gm13}): we create a root bag
$A := S \cup (X \cap Y)$, for each connected component $C$ of $G \setminus A$
recurse on $(N_G[C], N_G(C))$, and glue the obtained trees as children of the root bag.
It is straightforward from the definition of the witnessing separation that
in every recursive call we have $|N_G(C)| \leq |S|$.
Moreover, clearly $|A| \leq |S|+k$ and hence $A$ is appropriately unbreakable.

In the last, much more interesting case the adhesion $S$ turns out to be $(2k,k)$-unbreakable.
Hence, any separation $(X,Y)$ in $G$ partitions $S$ very unevenly: almost the entire set $S$,
up to $\Oh(k)$ elements, lies on only one side of the separation. 
Let us call this side the ``big'' side, and the second side the ``small'' one.

The main idea now is as follows: if, for each $v \in V(G)$, we mark all important separators
of size $\Oh(k)$ between $v$ and $S$, then the marked vertices
will separate all ``small'' sides
of separations from the set $S$.
Let $B$ be the set of marked vertices and let $A$ be the set of all vertices of $G$ that are either
in $B \cup S$, or are not separated from $S$ by any of the considered important separator.
We observe that the strong structure of important separators ---
in particular, the single-exponential bound on the number of important separators for one vertex $v$
--- allows us to argue that each connected component $C$ of $G \setminus A$ that
is separated by some important separator from $S$ has only bounded number of neighbours in
$A$.
Moreover, the fact that we cut all ``small'' sides of separations implies
that $A$ is appropriately unbreakable in $G$.
Hence, we may recurse, for each connected component $C$ of $G \setminus A$
that is separated by some important separator from $S$, on $(N_G[C], N_G(C))$,
and take $A$ as a root bag.

The section is organised as follows.
In Section~\ref{ssec:chips} we define formally the notion of \emph{chips}, that are parts
of the graph cut out by important separators, 
 and provide all the properties that play crucial role in Section~\ref{ssec:decomposition-local}.
 In Section~\ref{ssec:decomposition-local} we also show how to proceed with the case
 $S$ being unbreakable, that is, how extract the root bag containing $S$ by cutting away all the chips.
 In Section~\ref{ssec:strengthen} we perform some technical augmentation to ensure
 that the adhesions are $(2k,k)$-unbreakable.
 Finally in Section~\ref{ssec:decomposition-main} we combine the obtained results and construct the main decomposition of Theorem~\ref{thm:decomposition-main}.

\subsection{Chips}
\label{ssec:chips}

In this subsection we define fragments 
of the graph which are easy to chip (i.e. cut out of the graph)
from some given set of vertices $S$,
and show their basic properties.

\begin{definition}[\bf chips]
\label{def:chip}
For a fixed set of vertices $S \subseteq V$,
a subset $C \subseteq V$ is called a {\em chip}, if 
\begin{enumerate}[(a)]
  \item \label{def:chip:a} $G[C]$ is connected,
  \item \label{def:chip:b} $|N(C)| \le 3k$,
  \item \label{def:chip:c} $N(C)$ is an important $C-S$ separator.
\end{enumerate}
Let \C be the set of all inclusionwise maximal chips.
\end{definition}

The following lemma is straightforward from the definition of important separators.
\begin{lemma}\label{lem:chips-impseps}
For any nonempty set $C \subseteq V(G)$ such that $G[C]$ is connected, the following conditions are equivalent.
\begin{enumerate}[(i)]
\item $N(C)$ is an important $C-S$ separator;
\item for any $v \in C$, $N(C)$ is an important $v-S$ separator;
\item there exists $v \in C$ such that $N(C)$ is an important $v-S$ separator.
\end{enumerate}
\end{lemma}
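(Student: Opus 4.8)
The plan is to prove Lemma~\ref{lem:chips-impseps} by establishing the cycle of implications $(i)\Rightarrow(ii)\Rightarrow(iii)\Rightarrow(i)$, where the only non-immediate step is $(i)\Rightarrow(ii)$; the implication $(ii)\Rightarrow(iii)$ is trivial since $C$ is nonempty, and $(iii)\Rightarrow(i)$ follows because an important $v$--$S$ separator $W$ with $v\in C$ is in particular a $v$--$S$ separator, and since $G[C]$ is connected and $W$ is disjoint from $C$ (being disjoint from $v$'s component reachability set after deletion — more precisely, $W$ separates $v$ from $S$, so no vertex of $C$ can lie in $W$ as that would disconnect $C$ without separating it from $S$; one needs here that $W=N(C)$ so that $G\setminus W$ leaves $C$ as a union of components), the set $R_{G\setminus W}(v)$ equals $R_{G\setminus W}(C)$, and so the minimality/non-domination conditions transfer verbatim from ``$v$--$S$'' to ``$C$--$S$''.

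The heart of the argument is $(i)\Rightarrow(ii)$. First I would record the elementary observation that if $W:=N(C)$ is a $C$--$S$ separator, then for \emph{every} $v\in C$ we have $R_{G\setminus W}(v)=R_{G\setminus W}(C)=C$ (using connectivity of $G[C]$ and the fact that $W=N(C)$, so $C$ is exactly a union of connected components of $G\setminus W$, hence reachability from any of its vertices recovers all of $C$). In particular $W$ is a $v$--$S$ separator, and it is inclusion-wise minimal as such: if some $W'\subsetneq W$ were still a $v$--$S$ separator, then since every vertex of $W=N(C)$ has a neighbour in $C$, and $C=R_{G\setminus W}(v)\subseteq R_{G\setminus W'}(v)$, any $w\in W\setminus W'$ would be reachable from $v$ in $G\setminus W'$ and also adjacent (via $w$'s neighbour in $C$) — wait, this shows $w\in R_{G\setminus W'}(v)$, and we must then argue this path reaches $S$, contradicting that $W'$ separates; I would make this precise by noting that $W=N(C)$ is an inclusion-wise minimal $C$--$S$ separator (part of hypothesis $(i)$, since ``important separator'' requires minimality), and combining with $R_{G\setminus W}(v)=C$ to conclude minimality of $W$ as a $v$--$S$ separator directly.

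It then remains to verify the non-domination condition: there is no $v$--$S$ separator $W''$ with $|W''|\le|W|$ and $R_{G\setminus W}(v)\subsetneq R_{G\setminus W''}(v)$. Suppose such $W''$ exists. Since $R_{G\setminus W}(v)=C$, the set $R':=R_{G\setminus W''}(v)$ strictly contains $C$, and $N(R')\subseteq W''$ (as $W''$ separates $R'$ from the rest, in particular from $S$). Now $N(R')$ is a $C$--$S$ separator too (any $C$--$S$ path leaves $R'$ through $N(R')$ before reaching $S$), of size at most $|W''|\le|W|$, with $R_{G\setminus W}(C)=C\subsetneq R'\subseteq R_{G\setminus N(R')}(C)$ — contradicting that $W=N(C)$ is an important $C$--$S$ separator. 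I expect the main obstacle to be bookkeeping the precise relationship between ``reachable set from $v$'' and ``reachable set from $C$'' under the various separators, and making sure the passage $W''\rightsquigarrow N(R_{G\setminus W''}(v))$ legitimately produces a \emph{minimal} $C$--$S$ separator that contradicts importance; everything else is routine once the identity $R_{G\setminus N(C)}(v)=C$ is in hand.
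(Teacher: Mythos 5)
Your proof is correct, and it takes what is surely the intended route: the paper itself gives no proof of this lemma, merely declaring it ``straightforward from the definition of important separators,'' and your argument is precisely the routine unwinding of that definition using the key identity $R_{G\setminus N(C)}(v) = R_{G\setminus N(C)}(C) = C$ for all $v\in C$ (which holds because $C$ is connected and $N(C)$ is its full frontier, so $C$ is exactly one component of $G\setminus N(C)$).

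A few small clean-up remarks. In $(iii)\Rightarrow(i)$, your parenthetical justification for $W\cap C=\emptyset$ is circular and unnecessary: $W=N(C)$ is disjoint from $C$ directly by the definition of vertex neighbourhood. Your half-abandoned first attempt at the minimality step of $(i)\Rightarrow(ii)$ can simply be dropped; the second route (any $W'\subsetneq W$ that is not a $C$--$S$ separator admits a $C$--$S$ path, which you extend through $C$ to a $v$--$S$ path since $C$ is connected and disjoint from $W$) is complete. In the non-domination step of $(i)\Rightarrow(ii)$, the detour through $N(R')$ is sound but superfluous: once you note that $C\subsetneq R'=R_{G\setminus W''}(v)$ forces $W''\cap C=\emptyset$ and hence $R_{G\setminus W''}(C\setminus W'')=R'$, the separator $W''$ itself already dominates $W$ as a $C$--$S$ separator, giving the contradiction directly. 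Finally, your closing worry about ``legitimately produces a \emph{minimal} $C$--$S$ separator'' is unfounded: the paper's definition of importance quantifies over \emph{all} $X$--$Y$ separators $W'$ with $|W'|\le|W|$, not just minimal ones, so no minimality of the dominating separator is required.
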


%

Note also that for a connected set of vertices $D$ and any important $D-S$ separator $Z$ of size at most $3k$ that is disjoint with $D$, the set of vertices reachable from $D$ in $G\setminus Z$ forms a chip. We now show how to enumerate inclusion-wise maximal chips.
\begin{lemma}
\label{lemma:chips-time-bound}
Given a set $S \subseteq V(G)$ one can compute the set~$\C$
of all inclusion-wise maximal chips in $\Oh(2^{\Oh(k)} n(n+m))$ time. In particular, $|\C|\leq 4^{3k} n$.
\end{lemma}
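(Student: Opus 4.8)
The plan is to enumerate inclusion-wise maximal chips by first computing, for every vertex $v \in V(G)$, all important $v$--$S$ separators of size at most $3k$, and then, for each such separator $Z$, forming the connected component of $v$ in $G \setminus Z$. By Lemma~\ref{lem:chips-impseps}, every chip $C$ has the property that $N(C)$ is an important $v$--$S$ separator for \emph{every} $v \in C$; conversely, the remark preceding this lemma says that for any important $D$--$S$ separator $Z$ of size at most $3k$ disjoint from a connected set $D$, the set $R_{G \setminus Z}(D)$ of vertices reachable from $D$ is a chip. So every chip arises this way, and it suffices to iterate over all $v$, all important $v$--$S$ separators $Z$ of size $\le 3k$, take the chip $C_{v,Z} = R_{G\setminus Z}(\{v\})$, and then keep only the inclusion-wise maximal ones among the sets produced.

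For the running time, first note that if $v \in S$ then $v$ belongs to no chip (the separator would have to separate $v$ from itself), so we only run the enumeration for $v \notin S$. For a fixed $v$, Lemma~\ref{lem:imp-seps} gives us all important $v$--$S$ separators of size at most $3k$ in time $\Oh(4^{3k}\cdot 3k \cdot (n+m))$, and there are at most $4^{3k}$ of them. For each such separator $Z$ we compute $R_{G\setminus Z}(\{v\})$ by a single graph search in $\Oh(n+m)$ time. Summing over all $v \in V(G)$ this is $\Oh(n \cdot 4^{3k} \cdot (n+m)) = \Oh(2^{\Oh(k)} n(n+m))$ time, and the total number of candidate chips generated is at most $4^{3k} n$; pruning to the inclusion-wise maximal ones only decreases this count, so $|\C| \le 4^{3k} n$ as claimed. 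The pruning itself can be done within the same time budget by a standard pass (e.g.\ sort candidates by size and discard those contained in a larger one, using that distinct maximal chips with nonempty intersection would merge into a larger connected chip, hence maximal chips with a common vertex are equal).

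The one point that needs a short argument is correctness of the enumeration, i.e.\ that every inclusion-wise maximal chip $C$ is equal to $C_{v,Z}$ for some $v \in C$ and some important $v$--$S$ separator $Z$ of size $\le 3k$. Take $C \in \C$ and any $v \in C$. By definition of a chip, $N(C)$ is an important $C$--$S$ separator of size $\le 3k$, and $N(C) \cap C = \emptyset$; by Lemma~\ref{lem:chips-impseps}(ii) it is an important $v$--$S$ separator, so it appears in the list produced for $v$. Since $G[C]$ is connected and disjoint from $N(C)$, and every vertex of $C$ reachable from $v$ stays inside $C$ (edges leaving $C$ go only to $N(C)$), we get $R_{G\setminus N(C)}(\{v\}) = C$, i.e.\ $C = C_{v,N(C)}$. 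Hence $C$ is among the generated candidates, and being inclusion-wise maximal it survives the pruning. The main (mild) obstacle is just bookkeeping the pruning step efficiently and observing that each vertex of $S$ can safely be skipped; everything else is a direct combination of Lemma~\ref{lem:imp-seps}, Lemma~\ref{lem:chips-impseps}, and the remark on chips stated just before the lemma.
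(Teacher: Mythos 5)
Your enumeration and correctness argument match the paper's approach: compute all important $v$--$S$ separators of size at most $3k$ for each $v$, form the reachable set, and observe via Lemma~\ref{lem:chips-impseps} that every maximal chip arises this way. The counting and the $\Oh(n \cdot 4^{3k}(n+m))$ generation time are also fine.

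However, the justification you give for the pruning step contains a genuine error. You claim that ``distinct maximal chips with nonempty intersection would merge into a larger connected chip, hence maximal chips with a common vertex are equal.'' This is false. If $C_1$ and $C_2$ are distinct maximal chips with $C_1\cap C_2 \neq\emptyset$, the union $C_1\cup C_2$ is connected, but there is no reason its neighbourhood should have size at most $3k$ (we only get $|N(C_1\cup C_2)|\le 6k$), nor that it be an important separator from $S$. Indeed, a vertex $v\in C_1\cap C_2$ gives two \emph{distinct} important $v$--$S$ separators $N(C_1)\neq N(C_2)$, which is perfectly possible. The paper itself relies on this: the proof of Lemma~\ref{lemma:touch} explicitly treats the case of two distinct maximal chips with nonempty intersection, and the definition of \emph{touching} is designed precisely to accommodate overlapping maximal chips. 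So ``maximal chips with a common vertex are equal'' contradicts the setup of Section~\ref{ssec:chips}.

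Without that claim, the pruning as you describe it --- sort all $\le 4^{3k}n$ candidates globally and discard those contained in a larger one --- needs $\Oh\bigl((4^{3k}n)^2\cdot n\bigr)$ time for pairwise containment tests, which overshoots the claimed $\Oh(2^{\Oh(k)}n(n+m))$ budget for sparse graphs. The fix used in the paper is to prune \emph{per source vertex}: for each $v$, keep only the inclusion-wise maximal elements of $\mathcal{A}_v=\{R_{G\setminus Z}(v) : Z \text{ an important } v\text{--}S \text{ separator}, |Z|\le 3k\}$. This is correct because if some $C\in\mathcal{A}_v$ is not globally maximal, any strictly larger chip $C'\supsetneq C$ also contains $v$ and, by Lemma~\ref{lem:chips-impseps}, lies in $\mathcal{A}_v$; hence global maximality is equivalent to maximality within $\mathcal{A}_v$. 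Each $\mathcal{A}_v$ has at most $4^{3k}$ elements, so per-vertex pruning costs $\Oh(4^{6k}n)$, and removing duplicates across vertices (e.g.\ via a trie, as the paper does) stays within $\Oh(2^{\Oh(k)}n(n+m))$. Replacing your pruning step with this per-vertex version repairs the proof.
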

\begin{proof}
For any $v \in V$, we use Lemma~\ref{lem:imp-seps} to enumerate the set $\mathcal{Z}_v$ of all important $v-S$ separators of size at most $3k$.
Recall that for any $Z \in \mathcal{Z}_v$, the set $R_{G \setminus Z}(v)$ is the vertex set of the connected component
of $G \setminus Z$ containing $v$.
Define $\mathcal{A}_v = \{R_{G \setminus Z}(v) : Z \in \mathcal{Z}_v\}$
and let $\mathcal{C}_v$ be the set of inclusion-wise maximal elements of $\mathcal{A}_v$.
By Lemma~\ref{lem:chips-impseps} we infer that if some chip $C\in \mathcal{A}_v$ is not inclusion-wise maximal, then there exists $C'\in \mathcal{A}_v$ such that $C\subsetneq C'$. Therefore, we have that $\C = \bigcup_{v \in V(G)} \mathcal{C}_v$.

As $|\mathcal{Z}_v| \leq 4^{3k}$ for any $v \in V(G)$, the bound on $|\C|$ follows.
For each $v \in V(G)$, the sets $\mathcal{Z}_v$, $\mathcal{A}_v$ and $\mathcal{C}_v$ can be computed
in $\Oh(2^{\Oh(k)} (n+m))$ time in a straightforward manner. The computation of $\C = \bigcup_{v \in V(G)} \mathcal{C}_v$ in $\Oh(2^{\Oh(k)} n(n+m))$ time can be done by inserting all the elements of $\bigcup_{v \in V(G)} \mathcal{C}_v$ into a prefix tree (trie), each in $\Oh(n)$ time, and ignoring encountered duplicates.
\end{proof}

\begin{definition}[\bf chips touching]
We say that two chips $C_1, C_2 \in \C, C_1 \neq C_2$, {\emph{touch}} each other, denoted $C_1\touch C_2$,
if $C_1 \cap C_2 \neq \emptyset$ or $E(C_1,C_2) \neq \emptyset$.
\end{definition}

The following lemma provides an alternative definition of touching that we will find useful.

\begin{lemma}
\label{lemma:touch}
$C_1 \in \C$ touches $C_2 \in \C$ if and only if $N(C_1) \cap C_2 \neq \emptyset$.
\end{lemma}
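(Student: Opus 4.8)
The plan is to prove both implications directly from the definitions, the only non-trivial ingredient being the inclusion-wise maximality of the chips in $\C$.

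For the ``if'' direction I would argue as follows. Suppose $v \in N(C_1) \cap C_2$. Since $N(C_1)$ denotes the open neighbourhood of $C_1$, we have $v \notin C_1$, so in particular $C_1 \neq C_2$, and $v$ has some neighbour $u \in C_1$. If $u \in C_2$ then $u \in C_1 \cap C_2$, so $C_1 \touch C_2$; otherwise $u \in C_1 \setminus C_2$ while $v \in C_2 \setminus C_1$, so $uv \in E(C_1,C_2)$ and again $C_1 \touch C_2$.

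For the ``only if'' direction, assume $C_1 \touch C_2$ (so $C_1 \neq C_2$ by definition of touching) and split into two cases. If $C_1 \cap C_2 = \emptyset$, the touching is witnessed by an edge $uv$ with $u \in C_1$ and $v \in C_2$; since $v \notin C_1$, every neighbour of a $C_1$-vertex outside $C_1$ lies in $N(C_1)$, so $v \in N(C_1) \cap C_2$, as required. If instead $C_1 \cap C_2 \neq \emptyset$, suppose for contradiction that $N(C_1) \cap C_2 = \emptyset$. Then $C_2$ is a connected set meeting $C_1$ but disjoint from $N(C_1)$; as every edge leaving $C_1$ has its outer endpoint in $N(C_1)$, a connected subgraph that is disjoint from $N(C_1)$ and meets $C_1$ cannot reach any vertex of $V \setminus (C_1 \cup N(C_1))$. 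Hence $C_2 \subseteq C_1$, and since $C_1 \neq C_2$ this is a strict inclusion $C_2 \subsetneq C_1$; but $C_1$ is itself a chip, contradicting the inclusion-wise maximality of $C_2$. Therefore $N(C_1) \cap C_2 \neq \emptyset$.

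The step I expect to be the crux is precisely this last one: excluding the possibility $C_2 \subseteq C_1$ when $C_1$ and $C_2$ share a vertex. This is where maximality of the chips in $\C$ is essential — the equivalence would fail for arbitrary (non-maximal) chips, since a proper connected subset of $C_1$ whose boundary happens to be an important separator would meet $C_1$ without meeting $N(C_1)$. Everything else is a routine unwinding of the definitions of $N(\cdot)$ and of the touching relation.
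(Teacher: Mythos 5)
Your proof is correct and follows essentially the same route as the paper's: the ``if'' direction is a one-line unwinding of the definition of $N(\cdot)$, and the ``only if'' direction splits on whether $C_1 \cap C_2$ is empty, using inclusion-wise maximality of $\C$ together with connectivity of $G[C_2]$ to handle the overlapping case. The only cosmetic difference is that the paper argues this case directly (maximality gives $C_2\setminus C_1\neq\emptyset$, then connectivity of $G[C_2]$ yields a crossing edge landing in $N(C_1)\cap C_2$) whereas you run the equivalent argument by contradiction, showing $C_2\subseteq C_1$ would be forced; both are the same idea.
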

\begin{proof}
From right to left, if $v\in N(C_1) \cap C_2$ then there exists a neighbour $u$ of $v$ that belongs to $C_1$, and consequently $uv\in E(C_1,C_2)$.

From left to right, first assume $C_1 \cap C_2 \neq \emptyset$.
Since $\C$ contains only inclusionwise maximal chips, we have that $C_2 \setminus C_1 \neq \emptyset$.
By property~(\ref{def:chip:a}) of Definition~\ref{def:chip} the graph $G[C_2]$ is connected, hence there 
is an edge between $C_2 \setminus C_1$ and $C_1 \cap C_2$ inside $G[C_2]$. This proves $N(C_1) \cap C_2 \neq \emptyset$.

In the other case, assume that $C_1 \cap C_2 = \emptyset$ but there exists $uv \in E(C_1,C_2)$ such that $u\in C_1$ and $v\in C_2$. Since $C_1 \cap C_2 = \emptyset$, it follows that $v\notin C_1$, and hence $v\in N(C_1)\cap C_2$.
\end{proof}

The next result provides the most important tool for bounding the size of adhesions in the constructed decomposition.

\begin{lemma}
\label{lem:touch-bound}
Any chip $C \in \C$ touches at most $3k \cdot 4^{3k}$ other chips of $\C$.
\end{lemma}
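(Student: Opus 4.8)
The plan is to fix a chip $C \in \C$ and bound the number of chips $C'$ that touch it. By Lemma~\ref{lemma:touch}, a chip $C' \in \C$ touches $C$ if and only if $N(C') \cap C \neq \emptyset$; equivalently, if and only if $C$ contains a vertex of $N(C')$. So it suffices to count chips $C'$ whose neighbourhood intersects $C$. The natural idea is: for each vertex $v \in C$, count how many chips $C'$ have $v \in N(C')$, and multiply by $|C|$ — but this is hopeless since $|C|$ is unbounded. The right fix is to route the counting through the \emph{separator} $N(C)$ rather than through the interior of $C$. First I would observe that if $C'$ touches $C$ then, since $C$ is connected with $N(C)$ an important $C-S$ separator, either $C' \subseteq C$ (impossible, as both are inclusion-wise maximal and distinct) or $C'$ meets $V(G) \setminus (C \cup N(C))$, and in either direction a short argument (using that $N(C)$ separates $C$ from $S$, and hence from the ``outside'') forces $N(C') \cap N(C) \neq \emptyset$ or $N(C') \supseteq$ part of $N(C)$ — in any case $N(C')$ must touch the boundary set $N(C)$, which has size at most $3k$.

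Thus the key reduction is: every chip $C'$ touching $C$ satisfies $N(C') \cap (C \cup N(C)) \neq \emptyset$ in a way that pins $N(C')$ to within distance-zero of $N(C)$; more precisely I will argue $N(C') \cap N(C) \neq \emptyset$ or, failing that, that $C'$ together with $C$ would contradict maximality or the important-separator property. Granting this, for each of the at most $3k$ vertices $u \in N(C)$, I count the chips $C'$ with $u \in N(C')$. By Lemma~\ref{lem:chips-impseps}, $N(C')$ being an important $C'-S$ separator means $N(C')$ is an important $v-S$ separator for every $v \in C'$, and in particular $N(C')$ is one of the important separators ``seen'' from any fixed interior vertex. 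To bound chips through a fixed boundary vertex $u$: if $u \in N(C')$, then $C'$ is (a component of) $G \setminus N(C')$ reachable from some vertex, and $N(C')$ is an important separator of size $\le 3k$ containing $u$. I would then use Lemma~\ref{lem:imp-seps}: the number of important $u'-S$ separators of size $\le 3k$ is at most $4^{3k}$; since every chip $C'$ is of the form $R_{G \setminus Z}(v)$ for an important $v-S$ separator $Z$, and distinct maximal chips correspond to distinct reachability sets, the chips containing $u$ in their neighbourhood are controlled by the important separators incident to $u$ — a family of size at most $4^{3k}$. Multiplying the $\le 3k$ choices of $u \in N(C)$ by the $\le 4^{3k}$ chips per boundary vertex gives the bound $3k \cdot 4^{3k}$.

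The main obstacle I anticipate is the first step: cleanly proving that every chip touching $C$ must have its neighbourhood meet the \emph{small} boundary set $N(C)$, rather than just meeting the unbounded interior $C$. The subtlety is that a touching chip $C'$ could, a priori, poke into the interior of $C$ at many places while keeping $N(C')$ disjoint from $N(C)$. To rule this out I would exploit the important-separator structure on both sides: $N(C)$ is an important $C-S$ separator, so $C$ is ``maximally reachable'' from its interior subject to a cut of its size; if $C'$ intersected the interior of $C$ but $N(C')$ avoided $N(C)$, then one could argue $C \cup C'$ (or a component thereof) is a strictly larger connected set separated from $S$ by a separator no larger than $N(C)$ — here one has to be careful about sizes, using $|N(C)|,|N(C')| \le 3k$ and a submodularity/uncrossing argument on the cuts — contradicting either the importance of $N(C)$ or the maximality of $C$ in $\C$. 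Getting the uncrossing bookkeeping right (so that the merged separator does not blow up past $3k$, or arguing we do not need that because we only need the reachability-set containment) is where the real work lies; the rest is an application of Lemmas~\ref{lem:imp-seps}, \ref{lem:chips-impseps}, and \ref{lemma:touch}.
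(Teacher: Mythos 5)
Your high-level ingredients are right --- use Lemma~\ref{lemma:touch} to detect touching, use Lemma~\ref{lem:chips-impseps} to turn $N(C')$ into an important single-vertex$-S$ separator, and then invoke Lemma~\ref{lem:imp-seps} to get a $4^{3k}$ bound --- but the route you took through them has two genuine gaps, and both come from applying Lemma~\ref{lemma:touch} in the ``wrong'' direction.

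You start from ``$C'$ touches $C$ iff $N(C')\cap C\neq\emptyset$'' and then try to shrink the unbounded set $C$ down to $N(C)$ by arguing $N(C')\cap N(C)\neq\emptyset$. That intermediate claim is not a consequence of the definitions (a touching chip $C'$ can swallow part of $N(C)$ in its interior while $N(C')$ sits entirely inside $C$ and/or entirely outside $C\cup N(C)$, so nothing forces the two boundaries to meet), and your sketch of an uncrossing argument to rescue it is exactly the kind of heavy machinery the lemma is designed to avoid. Second, even granting that claim, the counting step is unsupported: you fix a vertex $u\in N(C')\cap N(C)$ and want to say ``there are at most $4^{3k}$ important separators through $u$,'' but Lemma~\ref{lem:imp-seps} bounds important $X$--$Y$ separators for \emph{fixed} terminals $X,Y$, not important separators \emph{containing} a fixed vertex. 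Since $u\in N(C')$ means $u\notin C'$, you cannot view $N(C')$ as an important $u$--$S$ separator, and the bound you invoke does not apply.

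The fix is to use the other, symmetric application of Lemma~\ref{lemma:touch}: $C$ touches $C'$ iff $N(C)\cap C'\neq\emptyset$. Then for each touching $C'$ there is some $v\in N(C)$ with $v\in C'$ (not in $N(C')$!). For this fixed $v$, Lemma~\ref{lem:chips-impseps} says $N(C')$ is an important $v$--$S$ separator, and $C'=R_{G\setminus N(C')}(v)$ is uniquely determined by $v$ and $N(C')$. Now Lemma~\ref{lem:imp-seps} applies cleanly with source $\{v\}$: there are at most $4^{3k}$ important $v$--$S$ separators of size at most $3k$, so at most $4^{3k}$ chips containing $v$. Summing over the at most $3k$ vertices of $N(C)$ gives the bound, with no uncrossing and no claim about $N(C')\cap N(C)$.
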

\begin{proof}
Assume that $C$ touches some $C' \in \C$. 
By Lemma~\ref{lemma:touch} there exists a vertex $v \in N(C) \cap C'$.
Observe that since $N(C')$ is an important $C'-S$ separator,
then $N(C')$ is also an important $v-S$ separator.
By Lemma~\ref{lem:imp-seps} there are at most $4^{3k}$ important $v-S$
separators of size at most $3k$.
Since $|N(C)| \le 3k$ (by property~(\ref{def:chip:b}) of Definition~\ref{def:chip}), we infer that $C$ touches
at most $3k \cdot 4^{3k}$ chips from $\C$.
\end{proof}

\subsection{Local decomposition}
\label{ssec:decomposition-local}

Equipped with basic properties of chips
we are ready to prove the main step of 
the decomposition part of the paper.
In what follows we show that given a $(2k,k)$-unbreakable set $S$ 
of size bounded in $k$ one can find a (potentially large)
unbreakable part $A \subseteq V$ of the graph, such that $S \subseteq A$
and each connected component of $G \setminus A$ is adjacent
to a small number of vertices of $A$. In what follows, let us define 
\begin{eqnarray*}
\eta & = & 3k \cdot (3k\cdot 4^{3k}+1), \\
\tau & = & (3k)^2\cdot 8^{3k}+2k.
\end{eqnarray*}

\begin{theorem}
\label{thm:decomposition-local}
There is an $\Oh(2^{O(k)} nm)$ time algorithm that, given
a connected graph $G$ together with an integer $k$ and 
a $(2k,k)$-unbreakable set $S \subseteq V(G)$, computes
a set $A \subseteq V(G)$ such that:
\begin{enumerate}[(a)]
  \item \label{item:0} $S \subseteq A$,
  \item \label{item:1} for each connected component $D$ of $G \setminus A$
  we have $|N_G(D)| \le \eta$,
  \item \label{item:2} $A$ is $(\tau,k)$-unbreakable in $G$, and 
  \item \label{item:0b}  if $|S| > 3k$, $G \setminus S$ is connected and $N(V(G) \setminus S) = S$, then $S \neq A$.
\end{enumerate}
\end{theorem}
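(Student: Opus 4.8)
\textbf{Proof plan for Theorem~\ref{thm:decomposition-local}.}
The plan is to build $A$ by ``cutting away'' all maximal chips. Concretely, let $\C$ be the set of all inclusion-wise maximal chips for the set $S$, computed via Lemma~\ref{lemma:chips-time-bound}, and set
\[
A := V(G) \setminus \bigcup_{C \in \C} \bigl(C \setminus N(C)\bigr).
\]
Since no chip contains a vertex of $S$ (a chip is separated from $S$ by $N(C)$, and $N(C)$ is disjoint from the connected set $C$ that reaches $S$ only through $N(C)$), we immediately get $S \subseteq A$, which is property~(\ref{item:0}).

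For property~(\ref{item:1}), I would analyse a connected component $D$ of $G \setminus A$. By construction $D$ is contained in $\bigcup_{C\in\C}(C\setminus N(C))$, and since $D$ is connected while distinct ``chip interiors'' $C \setminus N(C)$ can only be joined through $A$-vertices, $D$ lies inside a single chip interior $C_0 \setminus N(C_0)$ for some maximal chip $C_0$; in fact $D$ is a connected component of $G[C_0 \setminus N(C_0)]$, so $N_G(D) \subseteq N(C_0) \cup \{$ vertices of $C_0$ adjacent to $D$ but separated from it$\}$ --- here I must be a little careful: $D$ need not equal $C_0\setminus N(C_0)$, but every neighbour of $D$ is either in $N(C_0)$ or inside some \emph{other} chip $C'$ whose interior meets $C_0$. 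Using Lemma~\ref{lemma:touch}, each such $C'$ touches $C_0$, and by Lemma~\ref{lem:touch-bound} there are at most $3k\cdot 4^{3k}$ of them; each contributes at most $|N(C')|\le 3k$ boundary vertices, plus the $\le 3k$ vertices of $N(C_0)$ itself. This gives $|N_G(D)| \le 3k + 3k\cdot 4^{3k}\cdot 3k$, which after checking constants is at most $\eta = 3k\cdot(3k\cdot 4^{3k}+1)$. The main obstacle in this step is pinning down exactly which chips' boundaries can appear in $N_G(D)$ and confirming that the touching bound of Lemma~\ref{lem:touch-bound} really does cover all of them --- this requires a clean argument that any vertex of $A$ adjacent to $D$ lies on the boundary of $C_0$ or of a chip touching $C_0$.

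For property~(\ref{item:2}), unbreakability of $A$, I would argue by contradiction: suppose $(X,Y)$ is a separation of order $\le k$ with more than $\tau$ vertices of $A$ on each side. Since $S$ is $(2k,k)$-unbreakable, almost all of $S$ (all but $\le 2k$ vertices) lies on one side, say the ``big'' side $X$; call the other side ``small''. Take a vertex $v$ in $A$ on the small side that is far from $S$ in the sense that $v$ is separated from $S\setminus(X\cap Y)$ by the cut $X\cap Y$ together with the $\le 2k$ stray $S$-vertices on the small side --- so $v$ is separated from $S$ by a set of size $\le 3k$. Then the component of $v$ in $G$ minus an \emph{important} $v$--$S$ separator dominating this cut is a chip containing $v$, contradicting $v \in A$ (which by definition avoids all chip interiors). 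Making this precise is the heart of the proof: I need to choose the component/important-separator carefully so that it witnesses $v$ lying in a maximal chip, and I need the counting to show that more than $\tau = (3k)^2 8^{3k}+2k$ vertices of $A$ on the small side forces such a $v$ to exist outside $N(C\text{'s})$. I expect this --- translating ``large $A$-side of a small separation'' into ``a vertex of $A$ trapped inside a chip'' --- to be the main obstacle.

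Finally, for property~(\ref{item:0b}): assume $|S|>3k$, $G\setminus S$ connected, and $N(V(G)\setminus S)=S$, and suppose for contradiction $A = V(G)$, i.e.\ no chip has nonempty interior. I would pick any vertex $v \in V(G)\setminus S$ and consider an important $v$--$S$ separator $Z$ of size at most $3k$ (one exists: $S$ itself is a $v$--$S$ separator of size possibly large, but a minimum one has size $\le |S|$; I actually need size $\le 3k$, which follows because $N(v)\cap$ something... — here I would instead use that $V(G)\setminus S$ is connected and $N(V(G)\setminus S)=S$, so taking $Z$ to be a minimal important $v$--$S$ separator, $R_{G\setminus Z}(v)\subseteq V(G)\setminus S$ is a chip with nonempty interior unless $Z$ already ``is'' all of a tiny neighbourhood). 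The cleanest route: since $G\setminus S$ is nonempty and connected and all its neighbours are in $S$, some small important $v$--$S$ separator exists only if $|S|$ is small; when $|S|>3k$ one shows directly that there is a vertex $v$ with a genuine chip around it, so $A\neq V(G)=$, contradiction. I would double-check this boundary case against the exact definition of chip, as it is the most fiddly part despite being conceptually simple.
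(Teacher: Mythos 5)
The key step — the definition of $A$ — is wrong, and the error propagates to properties~(\ref{item:1}) and~(\ref{item:0b}). You set $A := V(G)\setminus\bigcup_{C\in\C}(C\setminus N(C))$, but since $N(C)=N_G(C)$ is disjoint from $C$ by definition of open neighbourhood, $C\setminus N(C)=C$, so your $A$ is simply $V(G)\setminus\bigcup_{C\in\C}C$. The paper's $A$ is strictly larger: it additionally keeps every vertex that lies on the boundary $N(C')$ of \emph{some} chip $C'$, even if that vertex also lies inside another chip $C$. Concretely, $V(G)\setminus A_{\text{paper}}=\bigl(\bigcup_C C\bigr)\setminus\bigcup_{C'}N(C')$, whereas $V(G)\setminus A_{\text{yours}}=\bigcup_C C$. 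These extra boundary vertices act as a \emph{fence}: with the paper's $A$, every connected component $D$ of $G\setminus A$ is contained in a single chip (the paper's Claim~\ref{claim:chip-containment}), because $A$ blocks all edges leaving a chip. With your $A$, that fails. If two maximal chips $C_1,C_2$ touch — either overlapping, or joined by an edge $uv$ with $u\in C_1$, $v\in C_2$ — then both $u$ and $v$ lie in $\bigcup_C C=V(G)\setminus A$, so a component $D$ of $G\setminus A$ can span $C_1\cup C_2$, and in general a long chain of touching chips. Your claim that ``distinct chip interiors can only be joined through $A$-vertices'' is therefore false, and so is the conclusion that $D$ lies inside a single chip. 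Consequently, $|N_G(D)|$ cannot be bounded by $\eta$: nothing stops $D$ from threading through arbitrarily many chips and accumulating a huge boundary. This is why the paper's proof of (\ref{item:1}) is only a few lines and yours cannot close the gap. Also note your sentence ``every neighbour of $D$ is either in $N(C_0)$ or inside some other chip $C'$'' contradicts your own definition: a neighbour of $D$ lies in $A$, and by your definition $A$ contains no vertex of any chip.

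Property~(\ref{item:0b}) also fails for your $A$, and your sketch for it argues toward the wrong goal. You assume toward contradiction that $A=V(G)$, but the claim to prove is $A\neq S$ (given $S\subseteq A$, we need $A\supsetneq S$), and indeed $A=S$ is the dangerous case. With your $A$, this case can actually occur: if the chips together cover all of $V(G)\setminus S$ (which is entirely possible — e.g.\ a path attached to a clique $S$ of size $3k+1$, where every interior path vertex is a singleton chip and the endpoints give chips of size two), then $A_{\text{yours}}=S$ even though $|S|>3k$, $G\setminus S$ is connected and $N(V(G)\setminus S)=S$. The paper's $A$ avoids this precisely because it retains $\bigcup_C N(C)$, and the hypotheses force some $N(C)$ to contain a vertex outside $S$. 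Finally, on property~(\ref{item:2}): with your (too small) $A$, unbreakability is actually \emph{easier} than you make it sound — any $A$-vertex on the small side that is far from $S$ would lie inside a chip, a direct contradiction — but this observation does not rescue the construction, because (\ref{item:1}) and (\ref{item:0b}) are broken. You should adopt the paper's definition $A=\bigl(\bigcap_{C\in\C}V(G)\setminus N[C]\bigr)\cup\bigcup_{C\in\C}N(C)$; with that in place, the fence argument for (\ref{item:1}) goes through, but (\ref{item:2}) then needs the paper's more careful argument (grouping components of $G\setminus Q$ by their neighbourhood in $Q$ and applying the touch bound through a set $\C_0$), not the shortcut you sketched.
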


\begin{proof}
Let $\C$ be the set of inclusionwise maximal chips, enumerated by Lemma~\ref{lemma:chips-time-bound}. Define the set $A$ as follows:
$$A = \left(\bigcap_{C \in \C} V(G) \setminus N[C]\right) \cup \bigcup_{C \in \C} N(C)$$
In the definition we assume that when $\C$ is empty, then $A = V(G)$.
The claimed running time of the algorithm follows directly from Lemma~\ref{lemma:chips-time-bound}.

For property (\ref{item:0}), note that no vertex of $S$ is contained in a chip of $\C$, hence $S \subseteq A$.
We now show property (\ref{item:0b}). Note that $N(V(G) \setminus S) = S$ and $|S| > 3k$ implies $S \neq V(G)$.
Consequently, if $\C = \emptyset$, property (\ref{item:0b}) is straightforward.
Otherwise, let $C \in \C$. Note that $|S| > 3k$ implies that $S \setminus N(C) \neq \emptyset$
and the connectivity of $G \setminus S$ together with $N(V(G)\setminus S)=S$ further implies that $N(C) \setminus S \neq \emptyset$.
Consequently, $A \setminus S \neq \emptyset$ and property (\ref{item:0b}) is proven.

We now move to the remaining two properties.
\begin{claim}
\label{claim:chip-containment}
For any connected component $D$ of $G \setminus A$
there exists a chip $C_1 \in \C$ such that $D \subseteq C_1$.
\end{claim}

\begin{proof}
Observe that a vertex which is not contained in any chip
belongs to the set $A$, as it is either contained in $N(C)$ for some $C \in \C$
or it belongs to $V(G) \setminus N[C]$ for every $C \in \C$.
Let $D$ be an arbitrary connected component of $G \setminus A$
and let $v \in D$ be its arbitrary vertex.
As $v \notin A$, there is a chip $C_v \in \C$ such that $v \in C_v$.
Recall that by its definition the set $A$ contains all the neighbours 
of all the chips in $\C$, hence $N(C_v) \cap D = \emptyset$
and by the connectivity of $G[D]$ we have $D \subseteq C_v$.
\cqed\end{proof}

In the following claim we show that the set $A$ 
satisfies property~(\ref{item:1}) of Theorem~\ref{thm:decomposition-local}.

\begin{lemma}
\label{lem:A-small-neighbourhoods}
For any connected component $D$ of $G \setminus A$ it holds that $|N(D)|\leq \eta$.
\end{lemma}

\begin{proof}
Let $D$ be an arbitrary connected component of $G \setminus A$.
By Claim~\ref{claim:chip-containment} there exists $C \in \C$
such that $D \subseteq C$. 
Intuitively each vertex of $N(D)$ belongs to the set $A$
for one of two reasons: (i) it belongs to $N(C)$,
or (ii) it is adjacent to a vertex of some other chip, which touches $C$.
In both cases we show that there is only a bounded number of such vertices, which is formalized as follows.

Let $v$ be any vertex of $N(D)$. 
Clearly $v \in N[C]$, hence we either have $v \in N(C)$ or $v \in C$. 
Observe that if $v \in C$, then since $v \in A$, by the definition of the set $A$
we have $v \in N(C')$ for some $C' \in \C$, $C'\neq C$. Since $v\in N(C')\cap C$, then $C'$ touches $C$ by Lemma~\ref{lemma:touch}.
We infer that $N(D) \subseteq N(C) \cup \bigcup_{C' \in \C, C\touch C'} N(C')$.
The claimed upper bound on $|N(D)|$ follows from Lemma~\ref{lem:touch-bound}.
\end{proof}

Next, we show that the set $A$ is unbreakable.
A short an informal rationale behind this property
is that everything what could be easily
cut out of the graph was already excluded in the definition of $A$.

\begin{lemma}
\label{lem:A-unbreakable}
The set $A$ is $(\tau,k)$-unbreakable.
\end{lemma}

\begin{proof}
Assume the contrary, and let $(X,Y)$ be a witnessing separation, i.e. we have
that $|X \cap Y| \le k$, $|(X \setminus Y) \cap A|>\tau$ and $|(Y \setminus X) \cap A|>\tau$.
Since $S$ is $(2k,k)$-unbreakable, then either $|(X \setminus Y) \cap S| \le 2k$ or
$|(Y \setminus X) \cap S| \le 2k$.
Without loss of generality we assume that $|(X \setminus Y) \cap S| \le 2k$.
Let us define a set $Q = (X \cap Y) \cup (X \cap S)$ and observe that $|Q| \le 3k$.

Note that each connected component of $G \setminus Q$
is either entirely contained in $X \setminus Y$ or in $Y \setminus X$ (see Fig.~\ref{fig:1}).
Consider connected components of the graph $G \setminus Q$ that are 
contained in $X \setminus Y$ and observe that they
contain at least $|((X \setminus Y) \cap A) \setminus S|>\tau - 2k$
vertices of $A$ in total.
Therefore, by grouping the connected components of $G \setminus Q$ 
contained in $X \setminus Y$ by their neighbourhoods in $Q$,
we infer that there exists a set of connected components 
$\mathcal{D}=\{D_1,\ldots,D_r\}$,
such that $\forall_{1\leq i,j\leq r} N_G(D_i) = N_G(D_j)$ and 
\begin{align}
\label{eq:1}
\left|\bigcup_{i=1}^r D_i \cap A\right| > \frac{\tau-2k}{2^{3k}} = (3k)^2\cdot 4^{3k}\,.
\end{align}

\begin{figure}[htbp!]
        \centering
        \begin{subfigure}[b]{0.45\textwidth}
                \def\svgwidth{\textwidth}
                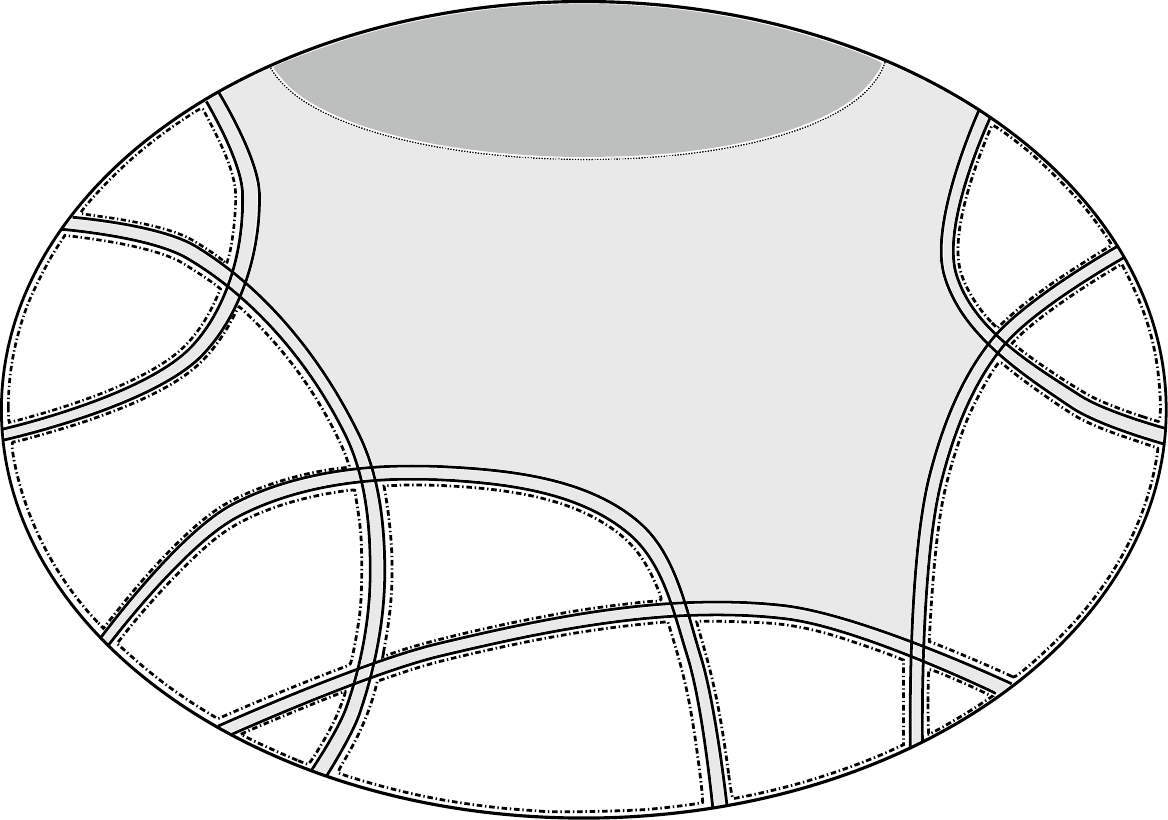
        \caption{Construction of the bag $A$}\label{fig:1}
	\end{subfigure}
        \qquad
        \begin{subfigure}[b]{0.45\textwidth}
                \def\svgwidth{\textwidth}
                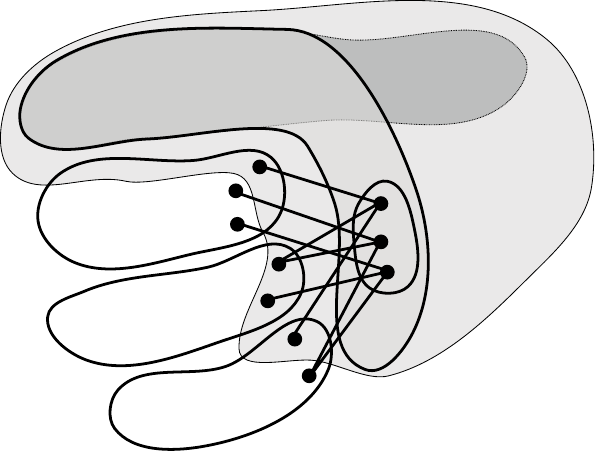
        \caption{Situation in the proof of Claim~\ref{claim:1}.}\label{fig:2}
        \end{subfigure}
\caption{Illustrations of the proof of Theorem~\ref{thm:decomposition-local}}\label{fig:main}
\end{figure}

We now need the following claim.

\begin{claim}
\label{claim:1}
There is a subset $\C_0 \subseteq \C$, such that each $v \in \bigcup_{i=1}^r D_i \cap A$ 
belongs to some chip of $\C_0$, and there are at most $3k \cdot 4^{3k}$ chips in $\C$ that touch some chip of $\C_0$.
\end{claim}

\begin{proof}
Observe that, for each $1 \le i \le r$, $Q$ is a $D_i-S$ separator (see Fig.~\ref{fig:1})
of size at most $3k$.
Therefore, for each $D_i$ there is an important 
$D_i - S$ separator of size at most $3k$ disjoint with $D_i$, hence each $D_i$
is contained in some chip of $\C$. Consider two cases. 

First, assume that for each $1 \le i \le r$ we have $D_i \in \C$.
As $\C_0$ take $\{D_1, \ldots, D_r\}$.
Observe that as components $D_i$ have the same neighbourhoods in $G$,
then by Lemma~\ref{lemma:touch} each chip of $\C$ that touches some chip $D_i$ touches
also $D_1$. Therefore, by Lemma~\ref{lem:touch-bound} 
there are at most $3k \cdot 4^{3k}$ chips in $\C$ that touch some chip of $\C_0$.

In the second case assume that there exist $1 \le i_0 \le r$ and a chip $C \in \C$ such that $D_{i_0} \subsetneq C$.
We shall prove that for each $1 \le i \le r$ we have $D_i \subseteq C$.
Since $C$ is connected and $C\setminus D_{i_0}$ is non-empty, we have that $C \cap N(D_{i_0}) \neq \emptyset$. 
Let $C' = C \cup \bigcup_{1 \le i \le r} D_i$. 
Clearly $C' \cap S = \emptyset$, and $C'$ is connected since each component $D_i$ is adjacent to every vertex of $C\cap N(D_{i_0})$.
Moreover, as each $D_i$ has the same neighbourhood in $Q$ we have $|N(C')| \le |N(C)| \le 3k$ (see Fig.~\ref{fig:2}).
As $\C$ contains only maximal chips we have $C' = C$ and hence $\bigcup_{1 \le i \le r} D_i \subseteq C$.
Define $\C_0$ as $\{C\}$. 
By Lemma~\ref{lem:touch-bound} a single chip touches at most $3k \cdot 4^{3k}$ other chips, which finishes the proof of Claim~\ref{claim:1}
\cqed\end{proof}

Let $v \in A \cap D_i$ for some $1 \le i \le r$.
Since $v$ is contained in some $C' \in \C_0$, we have $v \notin V(G) \setminus N[C']$. 
Consequently, by the definition of the set $A$ there exists a chip $C_v \in \C$ such that $v \in N(C_v)$.
Note that $C' \neq C_v$ and $N(C_v) \cap C' \neq \emptyset$, hence by Lemma~\ref{lemma:touch} $C'$ touches $C_v$.
By Claim~\ref{claim:1} there are at most $3k \cdot 4^{3k}$ chips touching a chip of $\C_0$.
As each $C_v$ satisfies $|N(C_v)| \le 3k$, we infer that the number of vertices
of $A$ in $\bigcup_{1 \le i \le r} D_i$ is at most $(3k)^2 4^{3k}$, which contradicts $(\ref{eq:1})$
and finishes the proof of Lemma~\ref{lem:A-unbreakable}.
\end{proof}

Lemma~\ref{lem:A-small-neighbourhoods} and Lemma~\ref{lem:A-unbreakable} ensure properties (\ref{item:1}) and (\ref{item:2}) of the set $A$, respectively. 
This concludes the proof of Theorem~\ref{thm:decomposition-local}.
\end{proof} 

\subsection{Strengthening unbreakability of adhesions}
\label{ssec:strengthen}

So far Theorem~\ref{thm:decomposition-local} provides us with a construction of the bag that meets almost all the requirements, apart from $(2k,k)$-unbreakability of adhesions. For this reason, in this section we want to show that the set $A$ from Theorem~\ref{thm:decomposition-local} can be extended to a set $A'$
in such a way that for each connected component $D$ of $G\setminus A'$
the set $N_G(D)$ is even $(2k,k)$-unbreakable. During this extension we may weaken unbreakability of $A'$, but if we are careful enough then this loss will be limited to a single-exponential function of $k$.  We start with the following recursive procedure.

\begin{lemma}
\label{lem:breaking-neighbourhood}
Let $G$ be a graph, and $L \subseteq V(G)$ be a subset of vertices of size at least $2k+1$.
Then one can in $\Oh(|L|^{4k+3} kn(n+m))$ time find a set $L'$, $L\subseteq L'$, such that 
$|L' \setminus L| \le (|L|-2k-1) \cdot k$
and for each connected component $D$ of $G \setminus L'$, we have that $|N_G(D)|\leq |L|$ and $N_G(D)$ is $(2k,k)$-unbreakable in $G$.
\end{lemma}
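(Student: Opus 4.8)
The plan is to proceed by induction on $|L|$. If $L$ itself is already $(2k,k)$-unbreakable in $G$ and every connected component $D$ of $G \setminus L$ satisfies $|N_G(D)| \le |L|$, we are essentially in a good position; but to make the induction clean I will phrase it differently. First I would check whether $L$ is $(2k,k)$-breakable in $G$ using Lemma~\ref{lem:breaking} (applied with $q = 2k$), which runs in $\Oh(|L|^{4k+2} k(n+m))$ time since $2q+2 = 4k+2$. If $L$ is $(2k,k)$-unbreakable, then there is nothing to break ``at the top level'', but we still need to fix up the neighbourhoods of components of $G \setminus L$; I will handle this by a recursion on each component separately, adding $N_G(D)$ as the new ``$L$'' in a subcall --- except that this does not shrink $|L|$, so the measure has to be chosen more carefully. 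The cleaner route, which I believe is the intended one, is as follows.

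\medskip
\noindent\textbf{Approach.} I would maintain a set $L' \supseteq L$, initially $L' = L$, together with a queue of ``pending'' components. The invariant is that $|N_G(D)| \le |L|$ for every component $D$ of $G \setminus L'$ (this holds initially with $L' = L$). Process a pending component $D$: run Lemma~\ref{lem:breaking} on the set $N_G(D)$ with $q = 2k$. If $N_G(D)$ is $(2k,k)$-unbreakable in $G$, we are done with $D$ --- it satisfies both required conclusions. Otherwise we obtain a separation $(X,Y)$ of order at most $k$ with $|(X\setminus Y) \cap N_G(D)| > 2k$ and $|(Y\setminus X)\cap N_G(D)| > 2k$; set $Z = X \cap Y$, so $|Z| \le k$. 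Add $Z$ to $L'$. The key combinatorial point is that because $|N_G(D)| \le |L|$ and $(X,Y)$ splits $N_G(D)$ into two parts each of size $> 2k$, after removing $Z$ the component $D$ breaks into pieces each of whose neighbourhood (within $G$) is a \emph{proper} subset of $N_G(D) \cup Z$ of size at most $|N_G(D)| - 1 < |L|$ --- indeed, each new component lies on one side of the separation $(X,Y)$ restricted to $D$, so its neighbourhood omits at least one of the $> 2k$ vertices of $N_G(D)$ on the opposite side while gaining at most $|Z| \le k < 2k$ new vertices of $Z$, for a net decrease. Hence I would use $\sum_D |N_G(D)|$ (or better, the multiset of sizes $|N_G(D)|$ ordered lexicographically, or simply the number of vertices $v \in V(G) \setminus L'$ for which some component $D \ni$-neighbour has $|N_G(D)|$ equal to each fixed value) as a decreasing measure.

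\medskip
\noindent\textbf{Bookkeeping and the size bound.} To get the bound $|L' \setminus L| \le (|L| - 2k - 1)\cdot k$ I would set up the following potential: assign to each component $D$ of $G \setminus L'$ the ``credit'' $(|N_G(D)| - 2k - 1) \cdot k$ if $|N_G(D)| \ge 2k+1$, and $0$ otherwise. Initially the total credit is at most $(|L| - 2k-1)\cdot k$ (there is essentially one relevant component, or the components partition the credit additively because their neighbourhoods are subsets of $L$ and the function $t \mapsto (t - 2k-1)k$ is superadditive on the relevant range --- this is the place I would have to argue carefully, and it is the main obstacle, see below). When we process $D$ and find $N_G(D)$ breakable, we spend $|Z| \le k$ vertices to add to $L'$; meanwhile $D$ is replaced by new components $D_1, \dots, D_s$ each with $|N_G(D_j)| \le |N_G(D)| - 1$, so the credit of $D$, namely $(|N_G(D)| - 2k-1)k$, drops to $\sum_j (|N_G(D_j)| - 2k - 1)k$; I need $(|N_G(D)| - 2k-1)k - \sum_j(|N_G(D_j)| - 2k-1)k \ge |Z|$, i.e. that the credit drop pays for the newly added vertices. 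Since each $|N_G(D_j)| \le |N_G(D)| - 1$ the drop is at least $k$ per component, and at least $k$ overall, which covers $|Z| \le k$; one must also ensure components with $|N_G(D_j)| \le 2k$ are simply dropped from the potential, which only helps. Summing over the whole process, $|L' \setminus L| \le$ (initial total credit) $\le (|L| - 2k-1)k$.

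\medskip
\noindent\textbf{Running time and the main obstacle.} Each invocation of Lemma~\ref{lem:breaking} on a set of size at most $|L|$ costs $\Oh(|L|^{4k+2} k(n+m))$; the number of times we invoke it is bounded by the number of components ever created, which is $\Oh(n)$ (each successful break partitions at least one component into at least two, and between breaks we pay one unsuccessful check per finished component), plus at most $n$ finished components --- overall $\Oh(n)$ calls, giving $\Oh(|L|^{4k+2} k \cdot n(n+m)) = \Oh(|L|^{4k+3} k n (n+m))$ after absorbing a factor $|L| \le n$ as in the statement. The step I expect to be the real obstacle is the \emph{superadditivity / additivity of the initial potential}: I must argue that when $L' = L$, the total credit $\sum_{D}(|N_G(D)| - 2k-1)_+ \cdot k$ over components $D$ of $G \setminus L$ is at most $(|L| - 2k-1)\cdot k$. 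This requires that distinct components $D, D'$ of $G\setminus L$ can share boundary vertices in $L$ without the credits over-counting --- the clean way is to observe that if several components have $|N_G(D)| > 2k$ then actually one can merge the analysis, or more simply: we never need the bound to hold with full additivity, only that the \emph{total number of vertices added over the whole run} is at most $(|L|-2k-1)k$, and this follows from a global argument: each vertex added to $L'$ is charged to a ``unit of breakability'' inside the original set $L$, and there can be at most $(|L|-2k-1)$ such units each chargeable $k$ times. I would formalize this by tracking, for the growing $L'$, the quantity $\max_D (|N_G(D)| \ominus 2k)^+$ where distinct active components with large neighbourhoods are handled one at a time so their contributions are sequential rather than simultaneous --- that is, recurse \emph{depth-first} on one component at a time, so that at any moment there is a single ``active path'' and the bound telescopes along it.
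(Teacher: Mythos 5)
Your high-level strategy — repeatedly invoking Lemma~\ref{lem:breaking} and controlling the growth of $L'$ through a credit argument — is in the right spirit, but the credit argument as stated has a genuine gap, and the fix you sketch at the end does not repair it.

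The problem is that when you add $Z = X\cap Y$ to $L'$, the component $D$ can split into $s\geq 3$ new components $D_1,\dots,D_s$. Each of these pieces lies on one side of the separation, so each has $|N_G(D_j)|$ bounded by either $|N_G(D)\cap X|+|Z|$ or $|N_G(D)\cap Y|+|Z|$, which is indeed strictly less than $|N_G(D)|$. But the \emph{sum} of the new credits $\sum_j(|N_G(D_j)|-2k-1)_+ \cdot k$ can far exceed the old credit $(|N_G(D)|-2k-1)\cdot k$: take, say, $|N_G(D)|=10k$ split evenly by $(X,Y)$ into two halves of size $5k$ with $Z\subseteq D$ of size $k$, and suppose $D\setminus Z$ breaks into three components, two on the $X$ side and one on the $Y$ side, each with neighbourhood of size about $6k$. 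The old credit is $(8k-1)k$; the new total is about $3(4k-1)k=(12k-3)k$, an increase of roughly $(4k-2)k$, while you only paid $|Z|\leq k$. So the potential is not monotone and the bound $|L'\setminus L|\leq (|L|-2k-1)k$ is not established. Relatedly, if $Z$ is not confined to $N[D]$ it can also cut other components $D'$ into pieces whose neighbourhoods \emph{gain} up to $|Z|$ vertices of $Z$, so even your running invariant $|N_G(D)|\leq|L|$ is not maintained. Processing components depth-first rather than breadth-first does not change either of these facts, because the inflation happens the moment $Z$ is added, independently of traversal order.

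The paper avoids this entirely by \emph{not} working globally: when $L$ is $(2k,k)$-breakable by a separation $(X,Y)$, it recurses on the two induced subgraphs $G_1=G[X]$ with $L_1=(X\cap L)\cup(X\cap Y)$ and $G_2=G[Y]$ with $L_2=(Y\cap L)\cup(X\cap Y)$. This forces a genuinely \emph{binary} split in which $L_1\cap L_2 = X\cap Y$ and $L_1\cup L_2\subseteq L\cup(X\cap Y)$, giving the key inequality $|L_1|+|L_2|\leq|L|+2|X\cap Y|\leq|L|+2k$; plugging the inductive bound $(|L_i|-2k-1)k$ for each child and adding $|X\cap Y|\leq k$ for the fresh separator telescopes exactly to $(|L|-2k-1)k$. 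This is the structural ingredient missing from your argument: restricting to induced subgraphs makes the separator be shared by exactly two subproblems rather than by arbitrarily many fragments, and it is what makes the size accounting close. If you want to salvage your approach, the correct move is precisely to recurse on $(G[X],L_1)$ and $(G[Y],L_2)$ rather than adding $Z$ into a single global $L'$ and re-examining components of $G\setminus L'$.
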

\begin{proof}
We prove the lemma by induction on $|L|$, with the following two base cases. If $L = V(G)$, clearly we may return $L' = L$.
In the second base case we assume that $L$ is $(2k,k)$-unbreakable in $G$, which can be checked in $\Oh(|L|^{4k+2} k(n+m))$ time using Lemma~\ref{lem:breaking}.
Then for each connected component $D$ of $G\setminus L$ we have that $N_G(D)\subseteq L$, and thus $|N_G(D)|\leq |L|$ and $N_G(D)$ is also $(2k,k)$-unbreakable in $G$. 
Hence we can set $L'=L$, and since $|L| \ge 2k+1$, we have that $|L' \setminus L| \le (|L|-2k-1) \cdot k$.

Now let us assume that $L$ is $(2k,k)$-breakable in $G$, and hence there exists a separation $(X,Y)$ of $G$ such that
$|X\cap Y|\leq k$, $|(X \setminus Y) \cap L| > 2k$ and $|(Y \setminus X) \cap L| > 2k$, found by the algorithm of Lemma~\ref{lem:breaking}.
We use inductively Lemma~\ref{lem:breaking-neighbourhood}
for the pair $(G_1 = G[X],L_1 = (X \cap L) \cup (X \cap Y))$
and for the pair $(G_2 = G[Y],L_2 = (Y \cap L) \cup (X \cap Y))$,
to obtain sets $L_1'$ and $L_2'$, respectively. Note here that $|L_1|,|L_2| \ge 2k+1$ and $|L_1|,|L_2| < |L|$.
Define $L' = L_1' \cup L_2'$.
Each connected component $D$ of $G \setminus L'$
is either a connected component of $G_1 \setminus L_1'$ and is adjacent only to $L_1'$, or is a connected component of $G_2 \setminus L_2'$ and is adjacent only to $L_2'$.
Assume without of loss of generality the first case. By inductive assumption we infer that $|N_{G_1}(D)|\leq |L_1|$ and $N_{G_1}(D)$ is $(2k,k)$-unbreakable in $G_1$, and since $N_{G_1}(D)=N_G(D)$, $|L_1|<|L|$, and $G_1$ is a subgraph of $G$, then it follows that $|N_G(D)|\leq |L|$ and $N_G(D)$ is $(2k,k)$-unbreakable in $G$.
It remains to argue that the cardinality of $L' \setminus L$ is not too large.
Observe that
$$L' \setminus L \subseteq (L_1' \setminus L_1) \cup (L_2' \setminus L_2) \cup (X \cap Y)\,;$$
therefore, by induction we have 
\begin{align*}
|L'\setminus L| & \le (|L_1|-2k-1) \cdot k + (|L_2|-2k-1) \cdot k + k \\
    & \le (|L_1|+|L_2|-4k-1)\cdot k \\
    &  \le (|L|+2|X\cap Y|-4k-1)\cdot k \\
    & \le (|L|-2k-1)\cdot k\,.
\end{align*}
Let us now bound the running time of the recursion. Clearly, as the size of the set $L$ decreases in the recursive calls,
the depth of the recursion is at most $|L|$.
Moreover, note that any vertex may appear in $V(G) \setminus L$ in at most one recursive call $(G,L)$ at any fixed level of the recursion tree.
Hence, there are at most $|L|n$ recursive calls that do not correspond to the first base case, and, consequently, at most $2|L|n+1$ recursive calls in total.
As each recursive call takes $\Oh(|L|^{4k+2} k(n+m))$ time, the promised running time bound follows.
\end{proof}

We can now proceed to strengthen Theorem~\ref{thm:decomposition-local} by including also the procedure of Lemma~\ref{lem:breaking-neighbourhood}. In the following, let
\begin{eqnarray*}
\tau' & = & \tau+\left(\binom{\tau+k}{2} \cdot k +k\right) \cdot k\eta\,.
\end{eqnarray*} 

\begin{theorem}
\label{thm:decomposition-local-enhanced}
There is an $\Oh(2^{O(k^2)} nm)$ time algorithm that,
given a connected graph $G$ together with an integer $k$ 
and a $(2k,k)$-unbreakable set $S$, computes a set $A' \subseteq V(G)$
such that:
\begin{enumerate}[(a)]
  \item $S \subseteq A'$,
  \item \label{item:11} for each connected component $D$ of $G \setminus A'$
  the set $N_G(D)$ is $(2k,k)$-unbreakable, and $|N_G(D)| \leq \eta$,
  \item \label{item:22} $A'$ is $(\tau',k)$-unbreakable in $G$,
  \item \label{item:00b} moreover, if $|S| > 3k$, $G \setminus S$ is connected and $N(V(G) \setminus S) = S$, then $S \neq A'$.
\end{enumerate}
\end{theorem}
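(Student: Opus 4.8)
The plan is to run Theorem~\ref{thm:decomposition-local} to get the set $A$, and then repair the adhesions one by one using Lemma~\ref{lem:breaking-neighbourhood}, localised to the pieces hanging off $A$. Concretely, for every connected component $D$ of $G\setminus A$ with $|N_G(D)|\geq 2k+1$ I run Lemma~\ref{lem:breaking-neighbourhood} on the graph $G[N_G[D]]$ with $L:=N_G(D)$, obtaining $L_D'$ with $N_G(D)\subseteq L_D'\subseteq N_G[D]$ and $|L_D'\setminus N_G(D)|\le (|N_G(D)|-2k-1)k\le (\eta-2k-1)k<k\eta$; for components with $|N_G(D)|\le 2k$ I simply put $L_D':=N_G(D)$ (such an adhesion is $(2k,k)$-unbreakable for trivial cardinality reasons). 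I then set $A':=A\cup\bigcup_D L_D'$. Property (a) is immediate since $S\subseteq A\subseteq A'$, property (d) follows from the corresponding property of Theorem~\ref{thm:decomposition-local} together with $A\subseteq A'$, and the running time is that of Theorem~\ref{thm:decomposition-local} plus at most $n$ invocations of Lemma~\ref{lem:breaking-neighbourhood} with $|L|\le\eta=2^{\Oh(k)}$, each costing $\eta^{\Oh(k)}\cdot\poly(n)$, which yields $\Oh(2^{\Oh(k^2)}nm)$ after routine bookkeeping (using that $G$ is connected, so $n\le m+1$).

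For property (b) I would first note that, since $N_G(D)\subseteq A$ for every component $D$ of $G\setminus A$ and the extensions $L_D'\setminus N_G(D)$ are confined to the pairwise disjoint sets $D$, one gets $A'\cap N_G[D]=L_D'$; hence the components of $G\setminus A'$ are precisely the components of the graphs $G[N_G[D]]\setminus L_D'$ taken over all components $D$ of $G\setminus A$. For such a component $D'\subseteq D$ one has $N_G(D')\subseteq N_G[D]$, so $N_G(D')$ coincides with the neighbourhood of $D'$ in $G[N_G[D]]$. When $|N_G(D)|\geq 2k+1$, the conclusion of Lemma~\ref{lem:breaking-neighbourhood} gives that $N_G(D')$ is $(2k,k)$-unbreakable in $G[N_G[D]]$, hence also in the supergraph $G$, and that $|N_G(D')|\le|N_G(D)|\le\eta$; when $|N_G(D)|\le 2k$ we have $D'=D$ and both claims are trivial. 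This settles (b).

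The substantial part is property (c): $A'$ is $(\tau',k)$-unbreakable. Suppose not, and let $(X,Y)$ be a witnessing separation of order at most $k$. Since $A\subseteq A'$ and $A$ is $(\tau,k)$-unbreakable, I may assume $|(X\setminus Y)\cap A|\le\tau$, hence $|A\cap X|\le\tau+k$. Every vertex of $(A'\setminus A)\cap(X\setminus Y)$ lies in some $L_D'\setminus N_G(D)\subseteq D$, and each single component $D$ contributes at most $|L_D'\setminus N_G(D)|<k\eta$ of them. There are at most $k$ components $D$ of $G\setminus A$ meeting the separator $X\cap Y$; every other component $D$ that contributes is contained in $X\setminus Y$, so that $N_G(D)\subseteq A\cap X$, and it contributes only if $L_D'\neq N_G(D)$, i.e.\ only if $N_G(D)$ is $(2k,k)$-breakable in $G[N_G[D]]$. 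I would bound the number of such components by $\binom{\tau+k}{2}\cdot k$: to each of them associate, using the breaking separation supplied by Lemma~\ref{lem:breaking} inside $G[N_G[D]]$, a pair of vertices of $N_G(D)\subseteq A\cap X$ lying on opposite sides of that separation, and argue that a fixed pair can be associated with at most $k$ of these components. Summing up, $|(X\setminus Y)\cap A'|\le \tau+\bigl(\binom{\tau+k}{2}k+k\bigr)\cdot k\eta=\tau'$, contradicting the choice of $(X,Y)$.

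The main obstacle is precisely this last counting step: controlling how much mass the local repairs of Lemma~\ref{lem:breaking-neighbourhood} can inject on one side of a small separation. The key leverage is that the relevant components on the ``small'' side have their neighbourhoods confined to the bounded set $A\cap X$ (this is exactly what the $(\tau,k)$-unbreakability of $A$ buys us, and it is what produces the $\binom{\tau+k}{2}$ factor in $\tau'$); the delicate point is the claim that each pair of vertices of $A\cap X$ accounts for at most $k$ relevant components, which needs the vertex-disjointness of the components of $G\setminus A$ together with the fact that the breaking separations certifying relevance have order at most $k$.
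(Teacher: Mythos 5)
Your construction, the claimed value of $\tau'$, and the overall plan (run Theorem~\ref{thm:decomposition-local}, then patch each adhesion inside $G[N_G[D]]$ via Lemma~\ref{lem:breaking-neighbourhood}, then count the extra mass injected on the small side of a witnessing separation) all match the paper. Properties~(a), (b), and (d), as well as the running-time estimate, are argued essentially as in the paper and look correct.

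The gap is in your counting argument for property~(c), and it traces back to your choice of dispatch criterion. You run Lemma~\ref{lem:breaking-neighbourhood} on $(G[N_G[D]],N_G(D))$ whenever $|N_G(D)|\geq 2k+1$. Consequently, all you may conclude from $L_D'\neq N_G(D)$ is that $N_G(D)$ is $(2k,k)$-breakable \emph{in the induced subgraph $G[N_G[D]]$}. That gives you a pair $v_a,v_b\in N_G(D)\subseteq A\cap X$ separated by at most $k$ vertices in $G[N_G[D]]$, but not in $G$. This is precisely where the ``at most $k$ relevant components per pair'' step breaks down: if $v_a,v_b\in N_G(D_i)$ for $k+1$ distinct components $D_1,\ldots,D_{k+1}$, then you do get $k+1$ internally vertex-disjoint $v_a$--$v_b$ paths in $G$, but the small separator you extracted from $D_1$ lives inside $N_G[D_1]$ and never meets the paths routed through $D_2,\ldots,D_{k+1}$ (those paths leave $N_G[D_1]$ entirely, except at $v_a,v_b$, which are on opposite sides of the separation and hence not in the separator). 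So no Menger-style contradiction arises, and a single pair can in principle be charged by arbitrarily many components.

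The paper avoids this by dispatching on $(2k,k)$-breakability of $N_G(D)$ \emph{in $G$}, tested via Lemma~\ref{lem:breaking} on the whole graph: only if $N_G(D)$ is $(2k,k)$-breakable in $G$ does it invoke Lemma~\ref{lem:breaking-neighbourhood} (which is licensed because breakability in $G$ forces $|N_G(D)|>2k$, and restricting the witnessing separation to $N_G[D]$ shows $N_G(D)$ is also $(2k,k)$-breakable in $G[N_G[D]]$). With this dispatch, $L_D'\neq L_D$ implies a pair $v_a,v_b\in N_G(D)$ whose minimum vertex cut in $G$ itself is at most $k$; the $k+1$ disjoint paths through $k+1$ components then do contradict Menger in $G$, and the count $\binom{\tau+k}{2}\cdot k + k$ goes through. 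Your proof can be repaired by adopting this dispatch; without it, the key inequality bounding the number of relevant components on the small side is unproved and could actually fail.
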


\begin{proof}
We start by finding the set $A$ by running the algorithm  Theorem~\ref{thm:decomposition-local}.
Next, for each connected component $D$ of $G\setminus A$
 using Lemma~\ref{lem:breaking} we check
whether $N(D)$ is $(2k,k)$-breakable in $G$.
By Theorem~\ref{thm:decomposition-local}, the cardinality
of $N(D)$ is bounded by $\eta$, hence
all tests take total time $\Oh(\eta^{4k+2} knm) = \Oh(2^{\Oh(k^2)} nm)$ time.
Note that if $N(D)$ is $(2k,k)$-breakable in $G$, then
in particular $|N(D)| > 2k$, hence we can use
Lemma~\ref{lem:breaking-neighbourhood} 
for the pair $(G[N[D]],L_D = N(D))$; let $L_D'$ be the obtained set.
As $|L_D| \leq \eta$, the algorithm of Lemma~\ref{lem:breaking-neighbourhood} runs in $\Oh(\eta^{4k+3}k |N[D]|m)$ time for a fixed component $D$,
and  total time taken by calls to Lemma~\ref{lem:breaking-neighbourhood} is:
$$\sum_D \Oh(\eta^{4k+3} k(|D| + |N(D)|)m) \leq \Oh(\eta^{4k+3}km) \cdot \left(\sum_D |D| + \sum_D \eta\right) = \Oh(\eta^{4k+4} knm) = \Oh(2^{\Oh(k^2)} nm).$$
In the case when $N(D)$ is $(2k,k)$-unbreakable, let $L_D = L_D' = N(D)$.
Define $A' = A \cup (\bigcup_{D} L_D')$, where
the union is taken over all the connected components $D$ of $G \setminus A$.

Since $S \subseteq A \subseteq A'$, we have that $S \subseteq A'$, and, moreover, the property (\ref{item:00b}) follows directly from property (\ref{item:0b}) of Theorem~\ref{thm:decomposition-local}.
Moreover, as $|L_D|\leq \eta$ for
each connected component $D$ of $G\setminus A$, then by Lemma~\ref{lem:breaking-neighbourhood} for each connected component $D'$ of $G \setminus A'$ we also have $|N_G(D')| \le \eta$.
The fact that $N_G(D')$ is $(2k,k)$-unbreakable in $G$ follows directly from Lemma~\ref{lem:breaking-neighbourhood}. It remains to show that $A'$ is $(\tau',k)$-unbreakable in $G$.

Consider any separation $(X,Y)$ of $G$ of order at most $k$.
By Theorem~\ref{thm:decomposition-local} the set $A$ is $(\tau,k)$-unbreakable,
hence either $|(X \setminus Y) \cap A| \le \tau$ or 
$|(Y \setminus X) \cap A| \le \tau$, and without loss of generality assume the former.
As $(X,Y)$ is an arbitrary separation of order at most $k$,
to show that $A'$ is $(\tau',k)$-unbreakable
it suffices to prove that $|(X \setminus Y) \cap (A' \setminus A)| \le (\binom{\tau+k}{2} \cdot k +k) \cdot k\eta$.

Note that $A' \setminus A \subseteq \bigcup_{D} L_D' \setminus L_D$.
As for each $D$ we have $|L_D' \setminus L_D| \le k\eta$ by Lemma~\ref{lem:breaking-neighbourhood},
to finish the proof of Theorem~\ref{thm:decomposition-local-enhanced}
we are going to show that there are at most $\binom{\tau+k}{2} \cdot k+k$
connected components $D$ of $G \setminus A$ such that 
$D \cap (X\setminus Y) \neq \emptyset$ and $L_D' \neq L_D$.
As $(X,Y)$ is of order at most $k$, there
are at most $k$ connected components $D$ of $G \setminus A$
intersecting $X \cap Y$.
Hence we restrict our attention to connected components $D$
of $G\setminus A$, such that $D \subseteq X \setminus Y$,
which in turn implies $N(D) \subseteq A \cap X$.
Recall that if $L_D' \neq L_D$ for such a connected component $D$,
then $N(D)$ is $(2k,k)$-breakable in $G$, and
hence there exist two vertices $v_a,v_b \in N(D) \subseteq A \cap X$,
such that the minimum vertex cut separating $v_a$ and $v_b$
in $G$ is at most $k$.
However, such a pair of vertices $v_a,v_b$ may be simultaneously
contained in neighbourhoods of at most $k$ connected components $D$, since each component $D$ adjacent both to $v_a$ and to $v_b$ contributes with at least one path between them.
As $|A \cap X| \leq \tau + k$, the theorem follows.
\end{proof}

\subsection{Constructing a decomposition} 
\label{ssec:decomposition-main}

In this subsection we prove our main decomposition theorem, i.e. Theorem~\ref{thm:decomposition-main}.
However, for the inductive approach to work we need a bit stronger statement,
where additionally we have a set $S \subseteq V(G)$ that has to be contained
in the top bag of the tree decomposition.
Note that Theorem~\ref{thm:decomposition-main} follows from the following  
by setting $S=\emptyset$.

\begin{theorem}
\label{thm:decomposition-recursion}
There is an $\Oh(2^{\Oh(k^2)}n^2m)$ time algorithm
that, given a connected graph $G$ together with
an integer $k$ and a set $S \subseteq V(G)$ of size at most $\eta$ such that $G \setminus S$ is connected and $N(V(G) \setminus S) = S$,
computes a tree decomposition $(T,\beta)$ such that $S$ is contained 
in the top bag of the tree decomposition, 
and  the following conditions
are satisfied:
\begin{enumerate}[(i)]
  \item\label{p:r:1} for each $t \in V(T)$, the graph $G[\gamma(t)]\setminus \sigma(t)$ is connected and $N(\gamma(t)\setminus \sigma(t))=\sigma(t)$,
  \item\label{p:r:2} for each $t \in V(T)$, the set $\beta(t)$ is $(\tau',k)$-unbreakable in $G[\gamma(t)]$,
  \item\label{p:r:3} for each non-root $t \in V(T)$, we have that $|\sigma(t)|\leq \eta$ and $\sigma(t)$ is $(2k,k)$-unbreakable in $G[\gamma(\parent(t))]$.
  \item\label{p:r:4} $|V(T)| \leq |V(G) \setminus S|$.
\end{enumerate}
\end{theorem}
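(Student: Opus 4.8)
I would prove Theorem~\ref{thm:decomposition-recursion} by induction on $|V(G)\setminus S|$, invoking Theorem~\ref{thm:decomposition-local-enhanced} once at each recursive step to produce the top bag and then recursing on the components left behind. First I would handle a base-case cleanup: if $|S|\le 3k$, enlarge $S$ by adding arbitrary vertices of $V(G)\setminus S$ until $|S|>3k$ (this is possible unless $V(G)=S$, in which case we output a single-node decomposition with $\beta=\{S\}$); note this only shrinks $V(G)\setminus S$ and preserves the hypotheses that $G\setminus S$ is connected and $N(V(G)\setminus S)=S$, except we must be slightly careful—adding a vertex $v$ to $S$ keeps $G\setminus S$ connected only if we add vertices one at a time from the current $V(G)\setminus S$, which is fine since that graph is connected, though $N(V(G)\setminus S)=S$ need not survive verbatim; the cleaner route, matching the proof-overview text, is simply to treat $|S|\le 3k$ as a separate branch where property~\eqref{p:r:4}'s slack lets us afford it. (Also, if $V(G)\setminus S=\emptyset$ we are done immediately with one node.)

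For the main step, assume $3k<|S|\le\eta$, $G\setminus S$ connected, $N(V(G)\setminus S)=S$. Since $|S|>3k$ we may apply Theorem~\ref{thm:decomposition-local-enhanced} to $G$ and $S$, obtaining $A'\supseteq S$ with: every component $D$ of $G\setminus A'$ has $N_G(D)$ being $(2k,k)$-unbreakable in $G$ with $|N_G(D)|\le\eta$; $A'$ is $(\tau',k)$-unbreakable in $G$; and crucially $S\neq A'$ by property~\eqref{item:00b}, whose hypotheses are exactly the ones we have assumed. Now set the root bag $\beta(\mathrm{root})=A'$, and for each connected component $D$ of $G\setminus A'$ recurse on the instance $(G[N_G[D]],\,N_G(D))$. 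This sub-instance satisfies the preconditions of the theorem: $G[N_G[D]]$ is connected (as $G[D]$ is connected and every vertex of $N_G(D)$ has a neighbour in $D$), $G[N_G[D]]\setminus N_G(D)=G[D]$ is connected, $N_{G[N_G[D]]}(D)=N_G(D)$, and $|N_G(D)|\le\eta$. Let $(T_D,\beta_D)$ be the decomposition returned; glue the roots of all $T_D$ as children of the new root, so $\sigma$ of each child-root equals $N_G(D)$.

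\textbf{Verifying the properties.} For \eqref{p:r:1} at the root: $\sigma(\mathrm{root})=S$ by the statement's requirement (we are proving $S$ sits in the top bag), $\gamma(\mathrm{root})=V(G)$, and $G[V(G)]\setminus S=G\setminus S$ is connected with $N(V(G)\setminus S)=S$ by hypothesis. For non-root $t$ lying in some $T_D$: here $\gamma_G(t)=\gamma_{G[N_G[D]]}(t)$ because the recursive instance is exactly $G[N_G[D]]$ and $\gamma$ is a union of bags within that subtree, so property~\eqref{p:r:1} is inherited verbatim from the inductive call—this identification $\gamma$-in-$G$ $=$ $\gamma$-in-subinstance is the linchpin that makes the induction close, and it holds because a child decomposition never uses vertices outside $N_G[D]$. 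For \eqref{p:r:2}: at the root, $\beta(\mathrm{root})=A'$ is $(\tau',k)$-unbreakable in $G=G[\gamma(\mathrm{root})]$ by Theorem~\ref{thm:decomposition-local-enhanced}; at non-root nodes it is inherited. For \eqref{p:r:3}: each child-root $t_D$ has $\sigma(t_D)=N_G(D)$, which has size $\le\eta$ and is $(2k,k)$-unbreakable in $G$, hence also in $G[N_G[D]]=G[\gamma(\parent(t_D))]$ since unbreakability is preserved under passing to subgraphs containing the set; deeper non-root nodes are inherited, again using the $\gamma$-identification to equate $G[\gamma(\parent(t))]$ across the recursion. For \eqref{p:r:4}: $|V(T)|=1+\sum_D|V(T_D)|\le 1+\sum_D |D\setminus N_G(D)|$ wait—by induction $|V(T_D)|\le |N_G[D]\setminus N_G(D)|=|D|$, and since $S\neq A'$ we have $\bigcup_D D\subsetneq V(G)\setminus S$ as a proper subset (the vertices of $A'\setminus S$, which is nonempty, are counted nowhere on the right), so $1+\sum_D|D|\le 1+|V(G)\setminus S|-|A'\setminus S|\le |V(G)\setminus S|$; this is precisely where property~\eqref{item:00b}'s guarantee $S\neq A'$ pays off, ensuring the node count does not blow up. Finally, the running time: the recursion tree has depth at most $n$ and the work outside recursive calls at each instance is the $\Oh(2^{\Oh(k^2)}nm)$ cost of Theorem~\ref{thm:decomposition-local-enhanced}, and since distinct recursive calls at a fixed level have disjoint ``interiors'' $V(G)\setminus S$, summing gives $\Oh(2^{\Oh(k^2)}n^2m)$.

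\textbf{Main obstacle.} I expect the only real subtlety is the bookkeeping around the base case $|S|\le 3k$: one must ensure that enlarging $S$ (or otherwise disposing of this case) neither destroys the structural preconditions needed to invoke Theorem~\ref{thm:decomposition-local-enhanced} nor breaks the strict-decrease argument for \eqref{p:r:4}. The clean fix is to add exactly one vertex at a time from the (connected) graph $G\setminus S$ to $S$, observe $G\setminus S$ stays connected, re-establish $N(V(G)\setminus S)=S$ (which can only gain vertices into $S$, and those were already in $S$), and note each such step strictly decreases $|V(G)\setminus S|$ while adding at most one node—so after at most $3k$ such steps we are in the main case, and the total extra nodes are absorbed. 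Everything else is a routine transfer of the three local properties through the gluing, hinging on the single clean observation that a recursively-produced subtree lives entirely inside $G[N_G[D]]$, so all ``in $G[\gamma(\cdot)]$'' statements are stable under the recursion.
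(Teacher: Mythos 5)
Your outline follows the paper's recursive template, but it has one genuine gap that breaks the argument: you invoke Theorem~\ref{thm:decomposition-local-enhanced} on the pair $(G,S)$ after observing only that $|S|>3k$, yet that theorem requires as a \emph{hypothesis} that $S$ (respectively $S'$) is $(2k,k)$-unbreakable in $G$. Nothing in the preconditions of Theorem~\ref{thm:decomposition-recursion} guarantees this, and in general it is false: the adhesion handed down by a witnessing separation in the breakable branch of Theorem~\ref{thm:decomposition-local} can easily be breakable again. The paper's proof handles this with a case split you omit: it first tests (via Lemma~\ref{lem:breaking}) whether $S'$ is $(2k,k)$-unbreakable; if it is, it applies Theorem~\ref{thm:decomposition-local-enhanced}, but if it is breakable, it builds the root bag $A'$ by running Lemma~\ref{lem:breaking-neighbourhood} directly on $(G,S')$, which produces a bag of size at most $k|S'|\le k\eta<\tau'$, trivially $(\tau',k)$-unbreakable, whose peripheral components again have $(2k,k)$-unbreakable, $\eta$-bounded neighbourhoods. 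Without this branch, the inductive construction cannot proceed whenever $S$ is breakable, and the separate ``$S\subsetneq A'$'' argument for property~\eqref{p:r:4} is also missing in that branch (the paper notes that Lemma~\ref{lem:breaking-neighbourhood} cannot return $A'=S'$ when $N(V(G)\setminus S')=S'$ is $(2k,k)$-breakable and $G\setminus S'$ is connected).

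Two smaller points. First, you write that ``unbreakability is preserved under passing to subgraphs containing the set''; this is backwards --- unbreakability is preserved when passing to \emph{supergraphs}, as stated in the preliminaries. Your application happens to survive because property~\eqref{p:r:3} asks for unbreakability of $\sigma(t_D)=N_G(D)$ in $G[\gamma(\parent(t_D))]=G$, so no restriction to a subgraph occurs; but as written the justification is incorrect. Second, you assert $\sigma(\mathrm{root})=S$, whereas by definition $\sigma(\mathrm{root})=\emptyset$; the requirement is only that $S\subseteq\beta(\mathrm{root})$. This is cosmetic, but worth getting right since property~\eqref{p:r:1} for the root is checked against $\sigma(\mathrm{root})=\emptyset$, not against $S$. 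Your handling of the base case, the $\gamma$-identification across recursion, and the node-count bookkeeping via $S\subsetneq A'$ are otherwise in line with the paper's argument.
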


\begin{proof}

If $|V(G)| \leq \tau'$, the algorithm creates a single bag containing the entire $V(G)$.
It is straightforward to verify that such a decomposition satisfies all the required properties.
Thus, in the rest of the proof we assume that $|V(G)| > \tau'$, in particular, $|V(G)| > 3k$.

Define $S' = S$ and, if $|S| \leq 3k$, add $3k+1-|S|$ arbitrary vertices of $V(G) \setminus S$ to $S'$. Note that,
as $\eta > 3k$, we have $3k < |S'| \leq \eta$.

We now define a set $A'$ as follows. 
First, we verify, using Lemma~\ref{lem:breaking}, whether $S'$ is $(2k,k)$-breakable in $G$ or not.
If it turns out to be $(2k,k)$-breakable in $G$, we apply Lemma~\ref{lem:breaking-neighbourhood}
to the pair $(G,S')$, obtaining a set which we denote by $A'$.
Otherwise, we can use Theorem~\ref{thm:decomposition-local-enhanced} on the pair $(G,S')$
to obtain a set $A'$.
Note that in both cases $S \subseteq S' \subseteq A'$
and all computations so far take $\Oh(2^{\Oh(k^2)}nm)$ time in total.

Regardless of the way the set $A'$ was obtained, we proceed with it as follows.
For each connected component $D$ of $G \setminus A'$, we use Theorem~\ref{thm:decomposition-recursion} inductively
for the graph $G[N[D]]$ and $S_D = N(D)$.
Let us now verify that  (a) each $S_D$ is $(2k,k)$-unbreakable in $G$, (b) that the assumptions of the theorem are satisfied,
and (c) that the recursive call is applied to a strictly smaller instance in the sense defined in the following.


For the first two claims, if $S$ is $(2k,k)$-breakable, Lemma~\ref{lem:breaking-neighbourhood} asserts that $|S_D| \leq |S| \leq \eta$ and $S_D$ is $(2k,k)$-unbreakable in $G$.
Otherwise, property~(\ref{item:11}) of Theorem~\ref{thm:decomposition-local-enhanced} ensures that $|S_D| \leq \eta$ and $S_D$ is $(2k,k)$-unbreakable in $G$. The other assumptions on the set $S_D$ in the recursive calls follow directly from the definitions of these calls.

For the last claim, we show that either $|N[D]| < |V(G)|$ or $N[D] = V(G)$ and $|D| < |V(G) \setminus S|$.
Assume the contrary, that is, $D = V(G) \setminus S$ and $N(D) = S_D = S = S' = A'$.
In particular, as $S_D$ is $(2k,k)$-unbreakable in $G$, the set $A'$ was obtained using Theorem~\ref{thm:decomposition-local-enhanced}.
However, as $|S'| > 3k$, property (\ref{item:00b}) of Theorem~\ref{thm:decomposition-local-enhanced}
ensures that $S' \subsetneq A'$, a contradiction.

Let $(T_D,\beta_D)$ be the tree decomposition obtained in the recursive call for the pair $(G[N[D]],S_D)$.
Construct a tree decomposition $(T,\beta)$, 
by creating an auxiliary node $r$, which will be the root of $T$,
and attach $T_D$ to $r$, by making the root $r_D$ of $T_D$
a child of $r$ in $T$.
Finally, define $\beta = \bigcup_{D} \beta_D$ and set $\beta(r)=A'$.
A straightforward check shows that $(T,\beta)$ is indeed a valid tree decomposition.
We now proceed to verify its promised properties.

Clearly, $S \subseteq S' \subseteq A'$. For any connected component $D$ of $G \setminus A'$, note
that $\gamma(r_D) = N[D]$ and $\sigma(r_D) = N(D) = S_D$. 
This, together with inductive assumptions on recursive calls, proves properties (\ref{p:r:1}) and (\ref{p:r:3}).

If $A'$ is obtained using Lemma~\ref{lem:breaking-neighbourhood}, then $|A'| \leq k|S'| \leq k\eta < \tau'$, hence
clearly $A' = \beta(r)$ is $(\tau',k)$-unbreakable. 
In the other case, property~(\ref{item:22}) of Theorem~\ref{thm:decomposition-local-enhanced}
ensures the unbreakability promised in property (\ref{p:r:2}).

It remains to bound the number of bags of $(T,\beta)$; as each bag is processed in $\Oh(2^{\Oh(k^2)}nm)$ time this would also prove the promised running time bound.
Note that by property (\ref{p:r:4}) for the recursive calls we have that $|V(T_D)| \leq |D|$ and, consequently, 
$|V(T)| \leq |V(G) \setminus A'| + 1 = |V(G) \setminus S| + 1 - |A' \setminus S|$.
To finish the proof of property (\ref{p:r:4}) it suffices to show that $S \subsetneq A'$.
If $S \subsetneq S'$, the claim is straightforward.
Otherwise, if $S=S'$ is $(2k,k)$-breakable, then Lemma~\ref{lem:breaking-neighbourhood} cannot return $A' = S'$ as $G \setminus S'$ is connected
and $N(V(G) \setminus S') = S'$ is not $(2k,k)$-unbreakable. Consequently, $S' \subsetneq A'$ in this case.
In the remaining case, when $S=S'$ is $(2k,k)$-unbreakable, property (\ref{item:00b}) of Theorem~\ref{thm:decomposition-local-enhanced}
ensures that $S' \subsetneq A'$.
This finishes the proof of Theorem~\ref{thm:decomposition-recursion}.
\end{proof}

\section{Bisection}\label{sec:bisection}
\newcommand{\duzo}{\infty}
\newcommand{\accum}{\mu}

In this section we show a dynamic programming routine
defined on the decomposition given by Theorem~\ref{thm:decomposition-main}.
When handling one bag of the decomposition, we essentially follow the approach
of the high connectivity phase of ``randomized contractions''~\cite{rand-contractions}.
That is, we apply the colour-coding technique in a quite involved fashion, to
highlight the solution in a bag (relying heavily on the unbreakability of the bag),
and then we analyse the outcome by a technical, but quite natural knapsack-style
dynamic programming.

The section is organized as follows.
First, in Section~\ref{ssec:bisection-problem} we define an abstract
problem which encapsulates the computational 
task one needs to perform in a single step of the dynamic
programming procedure.
This one, in turn, is presented in Section~\ref{ssec:bisection-dp}.

Through this section we mostly ignore the study of factors polynomial in the graph
size in the running time of the algorithm, and we use the $\Ohstar(\cdot)$ notation.
We do not optimize the exponent of the polynomial in this dependency, as it adds unnecessary level of technicalities to the description,
distracting from the main points of the reasoning,
and, most importantly, is in fact less 
relevant to the main result of this paper --- the fixed-parameter
tractability of \textsc{Minimum Bisection}.
In Section~\ref{ssec:bisection-ptime} we shortly argue how to obtain the running
time promised in Theorem~\ref{thm:bisectionFPT}.

\subsection{Hypergraph painting}
\label{ssec:bisection-problem}

\defproblemG{\hpname\footnote{We are intentionally not using the name {\sc Hypergraph Colouring},
as it has an established, and different, meaning.} (\hp)}{
  Positive integers $k, b, d, q$, a multihypergraph $H$ with hyperedges of size at most $d$,
  a partial function $\col_0 : V(H) \nrightarrow \{\B, \W\}$, 
  and a function $f_\hedge : \{\B,\W\}^\hedge \times \{0,\ldots,b\} \to \{0,1,\ldots,k,\duzo\}$ for each $\hedge \in E(H)$.
}{ For each $0 \le \accum \le b$, compute the value $w_{\accum}$,
    $$w_\accum = \min_{\col \supseteq \col_0, (a_\hedge)_{\hedge \in E(H)}} \sum_{\hedge \in E(H)} f_\hedge(\col|_\hedge,a_\hedge)\,,$$
  where the minimum is taken over colourings $\col : V(H) \to \{\B,\W\}$ extending $\col_0$
  and partitions of $\accum$ into non-negative integers $\accum = \sum_{\hedge \in E(H)}a_\hedge$,
  and the sum attains value $\duzo$ whenever its value exceeds $k$.
}

We denote $n = |V(H)|$ and $m=|E(H)|$ throughout the analysis of the \hpname{} problem.

We call an instance $(k,b,d,q,H,\col_0,(f_\hedge)_{\hedge \in E(H)})$
a {\em proper} instance of \hpname if the following conditions hold:
\begin{itemize}
 \item ({\bf local unbreakability}), 
    for each $\hedge \in E(H)$, each $\col : \hedge \to \{\B,\W\}$
    marking more than $3k$ vertices of each colour, i.e. $|\col^{-1}(\B)|, |\col^{-1}(\W)| > 3k$, 
    and each $0 \le \accum \le b$ the value $f_\hedge(\col, \accum)$ equals $\duzo$,
  \item ({\bf connectivity}), 
    for each $\hedge \in E(H)$, each $\col : \hedge \to \{\B,\W\}$
    marking at least one vertex with each colour, i.e. $|\col^{-1}(\B)|, |\col^{-1}(\W)| > 0$, 
    and each $0 \le \accum \le b$ the value $f_\hedge(\col, \accum)$ is non-zero,
  \item ({\bf global unbreakability})
  for each $0 \le \accum \le b$ such that $w_\accum < \duzo$ there is a 
  witnessing colouring $\col : V(H) \to \{\B,\W\}$,
  which colours at most $q$ vertices with one of the colours, i.e. $\min(|\col^{-1}(\B)|, |\col^{-1}(\W)|) \le q$.
\end{itemize}
Note that, by local unbreakability, for proper instances each function $f_{\hedge}$
can be represented by at most $(2 \sum_{i=0}^{3k} d^i) \cdot (b+1) \le 4(b+1)d^{3k}$ values which are smaller than $\duzo$.

We are going to use the well-established tool of fixed parameter tractability,
namely the colour-coding technique of Alon, Yuster and Zwick~\cite{color-coding}.
A standard method of derandomizing the colour-coding technique
is to use {\em splitters} of Naor et al.~\cite{naor-schulman-srinivasan-derandom}.
We present our algorithm already in its derandomized form,
and for this reason we use the following abstraction 
of splitters.

\newcommand{\randfamily}{\ensuremath{\mathcal{F}}}

\begin{lemma}[Lemma I.1 of~\cite{rand-contractions}]
\label{lem:splitters}
Given a set $U$ of size $n$, and integers $0 \leq a,b \leq n$, one
can in $O(2^{O(\min(a,b) \log (a+b))} n \log n)$ time
construct a family $\randfamily$ of at most $O(2^{O(\min(a,b) \log (a+b))} \log n)$
subsets of $U$, such that the following holds:
for any sets $A,B \subseteq U$, $A \cap B = \emptyset$, $|A|\leq a$, $|B|\leq b$,
there exists a set $S \in \randfamily$ with $A \subseteq S$ and $B \cap S = \emptyset$.
\end{lemma}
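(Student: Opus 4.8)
The plan is to derive this directly from the splitter construction of Naor, Schulman and Srinivasan~\cite{naor-schulman-srinivasan-derandom}, rather than reproving it from scratch. First I would reduce to the case $a \le b$, i.e. $a = \min(a,b)$. If instead $b < a$, I would build a family for the \emph{complementary} request ``$B \subseteq S$ and $A \cap S = \emptyset$'' (for which $\min$ is now $b$) and then replace every set $S$ in the produced family by its complement $U \setminus S$: a set containing $B$ and disjoint from $A$ becomes, after complementation, a set containing $A$ and disjoint from $B$, and both the size and the running-time bounds are symmetric in $a$ and $b$, so nothing is lost.

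Assuming $a \le b$, I would set $\ell = a+b$ and invoke the Naor--Schulman--Srinivasan construction of an $(n,\ell,\ell^2)$-splitter: a family $\mathcal{H}$ of functions $h : U \to \{1,\dots,\ell^2\}$, of size $\ell^{\Oh(1)}\log n$ and computable in time $\ell^{\Oh(1)} n\log n$, such that for every $Z \subseteq U$ with $|Z| \le \ell$ there is some $h \in \mathcal{H}$ that is injective on $Z$. The family $\randfamily$ is then obtained by taking, for each $h \in \mathcal{H}$ and each $T \subseteq \{1,\dots,\ell^2\}$ with $|T| \le a$, the preimage $h^{-1}(T)$. For correctness, given disjoint $A,B \subseteq U$ with $|A|\le a$, $|B|\le b$ we have $|A\cup B|\le \ell$, so some $h\in\mathcal{H}$ is injective on $A\cup B$; taking $T = h(A)$ gives $|T| = |A|\le a$, hence $h^{-1}(T)\in\randfamily$, and clearly $A\subseteq h^{-1}(T)$, while injectivity of $h$ on $A\cup B$ together with $A\cap B=\emptyset$ forces $h(B)\cap T = h(B)\cap h(A) = \emptyset$, so $B\cap h^{-1}(T) = \emptyset$. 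For the size, $|\randfamily| \le |\mathcal{H}|\cdot\binom{\ell^2}{\le a} \le \ell^{\Oh(1)}\log n\cdot (\ell^2)^{a} = 2^{\Oh(a\log\ell)}\log n = 2^{\Oh(\min(a,b)\log(a+b))}\log n$; since each $h^{-1}(T)$ is written down in $\Oh(n)$ time, the running time is $2^{\Oh(\min(a,b)\log(a+b))} n\log n$ together with the $\ell^{\Oh(1)} n\log n$ cost of building $\mathcal{H}$, which is absorbed into the former.

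I expect the only genuine obstacle to be the careful citation of the splitter construction with exactly these parameters --- namely that an $(n,\ell,\ell^2)$-splitter of size $\mathrm{poly}(\ell)\log n$ and near-linear (in $n$) construction time exists --- and the bookkeeping that makes the exponent of the final bound depend on $\min(a,b)$ rather than on $a+b$. The latter is precisely what the reduction to the case $a\le b$ and the choice to enumerate only subsets $T$ with $|T|\le a$ accomplish; everything else is routine.
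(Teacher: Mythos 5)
This lemma is quoted in the paper from~\cite{rand-contractions} (Lemma I.1) without a proof, so there is no internal argument to compare against. Your derivation --- reducing to the case $a \le b$ by complementation, building an $(n,\,a+b,\,(a+b)^2)$-splitter $\mathcal{H}$ via Naor--Schulman--Srinivasan, and forming $\randfamily$ from the preimages $h^{-1}(T)$ over $h\in\mathcal{H}$ and $|T|\le a$ --- is correct, the bookkeeping on the size and running time is sound, and this is essentially the construction given in the cited reference.
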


In our dynamic programming routine we will use the following operators,
to make the description of the actual algorithm more concise.

\begin{definition}
For two functions $g,h : \{0, \ldots, b\} \to \{0,1,\ldots,k,\duzo\}$
we define functions $g\oplus h,\min(g,h)$ as follows:
\begin{align*}
(g\oplus h)(\accum) & = \min_{\accum_1 + \accum_2 = \accum} g(\accum_1) + h(\accum_2)\,, \\
\min(g,h)(\accum) & = \min(g(\accum), h(\accum)),
\end{align*}
where each integer larger than $k$ is treated as $\duzo$.
\end{definition}

Note that given two functions $g,h$ one can compute $g \oplus h$ in $\Oh(b^2)$ time, and $\min(g,h)$ in $\Oh(b)$ time.

\begin{lemma}
\label{lem:hp}
There is an $\Ohstar(q^{\Oh(k)} \cdot d^{O(k^2)})$ time algorithm solving the \hpname problem for proper instances.
\end{lemma}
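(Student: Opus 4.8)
The target is to solve \hpname on proper instances in time $\Ohstar(q^{\Oh(k)}d^{\Oh(k^2)})$. The natural strategy is colour-coding: a minimum-cost solution $(\col^\star,(a_\hedge))$ has, by \emph{global unbreakability}, some colour class of size at most $q$ — say $|(\col^\star)^{-1}(\W)|\le q$. I would first guess which colour is the ``small'' one (two choices, or just run both). Then I want to highlight the small side together with, for each hyperedge, the $\le 3k$ boundary vertices of the ``large'' side inside that hyperedge — these are exactly the vertices that matter because of \emph{local unbreakability}: any hyperedge whose induced colouring has $>3k$ vertices of each colour contributes $\duzo$ and is thus forbidden in a finite-cost solution. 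So in a feasible finite solution, every hyperedge $\hedge$ is either monochromatic under $\col^\star$, or has at most $3k$ vertices of one of the two colours. Let $R$ be the union of the at-most-$q$ small-side vertices and, over all non-monochromatic hyperedges, the at-most-$3k$ minority-colour vertices on the large side. Since each such hyperedge touches at most one small-coloured vertex set but could be large, I need to be a bit careful: actually the relevant set to highlight per hyperedge is the minority side within that hyperedge, of size $\le 3k$; summed over hyperedges this can be up to $3km$, which is too big to colour-code directly.

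\textbf{Refining the highlighting.} The fix, following the high-connectivity phase of randomized contractions~\cite{rand-contractions}, is to process hyperedges one at a time in a dynamic-programming sweep rather than highlighting everything globally. Order $E(H)=\hedge_1,\dots,\hedge_m$ arbitrarily. Maintain a DP over ``the colouring restricted to $\bigcup_{i\le j}\hedge_i$ has been decided'' — but the state space of all such restricted colourings is huge, so instead I keep only the information that survives: for the partial colouring built so far I record (i) the colouring of the small side seen so far, which has size $\le q$, hence is describable by a subset of a colour-coded set of size $q$, and (ii) for the ``current frontier'' — vertices in $\hedge_{j+1}\cup\dots\cup\hedge_m$ that have already been assigned a colour because they lay in an earlier hyperedge — I only need to know the colour of those that are going to be in the minority side of some future hyperedge. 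Concretely: use Lemma~\ref{lem:splitters} to construct, for a universe $V(H)$, a family $\randfamily$ that splits sets of size $q$ (the whole small side) from the rest; and separately handle the per-hyperedge minority sets of size $3k$ by a second splitter applied inside each hyperedge's vertex set, of which there are $d^{\Oh(k)}$ relevant patterns per hyperedge (this is where $d^{\Oh(k)}$ per edge, hence $d^{\Oh(k^2)}$ overall once multiplied through the $k$-bounded interaction count, enters). For each fixed choice of a set $S\in\randfamily$ (the guessed small side, a subset of $S$) and for each hyperedge a guessed minority-pattern, the colouring of every vertex that matters is determined: vertices in $S$ that we designate small get $\W$, everything else on the large side gets $\B$, except the $\le 3k$ designated minority vertices per non-monochromatic hyperedge. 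Then the instance collapses to a knapsack: for each hyperedge $\hedge$ the colouring $\col|_\hedge$ is now fixed, so $f_\hedge(\col|_\hedge,\cdot)$ is a fixed function $\{0,\dots,b\}\to\{0,\dots,k,\duzo\}$, and we must compute, for every $0\le\accum\le b$, the minimum over partitions $\accum=\sum a_\hedge$ of $\sum f_\hedge(\col|_\hedge,a_\hedge)$ — which is exactly iterated application of the $\oplus$ operator: $w=\bigoplus_{\hedge\in E(H)}f_\hedge(\col|_\hedge,\cdot)$, computable in $\Oh(mb^2)$ time. Finally take the pointwise minimum over all colour-coding guesses.

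\textbf{Correctness and bookkeeping.} Correctness splits into two directions. Soundness: any colouring $\col$ arising from a guess extends $\col_0$ (we must intersect each $\randfamily$-guess with the constraint imposed by $\col_0$, discarding inconsistent guesses) and, together with the partition achieving the $\oplus$, gives a valid solution, so $w_\accum$ computed is an upper bound on the true value; by \emph{connectivity} we don't accidentally undercount (a non-monochromatic hyperedge genuinely pays a positive cost, but that's automatically respected since we use the true $f_\hedge$). Completeness: take an optimal finite-cost solution $(\col^\star,(a^\star_\hedge))$; by global unbreakability one colour class, say $\W$, has $\le q$ vertices; by local unbreakability every hyperedge is monochromatic or has a minority side of size $\le 3k$; apply the splitting property of $\randfamily$ to the pair (small side $W^\star$, rest) to get a set $S\supseteq W^\star$ disjoint from $(\col^\star)^{-1}(\B)\setminus(\text{designated minority vertices})$ — here one needs the splitter to separate $W^\star$ together with all large-side minority vertices (total size $q+3k\cdot(\#\text{non-monochromatic hyperedges})$) which is again too large, so the genuine argument must localize: split \emph{within} each hyperedge, using that a minority side of a non-monochromatic hyperedge has size $\le 3k$ and these interact with the global small side in a bounded way. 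This localization — arguing that the ``interesting'' structure has size bounded by $\poly(k,d)$ around each hyperedge rather than $\Omega(m)$, so that a product of a global size-$q$ splitter with per-hyperedge size-$(3k)$-in-$d$ splitters suffices — is the technical heart and the main obstacle; it is exactly the subtlety that makes this ``more technical than in~\cite{rand-contractions}.'' The running-time bound $\Ohstar(q^{\Oh(k)}d^{\Oh(k^2)})$ then follows: $|\randfamily|=2^{\Oh(k\log(q+k))}\log n=q^{\Oh(k)}\cdot\Ohstar(1)$ wait — more precisely the global splitter separating a size-$q$ set costs $2^{\Oh(\min(q,\cdot)\log\cdot)}$, which we only afford because the relevant ``small-plus-boundary'' object has size $\Oh(k)$ not $\Oh(q)$; the genuine count of colour-coding guesses is $q^{\Oh(k)}$ from choosing which $\Oh(k)$-sized boundary-configuration of the small side is active, times $d^{\Oh(k^2)}$ from the per-hyperedge minority patterns across the $\Oh(k)$ hyperedges that can be simultaneously non-monochromatic in a bounded region, times the knapsack factor $\poly(b,m)$ absorbed into $\Ohstar$. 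I would organize the write-up as: (1) reduce to the case one colour class has size $\le q$; (2) set up the splitter family and the per-hyperedge pattern enumeration; (3) for a fixed guess, fix the colouring and run the $\oplus$-knapsack; (4) prove the completeness lemma via the localization argument; (5) tally the running time.
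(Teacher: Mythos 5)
Your high-level framing — colour-coding plus a knapsack over $\oplus$ — is the right one, and you correctly identify the exact place the na\"ive version fails (highlighting the small side plus every per-hyperedge minority pattern is $\Omega(m)$ information). But the proposal stops precisely there: you write ``the genuine argument must localize... This localization is the technical heart and the main obstacle,'' and you do not supply it. That missing step is the content of the lemma, so the proposal does not constitute a proof.

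The paper's resolution avoids vertex-side highlighting altogether. The splitter is applied to the \emph{hyperedge set} $E(H)$, not to $V(H)$, and the two sets to separate are chosen so that both have the right sizes: $E_1$, the set of hyperedges bichromatic under $\col_{opt}$, has size $\le k$ (by connectivity, each contributes cost $\ge 1$ and the total is $\le k$); and $E_0$, a \emph{spanning forest of the hypergraph of all-white hyperedges}, has size $\le q$ (a forest on $\le q$ vertices has $\le q$ hyperedges). So one invocation of Lemma~\ref{lem:splitters} with parameters $(k,q)$ costs $q^{\Oh(k)}$, pins down a set $S_0$ with $E_1 \subseteq S_0$ and $E_0 \cap S_0 = \emptyset$, and no localization argument about ``bounded interaction regions'' is ever needed. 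A second, cheap layer — a $(|S|,k)$-perfect family plus a guess $\delta':\{1,\dots,k\}\to\{1,\dots,\ell\}$ with $\ell=d^{\Oh(k)}$ — assigns to each of the $\le k$ elements of $E_1$ one of the $\ell$ admissible bichromatic patterns, costing $d^{\Oh(k^2)}$. This is where your factor $d^{\Oh(k^2)}$ actually comes from; your account (``across the $\Oh(k)$ hyperedges that can be simultaneously non-monochromatic in a bounded region'') gestures at the right magnitude but without the hyperedge-side splitter there is no bounded region to point to.

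You also omit the machinery that turns a fixed assignment $p:E(H)\to\{0,\dots,\ell\}$ into a fixed colouring: the auxiliary graph $L_p$ (cliques on hyperedges declared monochromatic; cliques on each colour class of declared-bichromatic hyperedges), the two cleanup operations that resolve conflicts while preserving goodness of $p$, and the classification of components of $L_p$ into ``black'' versus ``potentially white.'' It is this structure — in particular the fact that, after cleanup, the hyperedges partition cleanly into $E_{black}$ and the per-component sets $E_D$ (Claim~\ref{claim:hp:5}) — that makes the final $\oplus$-knapsack well-defined with a binary choice per potentially-white component. Without it, your step ``for a fixed guess, fix the colouring and run the knapsack'' is not fully specified, because a guess on hyperedges does not directly fix a vertex colouring.

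In short: right skeleton, correct target running time, honest identification of the obstacle — but the obstacle is unsolved in your write-up, and the paper's solution (split on hyperedges; use a spanning forest to bound $|E_0|\le q$; use $L_p$ plus cleanup to reconstruct component colours) is genuinely different from the vertex-side localization you sketch.
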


\begin{proof}
First, let us fix the value of $\accum$, $0 \le \accum \le b$.
Our goal is to compute a single value $w_\accum$\footnote{Actually our algorithm after a minor modification
computes all the values $w_\accum$ at once, however for the sake of simplicity
we focus on a single value of $\accum$, at the cost of higher polynomial factor.}.
It is enough to compute the correct value of $w_\accum$
assuming 
\begin{align}
\label{eq:hp:0}
w_\accum < \duzo\,.
\end{align}
Consequently, by the global unbreakability property let $\col_{opt} : V(H) \to \{\B,\W\}$ 
be a colouring witnessing the value $w_\accum$, that colours at most $q$
vertices with one of the colours.
Without loss of generality let us assume that 
\begin{align}
\label{eq:hp:1}
|\col_{opt}^{-1}(\W)| \le q\,,
\end{align}
as the other case $|\col^{-1}(\B)| \le q$ is symmetric.

Observe that by the local unbreakability property for
each $\hedge \in E(H)$ there are at most $\ell=4d^{3k}=d^{O(k)}$ possible 
colourings leading to a value of $f_\hedge$ which is different than $\duzo$.
For each $\hedge\in E(H)$ let us order the possible bichromatic colourings of $\hedge$ arbitrarily, and
for $1 \le i \le \ell$ let $\col_{\hedge,i}$ be the $i$-th 
of the possible colouring which is bichromatic on $\hedge$
(if the number of such colourings is smaller than $\ell$ we append the sequence with arbitrary bichromatic colourings).

We want to assign each $\hedge \in E(H)$ to be in one of the following states:
\begin{itemize}
  \item $\hedge$ is definitely monochromatic,
  \item $\hedge$ is either monochromatic, 
  or should be coloured as in $\col_{\hedge,i}$ for a fixed $1 \le i \le \ell$.
\end{itemize}
Formally, for an assignment $p : E(H) \to \{0,\ldots,\ell\}$ 
by $p(\hedge)=0$ we express the ``definitely monochromatic'' state,
and by $p(\hedge)=i > 0$ we express the ``either monochromatic or $i$-th type of bichromatic colouring'' state.

Let $E_{white} = \{\hedge \in E(H) : \col_{opt}(\hedge) = \{\W\}\}$ be the multiset of monochromatic edges
of $E(H)$ coloured all white with respect to $\col_{opt}$.
Moreover let $E_0 \subseteq E_{white}$ be any spanning forest of the hypergraph $(V(H),E_{white})$.
By (\ref{eq:hp:1}) we have $|E_0| \le q$.
Let $E_1 \subseteq E(H)$ be the set of 
edges which are bichromatic with respect to $\col_{opt}$.
Note that by the the connectivity property together
with (\ref{eq:hp:0}) we have $|E_1| \le k$.

We call an assignment $p$ {\emph{good}}, with respect to $\col_{opt}$,
if:
\begin{itemize}
  \item for each $\hedge \in E_1$ we have $p(\hedge)>0$ and $\col_{opt}|_\hedge = \col_{\hedge,p(\hedge)}$,
  \item for each $\hedge \in E_0$ we have $p(\hedge) = 0$.
\end{itemize}

Let $\randfamily$ be a family constructed by the algorithm
of Lemma~\ref{lem:splitters} for the universe $E(H)$
and integers $k,q$ in $\Ohstar(q^{O(k)})$ time.
By the properties of $\randfamily$ there exists
$S_0 \in \randfamily$ such that $E_1 \subseteq S_0$ and $E_0 \cap S_0 = \emptyset$.
We iterate through all possible $S\in \randfamily$; in one of the cases we have $S=S_0$.

In the second level of derandomization we use 
the standard notion of perfect families.
An $(N,r)$-{\em{perfect family}} is a family 
of functions from $\{1,2,\ldots,N\}$ to
$\{1,2,\ldots,r\}$,
such that for any subset $X \subseteq \{1,2,\ldots,N\}$ of size $r$, one of the functions
in the family is injective on $X$. Naor et al.~\cite{naor-schulman-srinivasan-derandom}
gave an explicit construction of an $(N,r)$-perfect family of size $\Oh(e^r r^{\Oh(\log r)} \log N)$
using $\Oh(e^r r^{\Oh(\log r)} N \log N)$ time.
We construct a $(|S|,k)$-perfect family $\mathcal{D}$ of size $\Oh(2^{\Oh(k)} \log |S|)$.
Assuming that we consider the case when $S=S_0$, there exists a function $\delta_0\in \mathcal{D}$, $\delta_0: S_0 \to \{1,\ldots,k\}$, such that $\delta_0$ is injective on $E_1$.
We iterate through all possible functions $\delta\in \mathcal{D}$; providing that $S=S_0$, in one case we have $\delta=\delta_0$.

Finally, we guess, by trying all $\ell^{\Oh(k)}=d^{\Oh(k^2)}$ possibilities,
a function $\delta':\{1,\ldots,k\} \to \{1,\ldots,\ell\}$.
In the case where $S=S_0$ and $\delta=\delta_0$, for at least one such $\delta'$ we have that $\delta'(\delta_0(\hedge))=i$ holds for each $\hedge \in E_1$,
where $\col_{opt}|_\hedge=\col_{\hedge,i}$. 
Summing up, in one of the $\Ohstar(q^{\Oh(k)} \cdot d^{\Oh(k^2)})$ cases
we will end up having a good assignment $p : E(H) \to \{0, \ldots, \ell\}$ at hand.


For an assignment $p : E(H) \to \{0,\ldots,\ell\}$
define an auxiliary undirected simple graph $L_p$,
with a vertex set $V(L) = V(H)$.
For each edge $\hedge \in E(H)$ such that $p(\hedge)=0$
make $\hedge$ a clique in $L_p$.
For each edge $\hedge \in E(H)$ such that $p(\hedge) = i > 0$
make the sets $\col_{\hedge,i}^{-1}(\B)$, $\col_{\hedge,i}^{-1}(\W)$
cliques in $L_p$.

\begin{claim}
\label{claim:hp:1}
If $p$ is a good assignment, 
then all the vertices contained in the same connected
component of $L_p$ are coloured with the same colour
in $\col_{opt}$.
\end{claim}

\begin{proof}
Follows directly from the assumption that $p$ is a good
assignment and from the definition of the graph $L_p$.
\end{proof}

\begin{claim}
\label{claim:hp:2}
Let $p$ be a good assignment.
If $D$ is a connected component of $L_p$
coloured white by $\col_{opt}$,
then each edge $\hedge \in E(H)$ 
such that $\hedge \cap D \neq \emptyset$
and $\hedge \setminus D \neq \emptyset$,
belongs to $E_1$.
\end{claim}

\begin{proof}
Assume that $\hedge \notin E_1$, that is, $\hedge$ is monochromatic in $\col_{opt}$.
As $\hedge \cap D \neq \emptyset$, $\col_{opt}$ needs to colour all elements of $\hedge$ white,
and $D \in E_{white}$.
However, since $E_0$ is a spanning forest of the hypergraph $(V(H),E_{white})$
and $p$ is a good assignment, then all the elements of $F$ are contained in the same connected
component of $L_p$, and consequently $\hedge \subseteq D$.
\end{proof}

From now on let us assume that $p$ is a good assignment.

Let us modify the assignment $p$ as follows; note that we modify also the graph $L_p$ along with $p$.
As long as possible perform one of the following
two operations, preferring the first one over the second one:
\begin{enumerate}
  \item If there exists an edge $\hedge \in E(H)$
  such that $\hedge \subseteq D$ for some connected component of $L_p$
  and $p(\hedge) > 0$, then set $p(\hedge) = 0$.
  \item If there exist vertices $v_1, v_2 \in D$ (potentially $v_1 = v_2$),
and hyperedges $\hedge_1,\hedge_2 \in E(H)$ (potentially $\hedge_1=\hedge_2)$ intersecting a connected component $D$ of $L_p$, 
    such that $\hedge_1 \setminus D \neq \emptyset$, $\hedge_2 \setminus D \neq \emptyset$, $p(\hedge_1)=i > 0$, $p(\hedge_2) = j > 0$,
  and $\col_{\hedge_1,i}(v_1)=\W$ and $\col_{\hedge_2,j}(v_2) = \B$,
  then set $p(\hedge_1)=0$.
\end{enumerate}
As in each round the number of edges of $E(H)$ assigned zeros is strictly increasing, the process
finishes in polynomial time.

\begin{claim}
After each step $p$ remains a good assignment.
\end{claim}

\begin{proof}
First assume that the first type of operation was performed.
By Claim~\ref{claim:hp:1} all the vertices
of $D$ are coloured with the same colour by $\col_{opt}$,
hence in particular all the vertices
of $\hedge$ are coloured with the same colour in $\col_{opt}$
and definitely $\hedge \not\in E_1$. Hence it is safe
to set $p(\hedge) = 0$.

Now assume that the second type of operation was performed.
We want to show that $\hedge_1$ is monochromatic in $\col_{opt}$.
Assume the contrary, i.e., $\hedge_1 \in E_1$.
Since $p$ was a good assignment (before the operation) we have $\col_{opt}|_{\hedge_1} = \col_{\hedge_1,p(\hedge_1)}$, which together with Claim~\ref{claim:hp:1}
implies that all the vertices of $D$ are coloured white by $\col_{opt}$.
However by Claim~\ref{claim:hp:2} this means that $\hedge_2 \in E_1$,
and as $p$ is a good assignment this means that $\col_{opt}$
colours all the vertices of $D$ black, a contradiction.
\end{proof}

We call a connected component $D$ of $L_p$
a \emph{black component} if there exists an edge $\hedge_2 \in E(H)$,
such that $p(\hedge_2) = i > 0$
and $\col_{\hedge_2,i}^{-1}(\B) \cap D \neq \emptyset$.
Otherwise we call $D$ a \emph{potentially white} component.
The next two claims show that these names are in fact meaningful.

\begin{claim}
\label{claim:hp:3}
For any black component $D$ of $L_p$, $\col_{opt}$ colours all vertices of $D$ black.
\end{claim}

\begin{proof}
Let $\hedge$ be an edge witnessing $D$ is a black component.
Note that $\hedge \not\subseteq D$, as otherwise the first operation would be applicable to $\hedge$.
By Claim~\ref{claim:hp:1}, $\col_{opt}$ colours all vertices of $D$ in the same colour.
If this colour is white, then, by Claim~\ref{claim:hp:2}, $\hedge \in E_1$, and, as $p$
is a good assignment, $\col_{\hedge,p(\hedge)} = \col_{opt}|_\hedge$.
However, this contradicts the assumption that $\col_{\hedge,p(\hedge)}$ colours some
vertex of $D$ black, and, consequently, $\col_{opt}$ colours $D$ black.
\end{proof}

\begin{claim}
\label{claim:hp:4}
Let $D$ be a potentially white component of $L_p$
and let $E_D \subseteq E(H)$ be the subset 
of edges with non-empty intersection with $D$.
Exactly one of the following conditions holds:
\begin{itemize}
\item $\col_{opt}$ colours all vertices of $D$ white, and
for each edge $\hedge \in E_D$ either
\begin{itemize}
\item $\hedge \not\subseteq D$, $p(\hedge) > 0$, $\hedge \in E_1$, $\col_{opt}|_\hedge = \col_{\hedge,p(\hedge)}$, or
\item $\hedge \subseteq D$, $p(\hedge)=0$ and $\col_{opt}$ colours all vertices of $\hedge$ white;
\end{itemize}
\item $\col_{opt}$ colours $D$ and each each edge of $E_D$ entirely black.
\end{itemize}
\end{claim}

\begin{proof}
Let $E' \subseteq E_D$ be the subset of edges of $E_D$
which are not fully contained in $D$.
By Claim~\ref{claim:hp:1}, $\col_{opt}$ colours $D$ monochromatically.

Assume first that $\col_{opt}$ colours all vertices of $D$ white, and
consider $\hedge \in E_D$. If $\hedge \subseteq D$ then $p(\hedge) = 0$
by the application of the first operation, and $\hedge \in E_{white}$.
If $\hedge \not\subseteq D$ then, by Claim~\ref{claim:hp:2}, $\hedge \in E_1$.
Since $p$ is good, $p(\hedge) > 0$ and $\col_{opt}|_\hedge = \col_{\hedge,p(\hedge)}$.

We are left with the case when $\col_{opt}$ colours all vertices of $D$ black.
Consider $\hedge \in E_D$.
If $p(\hedge) = 0$ then the assumption that $p$ is good
implies that $\col_{opt}$ colours $\hedge$ monochromatically; as $\hedge \cap D \neq \emptyset$
then $\hedge$ is coloured black by $\col_{opt}$.
If $p(\hedge) = i > 0$ then, since $D$ is potentially white,
$\col_{\hedge,i}(v) = \W \neq \col_{opt}(v)$
for any $v \in \hedge \cap D$. Consequently,
$\col_{opt}|_\hedge \neq \col_{\hedge,i}$ and, since $p$ is good, $\hedge \notin E_1$.
Therefore $\col_{opt}$ colours $\hedge$ monochromatically, and, since $\hedge \cap D \neq \emptyset$,
it colours $\hedge$ black.
\end{proof}

Let $E_{black}$ be the set of all edges of $E(H)$
contained in black components of $L_p$. The following claim states that we may consider sets $E_{black}$ and $E_D$ for different potentially white components $D$ independently.

\begin{claim}\label{claim:hp:5}
Every edge $\hedge\in E(H)$ belongs to exactly one of the sets: to $E_{black}$ or to one of the sets $E_D$ for potentially white connected components $D$ of $L_p$.
\end{claim}
\begin{proof}
Assume first that $\hedge\subseteq D$ for some connected component $D$ of $L_p$. Then either $D$ is black and $\hedge\in E_{black}$, or $D$ is potentially white and $\hedge\in E_D$.

Assume now that $\hedge$ is not entirely contained in any connected component of $L_p$. By the construction of $L_p$, we have that $p(\hedge)=i>0$ and $\hedge$ intersects exactly two different components $D_1,D_2$ of $L_p$, such that w.l.o.g. $\col_{\hedge,i}^{-1}(\W)=D_1\cap \hedge$ and $\col_{\hedge,i}^{-1}(\B)=D_2\cap \hedge$. To prove the claim it suffices to show that (a) $D_1$ is potentially white, and (b) $D_2$ is black, as then $\hedge$ will belong only to $E_{D_1}$ among the sets present in the statement of the claim. For (a), observe that otherwise the second operation would set $p(\hedge)=0$, and (b) follows directly from the definition of being black.
\end{proof}

Armed with Claims~\ref{claim:hp:3},~\ref{claim:hp:4} and~\ref{claim:hp:5}, we proceed to presenting the algorithm. First, we need to include the constraints imposed by the colouring $\col_0$.
To this end, for any hyperedge $\hedge$, $0 \leq \accum \leq b$ and
 any colouring $\col: \hedge \to \{\B,\W\}$
we set
$$\hat{f}_\hedge(\col,\accum) = \begin{cases}
\duzo & \textrm{ if } \exists_{v \in \hedge}\ \col(v) \neq \col_0(v) \\
f_\hedge(\col,\accum) & \textrm {otherwise.}
\end{cases}$$
That is, we set the cost of colouring $\hedge$ with $\col$
as $\duzo$ whenever $\col$ conflicts with $\col_0$ on some vertex.

Now we handle edges contained entirely in black components.
For an edge $\hedge \in E_{black}$
let $\hat{f}_{\hedge}^{black}:\{0,\ldots,b\} \to \{0,1,\ldots,k,\duzo\}$ be the function
$$\hat{f}_{\hedge}^{black}(\accum) = \hat{f}_{\hedge}(\{\B\}^\hedge,\accum)\,.$$

Let $t: \{0,\ldots,b\} \to \{0,1,\ldots,k,\duzo\}$ be a function
such that $t(0) = 0$ and $t(\accum)=\duzo$ for $\accum> 0$.
For each edge $\hedge \in E_{black}$ we update the function $t$
by setting $t:=t \oplus \hat{f}_{\hedge}^{black}$.
It remains to process all the edges $E(H) \setminus E_{black}$.

Consider all the white components $D$ of $L_p$ one by one.
Let $t_1,t_2: \{0,\ldots,b\} \to \{0,1,\ldots,k,\duzo\}$ be functions
such that $t_1(0) = t_2(0) = 0$ and $t_1(\accum)=t_2(\accum)=\duzo$ for $\accum> 0$.
We want to make $t_1$ represent the case when all the edges of $E_D$ are black,
while $t_2$ represent the other case of Claim~\ref{claim:hp:4}.
First, for each edge $\hedge \in E_D$ set $t_1:=t_1 \oplus \hat{f}_\hedge(\{\B\}^\hedge,\cdot)$.
Moreover, for each edge $\hedge \in E_D$ such that $p(\hedge)=0$
do $t_2:= t_2 \oplus \hat{f}_\hedge(\{\W\}^\hedge,\cdot)$,
while for each edge $\hedge \in E_D$ such that $p(\hedge) > 0$
do $t_2:= t_2 \oplus \hat{f}_\hedge(\col_{\hedge,p(\hedge)},\cdot)$.


Finally make the update 
$$t:=\min(t \oplus t_1, t \oplus t_2)\,.$$
At the end of the process the value $t(\accum)$ equals $w_\accum$
and the correctness of our algorithm follows from Claim~\ref{claim:hp:3}, Claim~\ref{claim:hp:4}, and Claim~\ref{claim:hp:5}.
\end{proof}

\subsection{Dynamic programming}
\label{ssec:bisection-dp}

In this section we show that by
constructing a tree decomposition
from Theorem~\ref{thm:decomposition-main}
and invoking the algorithm of Lemma~\ref{lem:hp}
one can solve the {\sc Minimum Bisection}
problem in $\Ohstar(2^{O(k^3)})$ time, proving Theorem~\ref{thm:bisectionFPT}
.

\begin{proof}[Proof of Theorem~\ref{thm:bisectionFPT}]
First note that, without loss of generality, we may focus on the following variant:
the input graph $G$ is required to be connected, and our goal is to partition $V(G)$ into parts $A$ and $B$ of prescribed size
minimizing $|E(A,B)|$. The algorithm for the classic \textsc{Minimum Bisection} problem follows
from a standard knapsack-type dynamic programming on connected components of the input graph.

As the input graph is connected, we may use Theorem~\ref{thm:decomposition-main}.
Let $(T,\beta)$ be a tree decomposition
constructed by the algorithm of Theorem~\ref{thm:decomposition-main}
in $\Ohstar(2^{O(k^2)})$ time.

As usual for tree decompositions, we will
use a dynamic programming approach.
For a node $t \in V(T)$ of the tree decomposition,
an integer $\accum$, $0 \le \accum \le n$,
and a colouring $\col_0 : \sigma(t) \to \{\B,\W\}$
satisfying 
\begin{align}
\label{eq:bisection:1}
\min(|\col_0^{-1}(\B)|, |\col_0^{-1}(\W)|) \le 3k\,,
\end{align}
we consider a variable $x_{t,\col_0,\accum}$.

The variable $x_{t,\col_0,\accum}$
equals the minimum cardinality of a set
$Z \subseteq E(G[\gamma(t)])$,
such that there exists a colouring $\col : \gamma(t) \to \{\B,\W\}$,
where $\col|_{\sigma(t)} = \col_0$,
no edge of $E(G[\gamma(t)]) \setminus Z$
is incident to two vertices of different colours in $\col$,
and the total number of white vertices equals $\accum$,
i.e. $|\col^{-1}(\W)| = \accum$.
Additionally if it is impossible to find such a colouring $\col$,
or the number of edges one needs to include in $Z$ is greater than $k$,
then we define $x_{t,\col_0,\accum} = \duzo$.
The restriction (\ref{eq:bisection:1}) of $\col_0$ 
will be used to optimize the running time.

As Theorem~\ref{thm:decomposition-main} upper bounds
the cardinality of $\sigma(t)$ by $2^{O(k)}$,
the total number of values $x_{t,\col_0,\accum}$
we want to compute is $\Oh(2^{O(k^2)} n^2)$.
Note that having all those values is enough
to solve the considered variant of the {\sc Minimum Bisection} problem
as the minimum possible size of the cut $E(A,B)$
equals $x_{r,\emptyset,a}$, where $r$
is the root of $(T,\beta)$, $\emptyset$
plays the role of the single colouring
of $\sigma(r) = \emptyset$ and $a$ is the prescribed size of one part of the partition we are looking for.
The value $x_{r,\emptyset,a}$ attains $\duzo$ if any feasible cut $E(A,B)$ is of size larger than $k$.
We will compute the values $x_{t,\cdot,\cdot}$
in a bottom-up manner, that is our computation
is performed for a node $t \in V(T)$ only
after all the values $x_{t',\cdot,\cdot}$
for $t' \prec t$ have been already computed.

Consider a fixed $t \in V(T)$
and a colouring $\col_0 : \sigma(t) \to \{\B,\W\}$
satisfying~(\ref{eq:bisection:1}).
In what follows we show how to find
all the values $x_{t,\col_0,\cdot}$
by solving a single proper instance of the \hpname
problem.
Create an auxiliary hypergraph $H$,
with a vertex set $V(H) = \beta(t)$ 
and the following set of edges; in the following we use Iverson notation, i.e., $[\varphi]$ is equal to $1$ if the condition $\varphi$ is true and $0$ otherwise.
\begin{enumerate}[(a)]
  \item\label{he:type1} For each vertex $v \in \beta(t)$
  add to $H$ a hyperedge $\hedge=\{v\}$,
  and define a function $f_\hedge : \{\B,\W\}^\hedge \times \{0,\ldots,n\} \to \{0,1,\ldots,k,\duzo\}$
$$f_\hedge(\col_F,\accum) = \begin{cases}
0 & \text{ if } \accum=[\col_F(v) = \W],\\
\duzo & \text{ otherwise.}  
\end{cases}$$
  We introduce those edges in order to keep track of the number of white vertices in $\beta(t)$.
  \item\label{he:type2}For each edge $uv \in E(G[\beta(t)])$
  add to $H$ a hyperedge $\hedge=\{u,v\}$,
  and define a function $f_\hedge : \{\B,\W\}^\hedge \times \{0,\ldots,n\} \to \{0,1,\ldots,k,\duzo\}$
$$f_\hedge(\col_F,\accum) = \begin{cases}
[\col_F(u) \neq \col_F(v)] & \text{ if } \accum=0,\\
\duzo & \text{ otherwise.}  
\end{cases}$$
  We introduce those edges in order to keep track of the number of
 edges with endpoints of different colours in $G[\beta(t)]$.
  \item\label{he:type3} For each $t' \in V(T)$ which 
  is a child of $t$ in the tree decomposition
  add to $H$ a hyperedge $\hedge=\sigma(t')$,
  and define a function $f_\hedge : \{\B,\W\}^\hedge \times \{0,\ldots,n\} \to \{0,1,\ldots,k,\duzo\}$
$$f_\hedge(\col_F,\accum) = 
\begin{cases}
\duzo & \text{ if } \min(|\col_F^{-1}(\B)|,|\col_F^{-1}(\W))| > 3k \text{ or } x_{t',\col_F,\accum+\accum_0} = \duzo\\
x_{t',\col_F,\accum+\accum_0} - x_0  & \text{ otherwise,}
\end{cases}
$$
where $\accum_0=|\col_F^{-1}(\W)|$, 
and $x_0=|\{uv \in E(G[\sigma(t')]) : \col_F(u) \neq \col_F(v)\}|$.
Less formally, we are shifting values $x_0$ and $\accum_0$ 
in order not to overcount white vertices of $\sigma(t')$
and edges of $G[\sigma(t')]$ having endpoints of different colours in $\col_F$,
as a vertex of $\sigma(t')$ might appear in several bags,
and similarly an edge of $G[\sigma(t')]$ may have both endpoints in several bags
being children of $t'$. 
\end{enumerate}
Note that each of the edges of $H$ is 
of size at most $\eta$ (by Theorem~\ref{thm:decomposition-main}),
hence $$I=(k,n,\eta,q,H,\col_0,(f_\hedge)_{\hedge \in E(H)})$$
is an instance of the \hpname problem for any $q$,
which we are about to define.

\begin{claim}
\label{claim:bisection:helpful}
Let $(w_\accum)_{0 \le \accum \le n}$ be the solution for the instance
$I$ of the \hpname problem.
Then for any $0 \le \accum \le n$ we have $x_{t,\col_0,\accum} = w_\accum$.

Moreover for any colouring $\col : \beta(t) \to \{\B,\W\}$
witnessing $w_\accum \le k$ there is an extension $\col' : \gamma(t) \to \{\B,\W\}$,
such that $\col'|_{\beta(t)} = \col$, and
the number of bichromatic edges of $G[\gamma(t)]$ with respect to $\col'$
equals $w_\accum$.
\end{claim}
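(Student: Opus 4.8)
The plan is to prove Claim~\ref{claim:bisection:helpful} by exhibiting a bijective-in-spirit correspondence between, on the one hand, colourings $\col : \gamma(t) \to \{\B,\W\}$ extending $\col_0$ together with the induced count of bichromatic edges of $G[\gamma(t)]$, and on the other hand, colourings $\col$ of $V(H) = \beta(t)$ extending $\col_0$ together with a partition $\accum = \sum_{\hedge} a_\hedge$ and the induced value $\sum_{\hedge} f_\hedge(\col|_\hedge, a_\hedge)$. The point is that the dynamic-programming variables $x_{t',\cdot,\cdot}$ for children $t'$ already summarise, by induction, exactly the information needed for the part of $\gamma(t)$ below $\sigma(t')$, so a colouring of $\gamma(t)$ decomposes as a colouring of $\beta(t)$ plus, independently for each child $t'$, a choice of how many white vertices and how many cut edges appear strictly below (the hyperedge $\hedge = \sigma(t')$ encodes this choice via $a_\hedge$ and $f_\hedge$), plus the local contributions of the single-vertex and edge hyperedges of types~(\ref{he:type1}) and~(\ref{he:type2}).

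\textbf{From a colouring of $\gamma(t)$ to an HP solution.} Given $\col' : \gamma(t) \to \{\B,\W\}$ extending $\col_0$ with at most $k$ bichromatic edges in $G[\gamma(t)]$ and with $|\col'^{-1}(\W)| = \accum$, set $\col = \col'|_{\beta(t)}$. For each child $t'$, consider $\col'|_{\gamma(t')}$: it is a candidate realising $x_{t',\col'|_{\sigma(t')},\accum'}$ where $\accum' = |\col'^{-1}(\W) \cap \gamma(t')|$, so this $x$-value is at most the number of bichromatic edges of $G[\gamma(t')]$ under $\col'$. Define $a_{\hedge}$ for $\hedge = \sigma(t')$ to be $\accum' - \accum_0$ (the white vertices of $\gamma(t') \setminus \sigma(t')$), and $a_\hedge = 0$ for type-(\ref{he:type1}) and type-(\ref{he:type2}) hyperedges except that the type-(\ref{he:type1}) hyperedge $\{v\}$ gets $a_{\{v\}} = [\col(v) = \W]$. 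One checks that the subtree-property of tree decompositions — each vertex and each edge of $G$ lives in a connected part of $T$, so edges and vertices of $\gamma(t')\setminus\sigma(t')$ are confined below $t'$ and are not double-counted — makes the $\accum$ values add up to $\accum$ and the $f_\hedge$ values add up to exactly the bichromatic-edge count of $G[\gamma(t)]$ under $\col'$. Hence $w_\accum \le x_{t,\col_0,\accum}$.

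\textbf{From an HP solution back to a colouring of $\gamma(t)$.} Conversely, take an optimal $(\col, (a_\hedge))$ for $w_\accum < \duzo$. For each child $t'$ with $\hedge = \sigma(t')$, the finite value $f_\hedge(\col|_\hedge, a_\hedge) = x_{t',\col|_\hedge, a_\hedge + \accum_0} - x_0$ means (inductively, using the ``Moreover'' part of the claim applied at $t'$) that there is an extension of $\col|_{\sigma(t')}$ to $\gamma(t')$ achieving exactly $x_{t',\col|_\hedge,a_\hedge+\accum_0}$ bichromatic edges in $G[\gamma(t')]$, with $a_\hedge + \accum_0$ white vertices there. Glue all these extensions to $\col$; they are consistent on the only overlaps, namely the sets $\sigma(t')$, and two different children $t_1', t_2'$ share only vertices of $\beta(t)$ by the definition of tree decompositions, so the gluing is well-defined. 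The resulting $\col' : \gamma(t) \to \{\B,\W\}$ has exactly $\accum$ white vertices and exactly $w_\accum$ bichromatic edges in $G[\gamma(t)]$, the type-(\ref{he:type2}) hyperedges accounting for the bichromatic edges inside $\beta(t)$ and the correction terms $x_0$ ensuring edges of $G[\sigma(t')]$ are counted once. This gives $x_{t,\col_0,\accum} \le w_\accum$ together with the ``Moreover'' statement. Combining the two inequalities yields equality. Finally, the restriction to colourings $\col_0$ satisfying~(\ref{eq:bisection:1}) and the ``$\min \le 3k$'' guard in the definition of type-(\ref{he:type3}) hyperedges only prune colourings that would make $f_\hedge = \duzo$ anyway, by the local/global unbreakability of the bags and adhesions from Theorem~\ref{thm:decomposition-main}, so no optimal solution is lost.

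\textbf{Main obstacle.} The delicate point is the bookkeeping of shared vertices and edges: a vertex of $\sigma(t')$, or an edge both of whose endpoints lie in $\sigma(t')$, may lie in $\gamma(t')$ for several children $t'$ as well as in $\beta(t)$, so the shifts by $\accum_0$ and $x_0$ in the type-(\ref{he:type3}) functions must be shown to exactly cancel the over-counting, neither more nor less. I expect the bulk of the work to be a careful case analysis, organised around the connectivity of the bag-index subtrees $\{t : v \in \beta(t)\}$ and $\{t : e \subseteq \beta(t)\}$, verifying that each vertex of $\gamma(t)$ and each edge of $G[\gamma(t)]$ is counted in precisely one of: the type-(\ref{he:type1})/(\ref{he:type2}) hyperedges of $H$, or the ``below $\sigma(t')$'' contribution of a unique child, with $\sigma(t')$ itself handled by the correction term — and dually that the induction hypothesis at the children supplies mutually consistent extensions. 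The unbreakability guards are routine to discharge once the counting identity is in place.
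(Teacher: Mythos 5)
Your two-direction plan matches the paper's proof of Claim~\ref{claim:bisection:helpful}: the forward direction converts an optimal colouring of $\gamma(t)$ into an HP witness by restricting to $\beta(t)$ and setting the $a_\hedge$ to record white-vertex counts strictly below each $\sigma(t')$, using that the sets $\gamma(t')\setminus\sigma(t')$ are disjoint so the shifts by $\accum_0$ and $x_0$ exactly cancel multiple counting; the backward direction glues witnesses for the $x_{t',\cdot,\cdot}$ values onto an HP solution, relying on $\gamma(t')\cap\beta(t)=\sigma(t')$ and $\gamma(t_1')\cap\gamma(t_2')\subseteq\beta(t)$ so the gluing is consistent. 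This is what the paper does.

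Two points deserve correction, though neither is a fatal gap. First, in the backward direction you appeal to the ``Moreover'' part of the claim applied inductively at a child $t'$ to extract a colouring of $\gamma(t')$ from a finite $x_{t',\col|_\hedge,\cdot}$ value. That is the wrong mechanism: the ``Moreover'' clause at $t'$ talks about extending a colouring of $\beta(t')$, but you only have a colouring of $\sigma(t')$. What you actually need is the \emph{definition} of the DP variable $x_{t',\cdot,\cdot}$, which is a minimum over colourings of $\gamma(t')$ extending the given colouring of $\sigma(t')$; finiteness immediately gives a witnessing colouring of all of $\gamma(t')$ with no recursion. Second, the discussion of the $\min\le 3k$ guards should not be deferred to a closing remark: in the forward direction, before you can even write $x_{t',\col'|_{\sigma(t')},\cdot}$, you must show that $\col'|_{\sigma(t')}$ satisfies~\eqref{eq:bisection:1}, i.e.\ is lopsided. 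The paper does this with a concrete argument: the pair $\bigl(N_{G[\gamma(t)]}[\col'^{-1}(\B)],\ \col'^{-1}(\W)\bigr)$ is a separation of $G[\gamma(t)]$ of order at most $|Z|\le k$, so $(2k,k)$-unbreakability of $\sigma(t')$ in $G[\gamma(t)]$ forces one side to contain at most $2k$ vertices of $\sigma(t')$, hence $\le 3k$ after absorbing the separator. Your formulation that the guards ``only prune colourings that would make $f_\hedge=\duzo$ anyway'' is circular as stated; it should be replaced by this separation argument, used to certify that the constructed HP witness has finite objective.
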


\begin{proof}
Fix an arbitrary $0 \le \accum \le n$.
First, we show that $x_{t,\col_0,\accum} \ge w_\accum$.
Note that the inequality trivially holds for $x_{t,\col_0,\accum} = \duzo$,
hence let us assume $x_{t,\col_0,\accum} \le k$
and let $\col : \gamma(t) \to \{\B,\W\}$ be
a colouring such that 
\begin{itemize}
  \item $\col|_{\sigma(t)} = \col_0$,
  \item $|Z| \le k$, where $Z=\{uv \in E(G[\gamma(t)]) : \col(u) \neq \col(v)\}$,
  \item $|\col^{-1}(\W)| = \accum$.
\end{itemize}
Recall that Theorem~\ref{thm:decomposition-main} ensures that 
for any child $t'$ of $t$ in the tree decomposition
the adhesion $\sigma(t')$ is $(2k,k)$-unbreakable
in $G[\gamma(t)]$.
Therefore, 
\begin{align*}
\min(|\col^{-1}(\B) \cap \sigma(t')|,|\col^{-1}(\W) \cap \sigma(t')|) \le 3k\,,
\end{align*}
as otherwise $(X=N_{G[\gamma(t)]}[\col^{-1}(\B)],Y=\col^{-1}(\W))$
would be a separation of $G[\gamma(t)]$ of order at most $k$ with $|(X\setminus Y)\cap \sigma(t')|,|(Y\setminus X)\cap \sigma(t')|>2k$,
contradicting the fact that $\sigma(t')$ is $(2k,k)$-unbreakable in $G[\gamma(t)]$.
Consequently, the values $x_{t',\col|_{\sigma(t')},\cdot}$ are well-defined,
i.e., $\col|_{\sigma(t')}$ satisfies (\ref{eq:bisection:1}).
Furthermore, observe that
\begin{align}
\label{eq:bisection:2}
x_{t',\col|_{\sigma(t')},|\col^{-1}(\W) \cap \gamma(t')|} \le |Z \cap E(G[\gamma(t')])|,
\end{align}
which is witnessed by the colouring $\col|_{\gamma(t')}$.
For $\hedge \in E(H)$ define $a_\hedge = |\col^{-1}(\W) \cap (\gamma(t') \setminus \sigma(t'))|$.
Next, we verify that $\col|_{\beta(t)}$ and $(a_\hedge)_{\hedge \in E(H)}$
certify that $x_{t,\col_0,\accum} \ge w_\accum$.
We split the contributions of edges of $E(H)$
to the sum $\sum_{\hedge \in E(H)} f_\hedge(\col|_F, a_F)$
into three summands, according to the types of edges of $E(H)$.
The edges of type (\ref{he:type1}) do not contribute to the 
sum at all.
The edges of type (\ref{he:type2}) contribute exactly
$|Z \cap E(G[\beta(t)])|$,
while the edges of type (\ref{he:type3}) contribute exactly 
$\sum_{t'} x_{t',\col|_{\sigma(t')},a_F + |\col^{-1}(\W) \cap \sigma(t')|} - |Z \cap E(G[\sigma(t')])| \le \sum_{t'} |Z \cap (E(G[\gamma(t')] \setminus E(G[\sigma(t')])))|$, where the sum is over all children $t'$ of $t$ and the inequality follows from (\ref{eq:bisection:2}).
This means that each edge of $Z$ is counted exactly once,
so the total contribution is at most $|Z| = x_{t,\col_0,\accum}$.

In the other direction, we want to show $x_{t,\col_0,\accum} \le w_\accum$.
As in the previous case, for $w_\accum = \infty$ the inequality trivially holds.
Hence, we assume $w_\accum \le k$.
Let $\col : \beta(t) \to \{\B,\W\}$ be a colouring and $\sum_{\hedge \in E(H)} a_\hedge$ be
a partition of $\accum$ witnessing the value of $w_\accum$, i.e., satisfying
\begin{itemize}
  \item $\col|_{\sigma(t)} = \col_0$,
  \item $\sum_{\hedge \in E(H)} a_\hedge = \accum$,
  \item $w_\accum = \sum_{\hedge \in E(H)} f_\hedge(\col|_\hedge,a_\hedge)$.
\end{itemize}
Our goal is to extend the colouring $\col$ on $\gamma(t) \setminus \beta(t)$, so that the total number of white vertices equals $\accum$
and the number of bichromatic edges equals $w_\accum$.
Initially set $\col' = \col$ and
consider children $t'$ of $t$ in the tree decomposition one by one.
Let $\hedge = \sigma(t') \in E(H)$
be the type~(\ref{he:type3}) edge of $H$.
Since $w_\accum \leq k$, we have $f_\hedge(\col|_\hedge, \accum_\hedge) \leq k$,
and by the definition of $f_F$ 
$$f_{\hedge}(\col|_{\hedge}, \accum_{\hedge}) = x_{t',\col|_{\hedge},\accum_{\hedge}-\accum_0}-x_0\,,$$
where $\accum_0 = |\col^{-1}(\W) \cap \sigma(t')|$,
and $x_0 = |\{uv \in E(G[\sigma(t')]) : \col(u) \neq \col(v)\}|$.
Let $\col_\hedge : \gamma(t') \to \{\B,\W\}$ 
be the colouring witnessing the value 
$x_{t',\col|_\hedge,\accum_\hedge-\accum_0}$.
Note that $\col_\hedge$ is consistent with $\col$ on $\hedge=\sigma(t')$,
so we can update $\col'$ by setting $\col'=\col' \cup \col_\hedge$.

Observe that the edges of $E(H)$ of type~(\ref{he:type1})
together with shifting by $\accum_0$ 
ensure that $\col'$ colours exactly $\accum$ vertices white.
Finally, the edges of $E(H)$ of type~(\ref{he:type2})
together with shifting by $x_0$ 
ensure that $\col'$ has exactly $w_\accum$ bichromatic 
edges, which shows $x_{t,\col_0,\accum} \le w_\accum$.
As $\col'|_{\beta(t)} = \col$ the last part of the claim follows as well.
\end{proof}

The previous claim shows that solving the \hpname instance $I$
is enough to find the values $x_{t,\col_0,\cdot}$,
however in the previous section we have only shown
how to solve proper instances of \hpname.
Therefore, we show that there is a small enough 
value of $q$, such that $I$ becomes a proper instance.

\begin{claim}
\label{claim:bisection:proper}
There is $q = 2^{O(k)}$
such that $I$ is a proper instance of the \hpname problem.
\end{claim}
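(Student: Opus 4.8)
The plan is to verify the three defining conditions of a \emph{proper} instance for $I=(k,n,\eta,q,H,\col_0,(f_\hedge)_{\hedge\in E(H)})$, fixing $q$ only when the last of them forces a choice. Recall that here $V(H)=\beta(t)$, $b=n$, $d=\eta$, and that $(T,\beta)$ is the decomposition of Theorem~\ref{thm:decomposition-main}; in particular $\beta(t)$ is $(2^{\Oh(k)},k)$-unbreakable in $G[\gamma(t)]$, every child adhesion has size at most $\eta$, and for a child $t'$ of $t$ the graph $G[\gamma(t')]\setminus\sigma(t')$ is connected with $N(\gamma(t')\setminus\sigma(t'))=\sigma(t')$.

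I would dispatch \emph{local unbreakability} and \emph{connectivity} first, as they are essentially built into the construction of $I$. Type-(\ref{he:type1}) and type-(\ref{he:type2}) hyperedges have size $1$ and $2$, so the local-unbreakability hypothesis (both colour classes exceeding $3k$) is vacuous for them; connectivity is vacuous for type-(\ref{he:type1}) hyperedges and holds for type-(\ref{he:type2}) since a bichromatic $\col_F$ gives $f_\hedge\in\{1,\duzo\}$. For a type-(\ref{he:type3}) hyperedge $\hedge=\sigma(t')$, local unbreakability is immediate because $f_\hedge$ was \emph{defined} to equal $\duzo$ once $\min(|\col_F^{-1}(\B)|,|\col_F^{-1}(\W)|)>3k$. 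For connectivity of such an $\hedge$ I would show that a bichromatic $\col_F$ forces $x_{t',\col_F,\accum+\accum_0}>x_0$, whence $f_\hedge\ge 1$ (the case $f_\hedge=\duzo$ being trivially nonzero): any edge set $Z$ realising a finite value of $x_{t',\col_F,\cdot}$ must contain every bichromatic edge of $G[\sigma(t')]$, so $|Z|\ge x_0$, and it cannot contain only those, for otherwise $Z$ is disjoint from the connected nonempty graph $G[\gamma(t')\setminus\sigma(t')]$, which would then be monochromatic in the witnessing colouring, while a vertex of $\sigma(t')$ of the opposite colour is adjacent to it — leaving a bichromatic edge outside $Z$. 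Property~(i) of Theorem~\ref{thm:decomposition-main} is exactly what supplies $\gamma(t')\setminus\sigma(t')\neq\emptyset$ and its full adjacency to $\sigma(t')$.

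The substantive condition is \emph{global unbreakability}, and this is the only place I would actually spend the unbreakability of the decomposition. Fix $\accum$ with $w_\accum<\duzo$, hence $w_\accum\le k$, and let $\col:\beta(t)\to\{\B,\W\}$ be any colouring attaining $w_\accum$. By the ``moreover'' part of Claim~\ref{claim:bisection:helpful}, $\col$ extends to $\col':\gamma(t)\to\{\B,\W\}$ whose set $Z$ of bichromatic edges of $G[\gamma(t)]$ satisfies $|Z|=w_\accum\le k$. Picking one endpoint of each edge of $Z$ yields $W$ with $|W|\le k$ such that $X:=(\col')^{-1}(\B)\cup W$ and $Y:=(\col')^{-1}(\W)\cup W$ form a separation of $G[\gamma(t)]$ of order $|W|\le k$ (an edge between $X\setminus Y$ and $Y\setminus X$ would be bichromatic, hence in $Z$, hence incident to $W$). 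Since $\beta(t)$ is $(2^{\Oh(k)},k)$-unbreakable in $G[\gamma(t)]$, one of the two sides meets $\beta(t)$ in at most $2^{\Oh(k)}$ vertices, so adding back $W$ gives $\min(|\col^{-1}(\B)|,|\col^{-1}(\W)|)\le 2^{\Oh(k)}$. Thus any $q=2^{\Oh(k)}$ dominating this bound — concretely $q=\tau'+k$ — makes $I$ a proper instance.

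I expect the only real obstacles to be (a) the connectivity check for type-(\ref{he:type3}) hyperedges, where one must be certain a genuinely bichromatic boundary colouring of a child adhesion is never ``free'', which relies precisely on the structural guarantees ``$G[\gamma(t')]\setminus\sigma(t')$ connected'' and ``$N(\gamma(t')\setminus\sigma(t'))=\sigma(t')$''; and (b) the global-unbreakability step, whose whole purpose is to translate the $(2^{\Oh(k)},k)$-unbreakability of the bag $\beta(t)$ into a $2^{\Oh(k)}$ bound on the smaller colour class of an optimal solution. Everything else amounts to bookkeeping against the definitions made when building $I$.
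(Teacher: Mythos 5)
Your proof is correct and follows essentially the same three-part structure as the paper's: local unbreakability is built into the definition of the type-(\ref{he:type3}) functions (and vacuous for smaller hyperedges), connectivity follows from property~(\ref{p:r:1}) of the decomposition guaranteeing that $G[\gamma(t')]$ with the $\sigma(t')$-internal edges deleted is connected, and global unbreakability comes from extending a witnessing bag-colouring via Claim~\ref{claim:bisection:helpful} and applying the $(\tau',k)$-unbreakability of $\beta(t)$ to the induced separation, giving $q=\tau'+k$. The only cosmetic difference is in how you realize the separation (collecting one endpoint per cut edge into $W$ rather than the paper's $X=N[\col'^{-1}(\B)]$, $Y=\col'^{-1}(\W)$), but this is the same idea.
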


\begin{proof}
For the hyperedges $\hedge \in E(H)$ of size at most
two the local unbreakability property is trivially satisfied,
while for all the other hyperedges $\hedge=\sigma(t')$
local unbreakability follows directly 
from the definition of $f_\hedge$.

By Theorem~\ref{thm:decomposition-main} 
each $G[\gamma(t')] \setminus \sigma(t')$ is connected and $N(\gamma(t')\setminus \sigma(t'))=\sigma(t')$,
which means that the graph $G[\gamma(t')] \setminus E(G[\sigma(t')])$ is connected, 
and consequently $x_{t',\col_F,\cdot} > 0$ 
for any colouring $\col_F$ which uses both colours
(as we need to remove at least one edge).
This proves the connectivity property.

Recall, that by Theorem~\ref{thm:decomposition-main}
the set $\beta(t)$ is $(\tau',k)$-unbreakable in $G[\gamma(t)]$
for some $\tau'=2^{O(k)}$.
Let $q = \tau' + k$
and let $(w_\accum)_{0 \le \accum \le n}$ be a solution
for the instance $I$ of the \hpname problem.
Consider an arbitrary $0 \le \accum \le n$ such that $w_\accum \le k$.
We want to show, that there exists a witnessing colouring $\col : \beta(t) \to \{\B,\W\}$ certifying the global unbreakability.
In fact we will show that any colouring $\col : \beta(t) \to \{\B,\W\}$
witnessing $w_\accum$ satisfies
$$\min(|\col^{-1}(\B)|,|\col^{-1}(\W)|) \le q = \tau'+k\,.$$
By Claim~\ref{claim:bisection:helpful}\footnote{Note that to use Claim~\ref{claim:bisection:helpful}
we do not require that $I$ is proper.}
there is an extension $\col'$ of $\col$, having $w_\accum \le k$
bichromatic edges of $G[\gamma(t)]$.
Note that $(X=N_{G[\gamma(t)]}[\col'^{-1}(\B)],Y=\col'^{-1}(\B))$ is a separation of $G[\gamma(t)]$ of order at most $k$,
hence by $(\tau',k)$-unbreakability of $\beta(t)$
we have 
$$\min(|(X\setminus Y)\cap \beta(t)|,|(Y\setminus X)\cap \beta(t)|) \le \tau'\,.$$
However $|\col^{-1}(\B)| \le |(X \setminus Y) \cap \beta(t)| + k$
and $|\col^{-1}(\W)| \le |(Y \setminus X) \cap \beta(t)| + k$,
which implies 
$$\min(|\col^{-1}(\B)|,|\col^{-1}(\W)|) \le k+\tau' = q\,,$$
proving the global unbreakability property.
\end{proof}

By Claim~\ref{claim:bisection:proper} we can 
use Lemma~\ref{lem:hp} and in $\Ohstar(q^{\Oh(k)} \cdot \eta^{\Oh(k^2)})=\Ohstar(2^{\Oh(k^3)})$ time 
compute the values $w_{\accum}$ for each $0 \le \accum \le n$.
At the same time Claim~\ref{claim:bisection:helpful} shows
that $x_{t,\col_0,\accum}=w_{\accum}$ for each $0 \le \accum \le n$.
Since the number of nodes of $V(T)$ is at most $|V(G)|$
and the number of colourings obeying ($\ref{eq:bisection:1})$
is $2^{\Oh(k^2)}$ the whole
dynamic programming routine takes $\Ohstar(2^{\Oh(k^3)})$ time.
Consequently Theorem~\ref{thm:bisectionFPT} follows.
\end{proof}

\subsection{Dependency on the size of $G$ in the running time}\label{ssec:bisection-ptime}

Here we argue about the factors polynomial in the size of $G$
in the running time of the algorithm.

We first note that we may assume that $m = |E(G)| = \Oh(kn)$, by applying the sparsification
technique of Nagamochi and Ibaraki~\cite{nagamochi-ibaraki}.

\begin{lemma}[\cite{nagamochi-ibaraki}]
\label{lem:japanese}
Given an undirected graph $G$ and an integer $k$, in $O(k(|V(G)|+|E(G)|))$ time we can obtain 
a set of edges $E_0 \subseteq E(G)$ of size at most $(k+1)(|V(G)|-1)$,
such that for any edge $uv \in E(G)\setminus E_0$ in the graph
$(V(G),E_0)$ there are at least $k+1$ edge-disjoint paths between $u$ and $v$.
\end{lemma}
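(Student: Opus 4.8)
The statement to prove is Lemma~\ref{lem:japanese}, the Nagamochi--Ibaraki sparsification result. Let me sketch a proof.

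=== PROOF PROPOSAL ===

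\textbf{Plan.} The result is classical, so I would present the standard \emph{scan-first search} / \emph{maximal spanning forest decomposition} argument of Nagamochi and Ibaraki. The idea is to decompose the edge set $E(G)$ into a sequence of forests $F_1, F_2, \ldots$ and then keep only the first $k+1$ of them, setting $E_0 = F_1 \cup \cdots \cup F_{k+1}$. Since each $F_i$ has at most $|V(G)|-1$ edges, we immediately get $|E_0| \le (k+1)(|V(G)|-1)$. The whole task then reduces to (a) defining the decomposition so that it can be computed in $\Oh(k(|V(G)|+|E(G)|))$ time, and (b) proving that if an edge $uv$ is \emph{not} in $E_0$ then $u$ and $v$ are joined by $k+1$ edge-disjoint paths inside $(V(G), E_0)$.

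\textbf{Key steps.} First I would describe the decomposition. Run a single graph search (BFS/DFS-like scan) that maintains, for each already-visited vertex $w$, a label $r(w)$ equal to the number of edges between $w$ and previously processed vertices that have already been assigned; concretely, when we scan an edge $uw$ with $w$ visited, we add it to forest $F_{r(w)+1}$ and increment $r(w)$. Using a priority-queue-free bucket structure (since labels only increase by one and we only ever need the maximum-label unscanned vertex up to additive $1$), this runs in linear time per ``layer'', and since we only need the first $k+1$ forests we can abort once all relevant labels exceed $k$, giving total time $\Oh(k(|V(G)|+|E(G)|))$. The crucial structural invariant, proved by induction on the order of the scan, is: for every pair of vertices $u, w$ and every $i$, the number of edge-disjoint $u$--$w$ paths in $F_1 \cup \cdots \cup F_i$ is at least $\min(i, \lambda_G(u,w))$, where $\lambda_G(u,w)$ is the edge-connectivity between $u$ and $w$ in $G$. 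This is the heart of the argument and I expect it to be the main obstacle to write cleanly: one shows that each $F_j$ is a \emph{maximal} forest in $G \setminus (F_1 \cup \cdots \cup F_{j-1})$ with respect to the scan, and then uses a Menger-type / uncrossing exchange argument on a hypothetical cut of size $< i$ separating $u$ from $w$ in $F_1 \cup \cdots \cup F_i$ to derive a contradiction.

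\textbf{Finishing.} Given the invariant, the lemma follows: if $uv \in E(G) \setminus E_0$, then in particular $uv$ is an edge of $G$, so $\lambda_G(u,v) \ge 1$; moreover the edge $uv$ was assigned to some $F_j$ with $j \ge k+2$, which forces $\lambda_{F_1 \cup \cdots \cup F_{k+1}}(u,v) \ge k+1$ — more carefully, when $uv$ was scanned it received label $\ge k+2$ at one endpoint, which means at least $k+1$ earlier edges at that endpoint were already distributed among $F_1, \ldots, F_{k+1}$, and combining this with the invariant applied to $i = k+1$ yields $k+1$ edge-disjoint $u$--$v$ paths in $(V(G), E_0)$. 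The size bound $|E_0| \le (k+1)(|V(G)|-1)$ is immediate since each forest is acyclic.

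Since a full rigorous proof of the exchange argument is somewhat lengthy and is not original to this paper, I would in the actual write-up simply cite~\cite{nagamochi-ibaraki} for the correctness and running time, and only include the short observations above if a self-contained argument is desired.
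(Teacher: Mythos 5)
Your proof takes a genuinely different --- and heavier --- route than the paper's. The paper simply iterates $k+1$ times: in round $i$ it computes a spanning forest $F_i$ of $G\setminus(F_1\cup\cdots\cup F_{i-1})$, and sets $E_0 = F_1\cup\cdots\cup F_{k+1}$. Correctness is then a two-line observation that needs no connectivity-preservation invariant at all: if an edge $uv$ survives all $k+1$ rounds, then for each $i\le k+1$ the edge $uv$ was present when $F_i$ was built, so $u$ and $v$ lay in the same connected component of $G\setminus(F_1\cup\cdots\cup F_{i-1})$, hence $F_i$ contains a $u$--$v$ path; the $k+1$ forests are edge-disjoint, giving $k+1$ edge-disjoint paths. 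Each round is one linear-time pass, matching the claimed $\Oh(k(n+m))$ bound.

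You instead reach for the scan-first-search / maximal-adjacency-ordering machinery, which is what Nagamochi and Ibaraki developed to produce the full forest decomposition in a single $\Oh(n+m)$ pass. That is strictly more than Lemma~\ref{lem:japanese} requires, and it forces you to rely on the nontrivial invariant $\lambda_{F_1\cup\cdots\cup F_i}(u,w)\ge\min(i,\lambda_G(u,w))$, which you explicitly leave unproven. There is also a small logical gap in your finishing step as written: that invariant only delivers $k+1$ paths once you know $\lambda_G(u,v)\ge k+1$, but from $uv\in E(G)\setminus E_0$ you derived only $\lambda_G(u,v)\ge 1$. Closing the gap requires arguing that $uv\in F_j$ with $j\ge k+2$ forces $\lambda_G(u,v)\ge j$ --- which, via ``each earlier forest joins $u$ to $v$, plus the edge $uv$ itself'', is exactly the elementary argument the paper uses directly. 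So while your outline would ultimately be correct if fully worked out, the paper's iterated-spanning-forest argument is self-contained and avoids the hard lemma entirely.
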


\begin{proof}
The algorithm performs exactly $k+1$ iterations.
In each iteration it finds a spanning forest $F$ of the graph $G$, adds
all the edges of $F$ to $E_0$ and removes all the edges of $F$ from the graph $G$.

Observe that for any edge $uv$ remaining in the graph $G$, the vertices $u$ and $v$
are in the same connected components in each of the forests found.
Hence in each of those forests we can find a path between $u$ and $v$;
thus, we obtain $k+1$ edge-disjoint paths between $u$ and $v$.
\end{proof}

The above lemma allows us to sparsify the graph, so that it contains $\Oh(kn)$ edges,
and any edge cut of size at most $k$ remains in the graph, while any edge cut
with at least $k+1$ edges after sparsification still has at least $k+1$ edges.
Therefore applying Lemma~\ref{lem:japanese} gives us an equivalent instance $(V(G), E_0)$
and consequently the construction of the decomposition takes $\Oh(2^{\Oh(k^2)} n^3)$ time.

There are at most $n$ bags of the decomposition, which adds a $\Oh(n)$ factor to the running
time. In each bag $t$, we consider $\Oh(\eta^{\Oh(k)})$ colourings of the adhesion $\sigma(t)$;
hence, there are $\Oh(\eta^{\Oh(k)} n)$ calls to the procedure
solving \hpname.

In each call, we have $V(H) = \beta(t)$ and $|E(H)| = \Oh(n+m) = \Oh(kn)$, as we have a hyperedge for
each vertex and edge of $\beta(t)$ as well as an edge for each child of $t$ in the decomposition.
As discussed in Section~\ref{ssec:bisection-problem}, each function $f$
can be represented by giving $\Oh(\eta^{\Oh(k)} n)$ values different than $\duzo$.

Note that we do not need to perform the entire algorithm for \hpname for each value of $\accum$
independently. Instead, we may perform it only once, and return $w_\accum$ to be the minimum
$t(\accum)$ among all branches of the algorithm.

By Lemma~\ref{lem:splitters} and the construction of perfect families of \cite{naor-schulman-srinivasan-derandom}, there are $\Oh(2^{\Oh(k^3)} \log^2 n)$ choices of the assignment $p$,
and they can be enumerated in $\Oh(2^{\Oh(k^3)} n \log^2n)$ time.

For each assignment $p$, we need to perform the two operations exhaustively.
To speed them up, instead of maintaining the entire graph $L_p$, we keep only its connected
components: each vertex of $H$ knows its connected component, and the connected component
knows its size and its vertices.
In this manner, by enumerating the smaller component, we can merge 
two connected components in amortized $\Oh(\log n)$ time,
as each vertex changes the connected components it belongs to $\Oh(\log n)$ times.
Consequently we may initiate the graph $L_p$ in $\Oh(\eta k n + n \log n)$ time,
as we have to iterate over $\Oh(kn)$ edges of size $\eta$ each,
and the total time needed to merge connected components is $\Oh(n \log n)$.

To apply the operations, we maintain the following auxiliary information.
Each hyperedge $\hedge$ stores a set of the connected components of $L_p$ it intersects
(note that this set is of size at most $2$).
Once this set changes its cardinality from $2$ to $1$,
the first operation starts to be applicable on $\hedge$.
As each vertex changes its connected component $\Oh(\log n)$ times, each list is updated
at most $\Oh(\log n)$ times,
which gives $\Oh(kn \log n)$ time in total.

For the second operation, we need to maintain, for each connected component $D$ of $L_p$,
a list $T(D,\B)$ of hyperedges $\hedge$ such that 
$\hedge \cap D \neq \emptyset$, $\hedge \setminus D \neq \emptyset$,
$p(\hedge) > 0$ and $\col_{\hedge,p(\hedge)}(v) = \B$ for some $v \in \hedge \cap D$;
analogously we define a list $T(D,\W)$.
Once both lists are non-empty, the second operation is applicable.
As each hyperedge is of size at most $\eta$, all lists can be recomputed
in $\Oh(\eta kn)$ time, whenever the set of the connected components of the graph $L_p$ changes:
each hyperedge $\hedge$ inserts itself into at most $\eta$ lists.

We infer that the operations can be exhaustively applied in $\Oh(2^{\Oh(k)} n^2)$ time
for a fixed assignment $p$.
Also, including the constraints imposed by the colouring $\col_0$,
i.e., obtaining the functions $\hat{f}_\hedge(\col,\accum)$,
can be done in $\Oh(2^{\Oh(k^2)} n^2)$ time.

We now move to the analysis of the final knapsack-type dynamic programming routine.
We first show that the $\oplus$ operation can be performed using $\Oh(k^2)$ applications of the Fast Fourier Transform, taking total time $\Oh(k^2 b \log b)$, instead of the naive $\Oh(b^2)$ time bound.
Consider two functions $t_1,t_2 \in \{0, \ldots, b\} \to \{0,\ldots,k,\infty\}$.
For $i \in \{1,2\}$ and $0 \le j \le k$ by $p_{i,j}$ we 
define the polynomial
$$p_{i,j}(x) = \sum_{0 \le \accum \le b} [t_i(\accum) = j]x^{\accum}\,.$$
Note that if $(t_1 \oplus t_2) (\accum) \neq \infty$, 
then $(t_1 \oplus t_2) (\accum)$
is equal to the smallest $j$, such that 
for some partition $j = j_1 + j_2$ 
the coefficient in front of the monomial $x^{\accum}$
in the polynomial $p_{1,j_1} \cdot p_{2,j_2}$ is non-zero.
Therefore we can compute $t_1 \oplus t_2$ in $\Oh(k^2 b \log b$) time.
There are $\Oh(1)$ such operations per each edge of $E(H)$. Consequently,
the final dynamic programming algorithm takes $\Oh(2^{O(k)} n^2\log n)$ time.

We conclude that the total running time is
$\Oh(2^{\Oh(k^3)} n^3 \log^3n)$, as promised in Theorem~\ref{thm:bisectionFPT}.

\section{Weighted variant}\label{sec:weights}
In this section we sketch how using our approach one can solve the following weighted variant of the \textsc{Minimum Bisection} problem:

\begin{theorem}\label{thm:bisectionFPT-weighted}
Given a graph $G$ with edge weights $w:E(G) \to \R$ and an integer $k$,
one can in $\Ohstar(2^{k^3})$ time find a partition of $V(G)$ into sets $A$ and $B$
minimizing $\sum_{e \in E(A,B)} w(e)$ subject to $||A|-|B|| \leq 1$ and $|E(A,B)| \leq k$,
or state that such a partition does not exist.
\end{theorem}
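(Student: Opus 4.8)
The plan is to re-run the entire argument behind Theorem~\ref{thm:bisectionFPT} essentially verbatim, enriching only the dynamic-programming bookkeeping to carry weights. The key observation is that every structural ingredient is weight-oblivious: we still seek a partition $(A,B)$ with $|E(A,B)| \le k$, so the cut edge set $Z$ has at most $k$ elements, and the decomposition of Theorem~\ref{thm:decomposition-main} --- together with all the unbreakability reasoning of Section~\ref{sec:bisection} (each adhesion $\sigma(t')$ is split so that its minority side inside $G[\gamma(t)]$ has at most $3k$ vertices; a witnessing colouring may be chosen to colour at most $q = \tau'+k$ vertices with one colour) --- uses only $|Z| \le k$ and nothing about $\sum_{e \in Z} w(e)$. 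Hence I would invoke Theorem~\ref{thm:decomposition-main} unchanged.

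First I would extend the dynamic-programming table by one integer coordinate: for $t \in V(T)$, a colouring $\col_0 : \sigma(t) \to \{\B,\W\}$ with $\min(|\col_0^{-1}(\B)|,|\col_0^{-1}(\W)|) \le 3k$, an integer $\accum \in \{0,\dots,n\}$, and an integer $j \in \{0,\dots,k\}$, define $x_{t,\col_0,\accum,j}$ as the minimum of $\sum_{e \in Z} w(e)$ over colourings $\col : \gamma(t) \to \{\B,\W\}$ extending $\col_0$ with exactly $\accum$ white vertices and whose set $Z$ of bichromatic edges of $G[\gamma(t)]$ has $|Z| = j$; the value is $+\infty$ if no such $\col$ exists. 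The answer is then $\min_{j \le k} x_{r,\emptyset,a,j}$ over the at most two admissible values of $a$, and the number of states grows only by a factor $k+1$.

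Next I would introduce a weighted variant of \hpname in which each $f_\hedge$ is function-valued, $f_\hedge : \{\B,\W\}^\hedge \times \{0,\dots,b\} \to (\{0,\dots,k\} \to \R\cup\{\duzo\})$, with $f_\hedge(\col_F,a_F)(j_\hedge)$ the least weight paid inside $\hedge$ using exactly $j_\hedge$ cut edges, and the objective, for each $\accum$ and $j$, becomes $w_{\accum,j} = \min \sum_{\hedge\in E(H)} f_\hedge(\col|_\hedge,a_\hedge)(j_\hedge)$ over colourings extending $\col_0$, partitions $\accum = \sum_\hedge a_\hedge$, and partitions $j = \sum_\hedge j_\hedge$. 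Correspondingly $\oplus$ becomes the two-dimensional min-plus convolution on pairs $(\accum,j)$; since $j$ ranges over only $k+1$ values one loops over the $\Oh(k^2)$ splits $j = j_1+j_2$ and convolves in the $\accum$-coordinate, all in $\poly(n)$ time, which suffices for the $\Ohstar$ bound. The properness conditions are restated in the obvious way: \emph{local unbreakability} unchanged; \emph{connectivity} becomes ``$f_\hedge(\col_F,a_F)(0) = \duzo$ whenever $\col_F$ is bichromatic''; \emph{global unbreakability} becomes ``for every $(\accum,j)$ with $w_{\accum,j}<\duzo$ some witnessing colouring marks at most $q$ vertices with one colour''. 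The reduction from a single bag to \hpname is exactly the one in the proof of Theorem~\ref{thm:bisectionFPT}, with the type-(\ref{he:type2}) and type-(\ref{he:type3}) hyperedges now recording weights, and with the shift $x_0$ replaced by the total weight of the bichromatic edges of $G[\sigma(t')]$ together with an additional shift in the cut-count coordinate; Claims~\ref{claim:bisection:helpful} and~\ref{claim:bisection:proper} go through with $q = 2^{\Oh(k)}$, since their proofs only use $|Z|\le k$ and the $(\tau',k)$-unbreakability of $\beta(t)$.

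Finally I would revisit Lemma~\ref{lem:hp}. Its combinatorial heart is untouched: the splitter-based guessing of a good assignment $p$, the auxiliary graph $L_p$, the exhaustive application of the two contraction operations, and Claims~\ref{claim:hp:1}--\ref{claim:hp:5} are all weight-free, and $|E_1|\le k$ still holds (a bichromatic hyperedge forces $j_\hedge\ge 1$ and $\sum_\hedge j_\hedge = j \le k$) as does $|E_0|\le q$ by global unbreakability. Only the final knapsack pass changes: the functions $t,t_1,t_2$ now have type $\{0,\dots,b\}\times\{0,\dots,k\}\to\R\cup\{\duzo\}$, combined via the two-dimensional $\oplus$, while the update $t := \min(t\oplus t_1, t\oplus t_2)$ is formally the same. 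With $q = 2^{\Oh(k)}$ and $d = \eta = 2^{\Oh(k)}$ the weighted \hpname routine still runs in $\Ohstar(q^{\Oh(k)}\cdot d^{\Oh(k^2)}) = \Ohstar(2^{\Oh(k^3)})$, and one call at each of the $\Oh(n)$ bags for each of the $2^{\Oh(k^2)}$ relevant adhesion colourings yields the theorem. The one point I would check most carefully is that allowing negative edge weights breaks no implicit monotonicity of the recursion; it does not, because the hard constraint $|E(A,B)| \le k$ is tracked explicitly by the new coordinate $j$, so for each fixed $(\accum,j)$ we optimise the weight over a genuinely finite family of bichromatic-edge sets, and the min-plus combination across children and across splitter branches is therefore legitimate.
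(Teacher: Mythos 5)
Your proposal is correct and follows essentially the same route as the paper's proof: Theorem~\ref{thm:decomposition-main} is invoked unchanged, the DP table gains one coordinate $\xi\in\{0,\dots,k\}$ tracking the cut cardinality (the paper's $x_{t,\col_0,\accum,\xi}$ is exactly your $x_{t,\col_0,\accum,j}$ with the weight stored as the table value), the \hpname interface is enlarged to $f_\hedge:\{\B,\W\}^\hedge\times\{0,\dots,b\}\times\{0,\dots,k\}\to\R\cup\{+\infty\}$, and the colour-coding/knapsack machinery of Lemma~\ref{lem:hp} is observed to be weight-oblivious. Your explicit remark about negative weights and the currying of $f_\hedge$ are the only (cosmetic) departures.
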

\begin{proof}
Essentially, we follow the same approach as in the previous section, except that in all dynamic programming tables
we need to add an additional dimension to control the size of the constructed cut $E(A,B)$, and store the weight
of the cut as the value of the entry in the DP table.

In some more details, for a fixed bag $t$, a colouring $\col_0:\sigma(t) \to \{\B,\W\}$
satisfying~\eqref{eq:bisection:1}, and integers $0 \leq \accum \leq n$ and $0 \leq \xi \leq k$
we consider a variable $x_{t,\col_0,\accum,\xi} \in \R \cup \{+\infty\}$
that equals the minimum possible value of $\sum_{e \in E(\col^{-1}(\B),\col^{-1}(\W))} w(e)$
among colourings $\col:\gamma(t) \to \{\B,\W\}$ satisfying:
\begin{itemize}
\item $\col|_{\sigma(t)} = \col_0$,
\item $|\col^{-1}(\W)| = \accum$, and
\item $|E(\col^{-1}(\B), \col^{-1}(\W))| = \xi$.
\end{itemize}
The value $+\infty$ is attained if no such colouring exists.

Analogously, we modify the \hpname problem to match the aforementioned definition
of the values $x_{t,\col_0,\accum,\xi}$. That is, it takes as an input functions
$f_\hedge: \{\B,\W\}^\hedge \times \{0,\ldots,b\} \times \{0,\ldots,k\} \to \R \cup \{+\infty\}$,
where we require value $+\infty$ for any colouring that violates the local unbreakability constraint.
For each $0 \leq \accum \leq b$ and $0 \leq \xi \leq k$
we seek for a value $w_{\accum,\xi} \in \R \cup \{+\infty\}$
defined as a minimum, among all colourings
$\col$ extending $\col_0$, and all possible sequences $(a_\hedge)_{\hedge \in E(H)}$
and $(b_\hedge)_{\hedge \in E(H)}$ such that $\sum_\hedge a_\hedge = \accum$
and $\sum_\hedge b_\hedge = \xi$, of
$$\sum_{\hedge \in E(H)} f_\hedge(\col|_\hedge, a_\hedge, b_\hedge).$$
The knapsack-type dynamic programming of Section~\ref{ssec:bisection-problem}
is adjusted in a natural way, and
the remaining reasoning of Section~\ref{ssec:bisection-problem}
remains unaffected by the weights.
Consequently, the adjusted \hpname problem can be solved in 
$\Ohstar(q^{\Oh(k)} \cdot d^{O(k^2)})$ time.

It is straightforward to check that the adjusted \hpname problem
corresponds again to the task of handling one bag in the tree decomposition of the input graph.
To finish the proof note that the value we are looking for equals
$\min_{0 \leq \xi \leq k} x_{r,\emptyset,\lfloor n/2 \rfloor, \xi}$,
where $r$ is the root of the tree decomposition.
\end{proof}

\section{$\alpha$-edge-separators}\label{sec:alpha}
In this section we argue that the algorithm of Section~\ref{sec:bisection}
can be extended to show the following:

\begin{theorem}\label{thm:alpha-sep}
Given an $n$-vertex graph $G$, a real $\alpha \in (0,1)$ and an integer $k$, one can
in $\Oh(2^{\Oh(k^3)} n^{\Oh(1/\alpha)})$ time decide if there exists
 a set $X$ of at most $k$ edges of $G$
such that each connected component of $G \setminus X$ has at most $\alpha n$ vertices.
\end{theorem}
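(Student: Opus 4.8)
The plan is to reduce the $\alpha$-edge-separator problem to (a minor variant of) the dynamic programming we already developed for \textsc{Minimum Bisection}, paying an extra $n^{\Oh(1/\alpha)}$ factor that comes from guessing how a bounded solution partitions the graph into large pieces. First I would observe that a set $X$ of at most $k$ edges whose removal leaves all components of size at most $\alpha n$ can split $G$ into at most $k+1$ connected components; more importantly, since $|X|\le k$, the pieces obtained from cutting $X$ can be grouped into at most $\lceil 1/\alpha \rceil$ ``super-parts,'' but what we actually need is only that the number of components of $G\setminus X$ that have more than, say, $\alpha n/2$ vertices is at most $2/\alpha$. So I would first enumerate, in $n^{\Oh(1/\alpha)}$ time, the (at most $2/\alpha$) ``large'' components together with their prescribed sizes, and for each guess run a dynamic programming that checks whether $G$ admits an edge cut of size $\le k$ realizing that partition into components of the guessed sizes with every remaining component small.

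Concretely, I would reuse Theorem~\ref{thm:decomposition-main} to obtain a tree decomposition $(T,\beta)$ with bags $(2^{\Oh(k)},k)$-unbreakable and adhesions of size $2^{\Oh(k)}$ that are $(2k,k)$-unbreakable, and then set up a bottom-up dynamic programming over $(T,\beta)$ analogous to the one in Section~\ref{ssec:bisection-dp}. Instead of a two-colouring $\{\B,\W\}$ tracking one bipartition, the states would assign to each vertex of an adhesion a label in $\{1,\dots,c\}$ where $c=\Oh(1/\alpha)$ is the number of ``slots'' (one for each guessed large component, plus one dump slot for vertices lying in small components, whose individual identity we do not track). As before, the key point is that for any valid solution $X$ with $|X|\le k$ and any adhesion $\sigma(t')$ that is $(2k,k)$-unbreakable in $G[\gamma(t)]$, almost all of $\sigma(t')$ — all but $\le 3k$ vertices — lies in a single component of $G[\gamma(t)]\setminus X$, hence carries a single label with at most $3k$ exceptions; this bounds the number of relevant labellings of an adhesion by $2^{\Oh(k)}\cdot c^{\Oh(k)} = 2^{\Oh(k\log(1/\alpha))}$ and keeps the DP table small. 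The accumulator dimension (which tracked $|\col^{-1}(\W)|$ for bisection) would now track, for each of the $c-1$ ``real'' slots, the number of vertices assigned to it, so that at the root we can check that each large component attains its guessed size and every unlabelled vertex indeed sits in a component of size $\le\alpha n$; this adds only a polynomial (in fact $n^{\Oh(1/\alpha)}$) blow-up to the table size, which is already dominated by the guessing step.

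For the single-bag computation I would again invoke the \hpname machinery of Lemma~\ref{lem:hp}: the hyperedges are the vertices, edges, and child adhesions of $\beta(t)$ exactly as in the proof of Theorem~\ref{thm:bisectionFPT}, the ``local unbreakability'' and ``global unbreakability'' properties of the instance follow verbatim from the $(2^{\Oh(k)},k)$-unbreakability of $\beta(t)$ and the $(2k,k)$-unbreakability of the adhesions (a bounded cut cannot spread any unbreakable set across more than one large label class, except for $\Oh(k)$ vertices), and ``connectivity'' follows from property~(i) of Theorem~\ref{thm:decomposition-main}. The only genuine change is that a colouring now uses up to $c$ colours rather than $2$; the colour-coding/splitter step must be adapted to highlight, for each pair of colour classes, the $\le k$ cut edges between them, which multiplies the number of branches by a factor $c^{\Oh(k)}=2^{\Oh(k\log(1/\alpha))}$, still absorbed in the $2^{\Oh(k^3)}$ term. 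Running the DP once for each of the $n^{\Oh(1/\alpha)}$ guesses, and returning ``yes'' iff some guess succeeds, gives total running time $\Oh(2^{\Oh(k^3)}n^{\Oh(1/\alpha)})$.

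The main obstacle, and the part I would be most careful about, is controlling the interaction between the unbreakability guarantees and the $c$-way labelling: with two colours, ``$(2k,k)$-unbreakable'' immediately means one side is tiny, but with $c$ colour classes we need the sharper statement that \emph{all but $\Oh(k)$} vertices of an unbreakable adhesion receive the \emph{same} label — which does hold, since the cut $X$ inducing the $c$-partition has size $\le k$, so contracting each colour class and taking the separation ``one large class vs.\ the rest'' stays of order $\le k$, forcing all but $\le 2k$ vertices of the adhesion into one class. Getting the bookkeeping of the accumulators right so that vertices in the ``dump'' slot are genuinely certified to lie in small components (rather than just unlabelled) is the other delicate point, but it is handled by summing the dump-slot sizes per connected component of $L_p$ inside the \hpname routine and rejecting any branch where a dump component exceeds $\alpha n$; since there are $\Oh(k)$ such components across a bag this is cheap. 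I expect no fundamentally new ideas are needed beyond those already deployed for Theorem~\ref{thm:bisectionFPT}; the contribution is the observation that the balance constraint can be replaced by a ``bounded number of large parts'' constraint at the cost of an $n^{\Oh(1/\alpha)}$ enumeration.
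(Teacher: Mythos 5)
You have the right high-level intuition, but there is a real gap in the route you chose. You actually mention the correct key idea in passing — that the pieces of $G\setminus X$ can be grouped into $\Oh(1/\alpha)$ ``super-parts'' each of total size at most $\alpha n$ — and then abandon it in favour of a ``large components $+$ one dump slot'' scheme. The paper turns exactly that grouping observation into Lemma~\ref{lem:grouping}/Corollary~\ref{cor:grouping}: the condition ``every connected component of $G\setminus X$ has size $\leq\alpha n$'' is \emph{equivalent} to the existence of a $\zeta$-colouring, $\zeta=2\lceil 1/\alpha\rceil-1$, of $V(G)$ in which every colour class has size $\leq\alpha n$ and no edge outside $X$ is bichromatic. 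This reduces $\alpha$-edge-separation to a $\zeta$-coloured analogue of bisection; the accumulator simply becomes a $\zeta$-tuple of sizes in $\{0,\ldots,n\}$, which is the whole source of the $n^{\Oh(1/\alpha)}$ factor, and no a priori guessing is needed (one reads off the minimum at the root over all $\accum$ with every coordinate $\leq\lfloor\alpha n\rfloor$). (Incidentally, $\lceil 1/\alpha\rceil$ groups is not always enough — e.g.\ three pieces of size $1/3$ with $\alpha=1/2$ need three groups — but $2\lceil 1/\alpha\rceil-1$ always suffices.)

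The dump-slot variant, as you describe it, does not go through. Your plan is to have $c-1$ tracked labels for the ``large'' components and one untracked label for ``small'' ones, and to certify smallness ``by summing the dump-slot sizes per connected component of $L_p$ inside the \hpname routine.'' But $L_p$ lives only on $V(H)=\beta(t)$; a dump component of $G[\gamma(t)]\setminus X$ typically spans many bags, and the \hpname routine sees only the child values $x_{t',\cdot,\cdot}$, which aggregate sizes per \emph{colour class}, not per component. So a bag-local sum over components of $L_p$ does not certify that any actual component of $G\setminus X$ is small. To verify per-component sizes you would have to carry, as extra DP state, the running size of every ``open'' dump component at $\sigma(t)$; even after using $(2k,k)$-unbreakability to argue that at most $3k$ such components meet $\sigma(t)$, this gives $n^{\Omega(k)}$ states rather than $n^{\Oh(1/\alpha)}$. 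The $\zeta$-colouring reformulation sidesteps this entirely because the DP already tracks exactly the right invariant (colour-class sizes), and the size constraint is checked once, at the root. One further subtlety you do not address, and which the paper has to handle once there are more than two colours, is \emph{entanglement}: two potentially-non-black components of $L_p$ joined by a hyperedge $\hedge$ with $p(\hedge)>0$ must make a joint black/non-black decision, so the final knapsack-type DP must process connected components of the entanglement relation jointly rather than component-by-component. Everything else you wrote — the $c$-colour generalisation of local/global unbreakability, the $c^{\Oh(k)}$ blow-up in the colour-coding, the use of property~(i) of Theorem~\ref{thm:decomposition-main} for the connectivity requirement — is consistent with the paper.
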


To prove Theorem~\ref{thm:alpha-sep}, we need
the following lemma that can be seen as a generalization of Lemma~7.3 of~\cite{abs-1304-6321}.

\begin{lemma}\label{lem:grouping}
Let $\alpha\in (0,1)$ be a real constant and let $a_1,a_2,\ldots,a_n\in [0,\alpha]$ be reals such that $\sum_{\ell=1}^n a_\ell=1$. Then one can partition numbers $a_1,a_2,\ldots,a_n$ into $2\left\lceil\frac{1}{\alpha}\right\rceil-1$ groups (possibly empty), such that the sum of numbers in each group is at most $\alpha$.
\end{lemma}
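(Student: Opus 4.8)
The plan is to use a greedy packing argument. First I would sort the numbers in non-increasing order, so assume $a_1 \ge a_2 \ge \dots \ge a_n$; reordering does not affect the statement. Set $g = 2\lceil 1/\alpha\rceil - 1$ and process the numbers one by one, always placing the current number into the group with the currently smallest sum (ties broken arbitrarily). I claim that no group ever exceeds $\alpha$. Suppose for contradiction that at some step we are about to place $a_\ell$ into a group $j$ whose current sum is $s_j$, and that $s_j + a_\ell > \alpha$. Since $j$ was the lightest group at that moment, every one of the $g$ groups has current sum at least $s_j > \alpha - a_\ell \ge \alpha - a_\ell$, hence the total mass already placed (which is at most $1$, indeed strictly less than $1$ since $a_\ell$ has not yet been placed) is at least $g \cdot s_j > g(\alpha - a_\ell)$. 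On the other hand, because we process in non-increasing order, $a_\ell \le a_i$ would be false for $i > \ell$...

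Actually, the cleaner route: at the moment $a_\ell$ is to be inserted, each of the $g$ groups has sum $\ge s_j$. The total mass of all numbers is $1$, and the mass not yet placed is at most $(n-\ell+1)\,a_\ell \le$ nothing useful directly, so instead bound below: the placed mass is $\ge g s_j$, and $a_\ell$ plus the placed mass is at most $1$, giving $g s_j + a_\ell \le 1$, i.e. $s_j \le (1 - a_\ell)/g$. Then the new sum of group $j$ is $s_j + a_\ell \le (1-a_\ell)/g + a_\ell = 1/g + a_\ell(1 - 1/g)$. Since $g \ge 1/\alpha$ we have $1/g \le \alpha$, and since $g\ge 1$, $1 - 1/g \in [0,1)$; this alone is not quite enough, which is where the value $g = 2\lceil 1/\alpha\rceil - 1$ (rather than just $\lceil 1/\alpha\rceil$) enters. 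Writing $c = \lceil 1/\alpha \rceil$, so $g = 2c-1$ and $1/\alpha \le c < 1/\alpha + 1$, one gets $1/g = 1/(2c-1) \le \alpha/(2 - \alpha)$ after using $c \ge 1/\alpha$; combined with $a_\ell \le \alpha$ this yields $s_j + a_\ell \le \frac{\alpha}{2-\alpha} + \alpha\cdot\frac{2c-2}{2c-1}$, and a short computation shows this is $\le \alpha$. So the contradiction is reached and the greedy packing never overflows any group.

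The main obstacle I anticipate is pinning down exactly why $2\lceil 1/\alpha\rceil - 1$ groups suffice whereas $\lceil 1/\alpha\rceil$ do not — the slack needs to absorb a "large" item of size up to $\alpha$ landing on top of a group that is already as full as the averaging bound permits. The key inequality to verify carefully is $\frac{1-a_\ell}{2\lceil 1/\alpha\rceil-1} + a_\ell \le \alpha$ for all $a_\ell \in [0,\alpha]$; since the left-hand side is affine in $a_\ell$ it suffices to check the two endpoints $a_\ell = 0$ and $a_\ell = \alpha$, which reduces to $\frac{1}{2\lceil 1/\alpha\rceil - 1} \le \alpha$ and $\frac{1-\alpha}{2\lceil 1/\alpha\rceil-1} \le \alpha(1 - (2\lceil 1/\alpha\rceil-2)/(2\lceil 1/\alpha \rceil-1))$, i.e. $\frac{1-\alpha}{2\lceil 1/\alpha\rceil - 1} \le \frac{\alpha}{2\lceil 1/\alpha\rceil -1}$, which holds iff $1 - \alpha \le \alpha$... that fails for small $\alpha$, so the affine-endpoint reduction must instead use that the placed mass bound is actually stronger when $a_\ell$ is large (fewer items remain). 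I would therefore instead argue directly: at insertion time the placed mass is $1 - (\text{unplaced mass}) \ge 1 - (n - \ell + 1)a_\ell$ is the wrong direction; rather, the placed mass is at most $1 - a_\ell$ trivially, and this is all that is needed together with the refined count $g \ge 1/\alpha$ — i.e. I would set up the averaging so that $s_j \le (1-a_\ell)/g$ and then simply check $g \ge \max\bigl(1/\alpha,\ (1-a_\ell)/(\alpha - a_\ell)\bigr)$ for $a_\ell \in [0,\alpha)$; the function $(1-a_\ell)/(\alpha-a_\ell)$ is increasing on $[0,\alpha)$ and tends to $+\infty$, which shows why a bound \emph{uniform} in $a_\ell$ is impossible and that one must treat items of size exactly $\alpha$ (or close to it) separately, placing each such item alone into its own group. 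That observation — handle items $> \alpha/2$ by giving each its own group (there are at most $2/\alpha$, hence at most $2\lceil 1/\alpha\rceil - 2$ of them, actually fewer), then greedily pack the rest into the remaining group(s) where every item is $\le \alpha/2$ so a half-full group plus one more item stays $\le \alpha$ — is the approach I would ultimately write up, as it makes the counting transparent and matches the bound $2\lceil 1/\alpha\rceil - 1$ exactly.

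Concretely: call $a_\ell$ \emph{big} if $a_\ell > \alpha/2$ and \emph{small} otherwise. The number of big items is at most $\lfloor 1/(\alpha/2)\rfloor = \lfloor 2/\alpha\rfloor \le 2\lceil 1/\alpha\rceil - 2$ (with room to spare when $1/\alpha$ is not an integer; the worst case $1/\alpha$ integer gives $2/\alpha = 2\lceil 1/\alpha\rceil$, but then the big items have total mass $> \alpha/2$ each summing to $\le 1$, so there are at most $\lceil 2/\alpha\rceil - 1 = 2/\alpha - 1$ strictly, i.e. $\le 2\lceil 1/\alpha\rceil - 3$ — in any case at most $2\lceil 1/\alpha\rceil - 2$). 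Give each big item its own group, using at most $2\lceil 1/\alpha\rceil - 2$ groups, leaving at least one group free. Place all small items into the remaining group(s) by the greedy "lightest group" rule: whenever a small item of size $\le \alpha/2$ is added to a group whose sum is $\le \alpha/2$, the result is $\le \alpha$; and if every remaining group has sum $> \alpha/2$ then the total placed small mass exceeds $\frac{1}{2}\alpha \cdot (\#\text{remaining groups})$, and since total mass is $\le 1$ this bounds the number of remaining groups — one checks this never forces an overflow as long as we started with at least $2\lceil 1/\alpha\rceil - 1 - (2\lceil 1/\alpha\rceil - 2) = 1$ free group, and in fact the small items alone fit into $\lceil 1/(\alpha/2)\rceil = \lceil 2/\alpha\rceil$-ish groups, comfortably within budget. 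Assembling these counts gives the claimed bound of $2\lceil 1/\alpha\rceil - 1$ groups.
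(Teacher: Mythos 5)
Your final write-up (the big/small split) contains a genuine counting error that breaks the argument, and it is in any case a quite different route from the paper's. The paper proves the lemma by a single sweep over the partial sums $b_\ell=\sum_{i\le\ell}a_i$: for $q=\lceil 1/\alpha\rceil$ it picks, for each $j=1,\dots,q-1$, the unique index $i_j$ where $b$ first exceeds $j\alpha$, and then forms $2q-1$ groups by alternating the ``runs'' $\{a_{i_j+1},\dots,a_{i_{j+1}-1}\}$ (whose sums lie within one window of width $\alpha$) with the singletons $\{a_{i_j}\}$ (each $\le\alpha$ by hypothesis). No greedy packing and no case split is needed.

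The gap in your argument is the claim that there are at most $2\lceil 1/\alpha\rceil-2$ big items (items $>\alpha/2$), leaving at least one free group for the small items. This is false: take $\alpha=1/2$ and the instance $0.3,\,0.3,\,0.3,\,0.1$. All three $0.3$'s are big, yet $2\lceil 1/\alpha\rceil-2=2$. The arithmetic slip is in the parenthetical: when $1/\alpha$ is an integer one indeed gets at most $\lceil 2/\alpha\rceil-1=2/\alpha-1$ big items, but $2/\alpha-1=2\lceil 1/\alpha\rceil-1$, \emph{not} $2\lceil 1/\alpha\rceil-3$. Thus the big items alone can occupy every one of the $2\lceil 1/\alpha\rceil-1$ groups, leaving none free, and then your invariant (``add a small item to a group whose sum is $\le\alpha/2$'') has nothing to stand on, since every group already holds a big item and so has sum $>\alpha/2$. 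The small items do still fit in this regime --- because the small mass is then less than $\alpha/2$ while the average big group has mass well below $\alpha$ --- but you have not argued this, and the proof as written does not go through.

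A remark on your first, abandoned attempt: the sorted greedy (lightest-group rule, items in non-increasing order) actually can be completed, and you were close. When an overflow $s_j+a_\ell>\alpha$ would occur we must have $s_j>0$, so all $g=2\lceil 1/\alpha\rceil-1$ groups are nonempty; hence at least $g$ items precede $a_\ell$, each of size $\ge a_\ell$, so the placed mass is $\ge g a_\ell$ and also $\le 1-a_\ell$, giving $a_\ell\le 1/(g+1)$. Combined with $s_j\le (1-a_\ell)/g$ and the fact that $(1-a_\ell)/g+a_\ell$ is increasing in $a_\ell$, one gets
\begin{equation*}
s_j+a_\ell\;\le\;\frac{1}{g}+\frac{1}{g+1}\Bigl(1-\frac{1}{g}\Bigr)\;=\;\frac{2}{g+1}\;=\;\frac{1}{\lceil 1/\alpha\rceil}\;\le\;\alpha,
\end{equation*}
a contradiction. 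The crucial step you were missing is exactly the use of the sorted order to bound $a_\ell\le 1/(g+1)\le\alpha/2$; without it the affine bound $\frac{1-a_\ell}{g}+a_\ell\le\alpha$ fails at $a_\ell=\alpha$, as you yourself noticed.
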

\begin{proof}
Let $q=\lceil\frac{1}{\alpha}\rceil$. For $\ell=0,1,\ldots,n$, let $b_\ell=\sum_{i=1}^\ell a_i$. For $j=1,2,\ldots,q-1$, let $i_j$ be the unique index such that $b_{i_j-1}\leq j\cdot \alpha$ and $b_{i_j}>j\cdot \alpha$. Let us denote also $i_0=0$ and $i_q=n+1$; then also $b_{i_0}\geq 0\cdot \alpha$ and $b_{i_q-1}\leq q\cdot \alpha$. Define the following groups:
\begin{eqnarray*}
\{ & \{a_1,a_2,\ldots,a_{i_1-1}\}, & \\
& \{a_{i_1}\}, & \\
& \{a_{i_1+1},a_{i_1+2},\ldots,a_{i_2-1}\}, & \\
& \{a_{i_2}\}, & \\
& \ldots & \\
& \{a_{i_{q-2}+1},a_{i_{q-2}+2},\ldots,a_{i_{q-1}-1}\}, & \\
& \{a_{i_{q-1}}\}, & \\
& \{a_{i_{q-1}+1},a_{i_{q-1}+2},\ldots,a_n\} & \}.
\end{eqnarray*}
For every group of form $\{a_{i_j+1},a_{i_j+2},\ldots,a_{i_{j+1}-1}\}$ we have that
$$\sum_{\ell=i_j+1}^{i_{j+1}-1} a_\ell=b_{i_{j+1}-1}-b_{i_j}\leq (j+1)\alpha-j\alpha=\alpha.$$
On the other hand, for every group of form $\{a_{i_j}\}$ we have that $a_{i_j}\leq \alpha$ by the assumption that $a_{i_j}\in [0,\alpha]$. Hence, the formed groups satisfy the required properties.
\end{proof}

The following corollary is immediately implied by Lemma~\ref{lem:grouping}.

\begin{corollary}\label{cor:grouping}
Let $\alpha\in (0,1)$ be a real constant and let $H$ be a graph on $n$ vertices. Then the following conditions are equivalent:
\begin{enumerate}[(a)]
\item Each connected component of $H$ has at most $\alpha n$ vertices.
\item There exists a partition of $V(H)$ into $\zeta$ possibly empty sets $A_1,A_2,\ldots,A_\zeta$, where $\zeta=2\left\lceil\frac{1}{\alpha}\right\rceil-1$, such that $|A_i|\leq \alpha n$ for each $i=1,2,\ldots,\zeta$ and no edge of $H$ connects two vertices from different parts.
\end{enumerate}
\end{corollary}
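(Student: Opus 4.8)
The plan is to prove the two implications separately; the substance lies entirely in applying Lemma~\ref{lem:grouping} to the vector of component sizes of $H$, and (a)~$\Rightarrow$~(b) is the only direction that genuinely uses it.

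First I would handle (b)~$\Rightarrow$~(a). If $A_1,\ldots,A_\zeta$ is a partition of $V(H)$ as in (b), then since $H$ has no edge between vertices of two distinct parts, each connected component of $H$ is contained in a single part $A_i$, and therefore has at most $|A_i|\leq \alpha n$ vertices. This is exactly condition (a).

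For (a)~$\Rightarrow$~(b), let $C_1,\ldots,C_r$ enumerate the connected components of $H$ and put $n_j=|C_j|$. The case $n=0$ is trivial (take all parts empty), so assume $n\geq 1$; then $r\geq 1$, $\sum_{j=1}^r n_j=n$, and by (a) each $n_j\leq \alpha n$. Setting $a_j=n_j/n$, we have $a_j\in[0,\alpha]$ and $\sum_{j=1}^r a_j=1$, so Lemma~\ref{lem:grouping} partitions $a_1,\ldots,a_r$ into $\zeta=2\lceil 1/\alpha\rceil-1$ groups, each of total value at most $\alpha$. I would then let $A_i$ be the union of those components $C_j$ for which $a_j$ lands in the $i$-th group. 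Each $A_i$ is a union of whole components of $H$, so no edge of $H$ connects vertices of two different parts; the sets $A_i$ clearly form a partition of $V(H)$; and $|A_i|$ equals $n$ times the total value of the $i$-th group, hence is at most $\alpha n$. This yields condition (b).

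I do not expect any real obstacle here: the entire combinatorial content is already packaged in Lemma~\ref{lem:grouping}, and the only thing to watch is the degenerate bookkeeping (the empty graph, and groups returned empty by Lemma~\ref{lem:grouping}), both of which are harmless because empty parts are explicitly permitted in the statement of (b).
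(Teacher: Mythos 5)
Your proof is correct and is exactly the argument the paper intends: the paper simply states the corollary is ``immediately implied by Lemma~\ref{lem:grouping},'' and you have filled in the obvious details (normalize component sizes, apply the lemma, take unions of components within each group; the reverse direction follows from connectivity). No discrepancies.
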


Equipped with Corollary~\ref{cor:grouping}, we may now describe how to modify the algorithm
of Section~\ref{sec:bisection} to prove Theorem~\ref{thm:alpha-sep}. Most of the modifications are straightforward, hence we just sketch the consecutive steps.

\begin{proof}[Proof of Theorem~\ref{thm:alpha-sep}]
By Corollary~\ref{cor:grouping}, we may equivalently seek for a colouring of
$V(G)$ into $\zeta = 2\left\lceil \frac{1}{\alpha}\right\rceil-1$ colours, such that at most
$k$ edges connect vertices of different colours.
Essentially, we now proceed as in Section~\ref{sec:bisection}, but, instead of colouring vertices
into black and white, we use $\zeta$ colours, and we keep track of the number of vertices
coloured in each colour.

In some more details, for a fixed bag $t$, a colouring $\col_0: \sigma(t) \to \{1,\ldots,q\}$
satisfying
\begin{equation}
\label{eq:col0:alpha}
\exists_{1 \leq c \leq \zeta}\ |\col_0^{-1}(c)| \geq |\sigma(t)|-3k
\end{equation}
and a function $\accum:\{1,\ldots,\zeta\} \to \{0,\ldots,n\}$
we consider a variable $x_{t,\col_0,\accum} \in \{0,1,\ldots,k,\duzo\}$
that equals the minimum possible number of edges with endpoints coloured by different colours
by a colouring $\col$, among all colourings $\col:\gamma(t) \to \{1,2,\ldots,\zeta\}$ satisfying:
\begin{itemize}
\item $\col|_{\sigma(t)} = \col_0$, and
\item for each $1 \leq c \leq \zeta$, $|\col^{-1}(c)| = \accum(c)$.
\end{itemize}
The value $\duzo$ is attained if any such colouring yields more than $k$ edges with endpoints
of different colours.

Recall that, for any bag $t$, the adhesion $\sigma(t)$ is $(2k,k)$-unbreakable.
Similarly as in Claim~\ref{claim:bisection:helpful}, we infer that
if in a colouring $\col: \gamma(t) \to \{1,\ldots,\zeta\}$ at most $k$ edges have endpoints
painted in different colours, it needs to colour all but at most $3k$
vertices of $\sigma(t)$ with a single colour.
This motivates condition \eqref{eq:col0:alpha}.
Note that this requirement is only needed to obtain $2^{\mathrm{poly}(k)}$ dependency on $k$,
and, if it is omitted, the dependency will become doubly-exponential.

We now modify the \hpname problem to match the aforementioned definition of the 
values $x_{t,\col_0,\accum}$. That is, the problem takes as an input functions
$f_\hedge: \{1,\ldots,\zeta\}^\hedge \times \{0,\ldots,b\}^\zeta \to \{0,1,\ldots,k,\duzo\}$.
For each $\accum: \{1,\ldots,\zeta\} \to \{0,\ldots,b\}$ we seek for a value
$w_\accum \in \{0,1,\ldots,k,\duzo\}$ defined as a minimum, among all colourings
$\col:V(H) \to \{1,\ldots,\zeta\}$ extending $\col_0$, and all possible sequences $(a_\hedge^c)_{\hedge \in E(H), 1 \leq c \leq \zeta}$ such that $\sum_\hedge a_\hedge^c = \accum(c)$ for each $1 \leq c \leq \zeta$, of
$$\sum_{\hedge \in E(H)} f_\hedge(\col|_\hedge,(a_\hedge^c)_{1 \leq c \leq \zeta}).$$
The value of $\duzo$ is attained whenever the sum exceeds $k$.

In the local unbreakability constraint
we require that a value different than $\duzo$ can be attained only if
all but at most $3k$ elements of $\hedge$ are coloured in a single colour.
This corresponds to the previously discussed condition \eqref{eq:col0:alpha}
on valid colourings $\col_0$ of an adhesion $\sigma(t)$.
The connectivity requirement states that $f_\hedge(\col,\alpha)$ is non-zero
whenever $\col$ uses at least two colours: the corresponding colouring of the subgraph $\gamma(t)$
(as in the proof of Claim~\ref{claim:bisection:helpful})
needs to colour the endpoints of at least one edge with different colours, as $\gamma(t)$
is connected.
The global unbreakability constraint requires that whenever $w_\accum < \duzo$,
there is a witnessing colouring $\col$ that colours all but at most $\tau'+k$
vertices with a single colour.
This follows from the fact that, in our decomposition,
$\beta(t)$ is $(\tau',k)$-unbreakable, so any colouring of $\gamma(t)$ that colours
endpoints of at most $k$ edges with different colours needs to paint all but at most
$\tau'+k$ vertices of $\beta(t)$ with the same colour.

The core spirit of the reasoning of Section~\ref{ssec:bisection-problem}
remains in fact unaffected by this change.
However, for sake of clarity, we now describe the changes in more details.
We apply colour-coding to paint
the hyperedges with assignment $p: E(H) \to \{0,\ldots,\ell\}$, where $p(\hedge) = 0$
means ``definitely monochromatic'' and $p(\hedge) = i > 0$ means ``monochromatic or coloured
according to the colouring $\col_{\hedge,i}:\hedge \to \{1,2,\ldots,\zeta\}$''. The colourings
$\col_{\hedge,i}$ are required to comply with the (new) unbreakability constraint, thus there
are $\eta^{\Oh(k)} \zeta^{\Oh(k)}$ such colourings.
We guess a colour --- call it \emph{black} --- that will be the dominant colour in $\beta(t)$.
We require that for all hyperedges that are not monochromatic in the solution $\col_{opt}$
we have $p(\hedge) = i > 0$ and $\col_{\hedge,i} = \col_{opt}|_\hedge$
(i.e., we have guessed the correct
colouring of $\hedge$), and the ``definitely monochromatic'' hyperedges span a hyperforest of the graph
$(V(H), E_{not\ black})$, where $E_{not\ black}$ consists of all not-black monochromatic
hyperedges in the colouring $\col_{opt}$.

For a hyperedge $\hedge$, we insert to $L_p$ all edges of $\hedge \times \hedge$ if $p(\hedge) = 0$
and all edges between the vertices of the same colour of $\col_{\hedge,i}$ if $p(\hedge) = i > 0$.
It is straightforward to verify that, if $p$ is guessed correctly, then
the following holds:
\begin{enumerate}
\item all connected components of $L_p$ are painted monochromatically in $\col_{opt}$ (cf. Claim~\ref{claim:hp:1});
\item if a connected component $D$ of $L_p$ is not painted black in $\col_{opt}$, then all hyperedges
$\hedge$ such that $\hedge \cap D \neq \emptyset$ and $\hedge \setminus D \neq \emptyset$
are not coloured monochromatically by $\col_{opt}$ and, consequently, their colourings
$\col_{\hedge,p(\hedge)}$ conforms with $\col_{opt}$ (cf. Claim~\ref{claim:hp:2}).
\end{enumerate}
We may now classify any connected component $D$ of $L_p$ as ``definitely black'' (there exists
a hyperedge $\hedge$ with $p(\hedge) > 0$ such that $\col_{\hedge,p(\hedge)}$ colours at least
one vertex of $D$ black) and ``potentially not black'' (otherwise). By the aforementioned discussion
and a reasoning analogous to Claim~\ref{claim:hp:3}, a definitely black component is painted
black by $\col_{opt}$.

The clean-up operations on $p$ are defined as follows:
\begin{enumerate}
\item for any hyperedge $\hedge$ completely contained in some component $D$ of $L_p$, $\hedge$ is monochromatic in $\col_{opt}$, so set $p(\hedge) = 0$;
\item for any hyperedge $\hedge$ with $p(\hedge) > 0$, if $\col_{\hedge,p(\hedge)}$ colours
some vertex of a definitely black connected component of $L_p$ with a colour different than black, set
$p(\hedge) = 0$ as the guess on $\col_{\hedge,p(\hedge)}$ is clearly incorrect.
\end{enumerate}
After the operations are performed exhaustively, it is straightforward to verify that
an analogue of Claim~\ref{claim:hp:4} holds, stating that for any potentially not black component
$D$, either $\col_{opt}$ colours $D$ with not-black colour, being consistent with all hyperedges
$\hedge$ with $p(\hedge) > 0$ that intersect $D$, or $\col_{opt}$ colours $D$ and all intersecting
hyperedges black.

However, it is no longer true that the components of $D$ may be considered independently 
in the final knapsack-type dynamic programming.
Indeed, if there exists a hyperedge $\hedge$ that intersects two potentially not black components
$D_1$ and $D_2$ (and, hence, $p(\hedge) > 0$), then $\col_{opt}$ paints $D_1$ black if and only
if it paints $D_2$ black as well.
Consequently, we need to adjust the knapsack-type DP in the following way.
A black component is painted black, so we can proceed with them as previously.
Two potentially not black components $D_1$ and $D_2$ are \emph{entangled} if there exists
a hyperedge $\hedge$ intersecting both of them.
Now, observe that all components of
each connected component of the entanglement relation make a joint decision
on whether they are painted black or not, and these decisions are independent
between each other: the decision in one connected
component of the entanglement relation does not influence the decision
in another one.
This allows us to adjust he knapsack-type dynamic programming 
of Section~\ref{ssec:bisection-problem}, 
considering in a single step
all hyperedges intersecting all connected components of $L_p$ contained
in a single connected component of the entanglement relation.

It is straightforward to check that the adjusted \hpname problem
corresponds again to the task of handling one bag in the tree decomposition of the input graph.
To finish the proof note that the minimum size of the cut we are looking for equals
$$\min_{\accum: \{1,\ldots,q\} \to \{0,\ldots,\lfloor \alpha n \rfloor\}} x_{r,\emptyset,\accum},$$
where $r$ is the root of the tree decomposition, as long as the cut has size at most $k$.
\end{proof}

\section{Conclusions}\label{sec:conc}
In this paper we have settled the parameterized complexity of {\sc Minimum Bisection}. Our algorithm also works in the more general setting when the edges are weighted, when the vertex set is to be partitioned into a constant number of parts rather than only two, and when the cardinality of each of the parts is given as input. 

The core component of our algorithm is a new decomposition theorem for general graphs. Intuitively, we show that it is possible to partition any graph in a tree-like manner using small separators so that each of the resulting pieces cannot be broken any further. This uncovered structure is very natural in the context of cut-problems, and we strongly believe that our decomposition theorem will find many further algorithmic applications.

Having settled the parameterized complexity of  {\sc Minimum Bisection} it is natural to ask whether the problem also admits a {\em polynomial kernel}, i.e. a polynomial-time preprocessing algorithm that would reduce the size of the input graph to some polynomial of the budget $k$. This question, however, has been already resolved by van Bevern et al.~\cite{vanBevern13}, who showed that {\sc Minimum Bisection} does not admit a polynomial kernel unless $\text{coNP} \subseteq \text{NP}/\text{poly}$. We conclude with a few intriguing open questions.


\begin{enumerate}[(a)]
\item Can the running time of our algorithm be improved? In particular, does there exist an algorithm for {\sc Minimum Bisection} with running time $2^{\Oh(k)} n^{\Oh(1)}$, that is with linear dependence on the parameter in the exponent? 
\item The running time dependence of our algorithm on the input size is roughly cubic. Is it possible to obtain a fixed-parameter tractable algorithm with quadratic, or even nearly-linear running time dependence on input size? Note that the best known algorithm for graphs of bounded treewidth has quadratic dependence on the input size~\cite{JansenKLS05}.
\item Are the parameters in the decomposition theorem tight? For example, is it possible to lower the adhesion size from $2^{\Oh(k)}$ to polynomial in $k$? Similarly, can one make the bags $(k^{\Oh(1)}, k)$-unbreakable rather than $(2^{\Oh(k)},k)$-unbreakable? Is it possible to achieve both simultaneously? We remark that if the latter question has a positive answer, this would improve the parameter dependence in the running time of our algorithm for {\sc Minimum Bisection} to $k^{\Oh(k)}$.
\item Is it possible to compute our decomposition faster, say in $2^{\Oh(k \log k)}n^{\Oh(1)}$ or even in $2^{\Oh(k)} n^{\Oh(1)}$ time? Currently the main bottleneck is the very simple Lemma~\ref{lem:breaking}, which we are unable to speed up.
\end{enumerate}

\section*{Acknowledgements}

We would like to thank Rajesh Chitnis, Fedor Fomin, MohammadTaghi Hajiaghayi and M. S. Ramanujan
for earlier discussions on this subject.

We also acknowledge the very inspiring atmosphere of the Dagstuhl seminar 13121, where the authors discussed the core ideas leading to this work.

\bibliographystyle{abbrv}
\bibliography{bisection}

\end{document}